\newcommand{\A}{{\scalebox{1}{\val{sep}}}}	
\newcommand{\B}{{\scalebox{1}{\val{autumn}}}}	
\newcommand{\auno}{{\scalebox{1}{\val{sep10}}}}
\newcommand{\bdue}{{\scalebox{1}{\val{oct10}}}}	
\newcommand{\abdue}{{\scalebox{1}{\val{sep30}}}}
\newcommand{\summer}{{\scalebox{1}{\val{summer}}}}
\newcommand{\jultwentyone}{{\scalebox{1}{\val{jul21}}}}
\newcommand{\juntwentyfour}{{\scalebox{1}{\val{jun24}}}}
\newcommand{\apr}{{\scalebox{1}{\val{apr}}}}
\newcommand{\aprten}{{\scalebox{1}{\val{apr10}}}}
\newcommand{\aprfifteen}{{\scalebox{1}{\val{apr15}}}}
\newcommand{\julten}{{\scalebox{1}{\val{jul10}}}}
\newcommand{\spring}{{\scalebox{1}{\val{spring}}}}
\newcommand{\jun}{{\scalebox{1}{\val{jun}}}}
\newcommand{\may}{{\scalebox{1}{\val{may}}}}
\newcommand{\mayseven}{{\scalebox{1}{\val{may7}}}}
\newcommand{\jul}{{\scalebox{1}{\val{jul}}}}
\newtheorem{ex}{Example}
\newtheorem{definition}{Definition}
\newtheorem{lemma}{Lemma}
\newlength{\qedlengte}
\newcommand{\qedbox}{\rule{\qedlengte}{\qedlengte}}
\newenvironment{example}{\begin{ex}\rm}{\hfill\qedbox\end{ex}}
\newcommand{\seqSet}{{\Sigma}}                      
\newcommand{\po}{\leq}                      
\newcommand{\pot}{\po_t}
\newcommand{\pom}{\po_V}
\newcommand{\pov}{{\po_V}}                    
\newcommand{\povi}{{\po_{V_i}}}
\newcommand{\domain}{\mathcal{T}}
\newcommand{\Top}{\textsf{\upshape{T}}}
\newcommand{\Sop}{\textsf{\upshape{S}}}
\newcommand{\Xop}{\textsf{\upshape{X}}}
\newcommand{\Yop}{\textsf{\upshape{Y}}}
\newcommand{\Emptyseq}{\varepsilon}
\newcommand{\pref}{\succ}                       	
\newcommand{\wpref}{\succeq}                       	
\newcommand{\equp}{\approx}
\newcommand{\wpreft}{\wpref_{\Top}}               
\newcommand{\wprefs}{\wpref_{\Sop}}               
\newcommand{\wprefO}{\wpref_{\Emptyseq}}          
\newcommand{\wprefx}{\wpref_{\Xop}}
\newcommand{\wprefxt}{\wpref_{\Xop\Top}}
\newcommand{\prefxt}{\pref_{\Xop\Top}}
\newcommand{\impl}{\rightarrow}
\newcommand{\Best}{\beta}
\def\codeif{\mbox{\upshape\textbf{if}}}
\def\codethen{\mbox{\upshape\textbf{then}}}
\def\codeforeach{\mbox{\upshape\textbf{for each}}}
\def\codetrue{\mbox{\upshape\textbf{true}}}
\def\codefalse{\mbox{\upshape\textbf{false}}}
\def\codereturn{\mbox{\upshape\textbf{return}}}
\def\coderepeat{\mbox{\upshape\textbf{repeat}}}
\def\codeuntil{\mbox{\upshape\textbf{until}}}
\newcommand{\attr}[1]{\textsf{\small #1}}
\newcommand{\val}[1]{\textsf{\small #1}}
\newcommand{\lev}[1]{\texttt{\small #1}}
\definecolor{davidecolor}{rgb}{0.13, 0.55, 0.13}
\newcommand{\comment}[1]{}
\newcommand{\DiffBest}{\textsc{DiffBest}}
\newcommand{\Oh}{\mathcal{O}}
\begin{document}

\title{Preference Queries over Taxonomic Domains}

\author{Paolo Ciaccia$^{1}$ \and Davide Martinenghi$^2$ \and Riccardo
  Torlone$^3$}

\affil{$^1$University of Bologna, Italy,\quad \texttt{paolo.ciaccia{@}unibo.it}\\
$^2$Politecnico di Milano, Italy,\quad \texttt{davide.martinenghi{@}polimi.it}\\
$^3$Universit\`a Roma Tre, Italy,\quad \texttt{torlone{@}dia.uniroma3.it}\\
}
 
\date{}

\maketitle
\pagenumbering{arabic}
\pagestyle{plain}
\setcounter{page}{1}

\begin{abstract}
When composing multiple preferences characterizing the most suitable results for a user, several issues may arise.
Indeed, preferences can be partially contradictory, suffer from a mismatch with the level of detail of the actual data, and even lack natural properties such as transitivity.
In this paper we formally investigate the problem of retrieving the best results complying with multiple preferences expressed in a logic-based language.
Data are stored in relational tables with taxonomic domains, which allow the specification of preferences also over values that are more generic than those in the database.
In this framework, we introduce two operators that rewrite preferences for enforcing the important properties of transitivity, which guarantees soundness of the result, and specificity, which solves all conflicts among preferences.
Although, as we show, these two properties cannot be fully achieved together, we use our operators to identify the only two alternatives that ensure transitivity and minimize the residual conflicts. Building on this finding, we devise a technique, based on an original heuristics, for selecting the best results according to the two possible alternatives. We finally show, with a number of experiments over both synthetic and real-world datasets, the effectiveness and practical feasibility of the overall approach.
\end{abstract}

\section{Introduction}\label{sec:introduction}

Preferences strongly influence decision making and, for this reason, their collection and exploitation are considered building blocks of 
content-based filtering techniques \cite{Ch03,DBLP:journals/tods/StefanidisKP11,RicciRS11}.
A key issue in this context is the mismatch that usually lies between preferences and data, which often makes it hard to recommend items to customers~\cite{LiBe01}.
Indeed, whether they are collected by tracing the actions of the users or directly elicited from them, preferences are typically expressed in generic terms (e.g., I prefer pasta to beef), whereas available data is more specific (the menu might contain lasagne and hamburger).
The problem of automatically suggesting the best solutions becomes even more involved when several preferences at different levels of granularity and possibly conflicting with each other are specified, as shown in the following example that will be used throughout the rest of the paper.  

\begin{example}\label{ex:concerts}
We would like to select some bottles of wine from the list in Figure~\ref{fig:concerts} available in an e-commerce store.
We prefer white wines to red ones, yet we prefer Amarone (a famous red wine) to white wine. For the producer, we prefer Tuscan wineries located in the province of Siena to those in the Piedmont province of Asti. Moreover, if the winery lies in the Langhe wine region (which spans different provinces, partially including, among others, Asti and Cuneo) we prefer an aged wine (i.e., produced before 2017) to a more recent one. Finally, we would like to have suggestions only for the ``best'' possible alternatives. 
\end{example}

\begin{figure}[h]%
\begin{center}\rm%
\begin{tabular}{|ccc|c}
\multicolumn{3}{c}{\sf Wines} \\
\cline{1-3}
\textsf{Wine} &\textsf{Winery} &\textsf{Year}\\
\cline{1-3}
 Arneis & Correggia & 2019 &\ $a$\\
 Amarone & Masi & 2014 &\ $b$\\
 Amarone & Bertani & 2013 &\ $c$\\
 Canaiolo & Montenidoli & 2015 &\ $d$\\
 Barolo & Laficaia &  2014 &\ $e$\\
 Arneis & Ceretto & 2019 &\ $f$\\
\cline{1-3}
\end{tabular}
\end{center}
\caption{A list of wines}\label{fig:concerts}
\end{figure}
We first observe that further information is needed in this example to identify the solutions that better fit all the mentioned preferences. For instance, we need to know the province and the wine region in which all the wineries are located. In addition, the example shows that there are two important issues that need to be addressed in such scenarios. First, conflicts can occur when preferences are defined at different levels of detail. Indeed, the preference for Amarone, which is a red wine, is in contrast with the more generic preference for white wines. Second, further preferences can be naturally derived from those that are stated explicitly. For instance, from the preference for wines from Siena to those from Asti and the preference for aged wines when they are from the Langhe region, we can also derive, by transitivity, a preference for wines from Siena to young wines from Langhe. 

In this paper we address the problem of finding the best data stored in a repository in a very general scenario in which, as in the above example: (i) preferences may not match the level of detail of the available data, (ii) there may be conflicts between different preferences, and (iii) known preferences can imply others. Specifically, unlike previous approaches that have only tackled the problem of mapping preferences to data (see, e.g., \cite{DBLP:conf/ijcai/LukasiewiczMS13}), we formally investigate the two main principles that need to be taken into account in this context: \emph{specificity} and \emph{transitivity}. Specificity is a fundamental tool for resolving conflicts between preferences by giving precedence to the most specific ones, as it is natural in practical applications. For instance, in our example, the specific preference for Amarone over white wines counts more than the generic preference for white wines over red ones.
The specificity principle is indeed a pillar of non-monotonic reasoning, where a conclusion derived from a more specific antecedent overrides a conflicting inference based on a less specific antecedent~\cite{Horty94}. 
On the other hand, transitivity, besides being a natural property, is important also from a practical point of view, since non-transitive preferences might induce cycles, a fact that could make it impossible to identify the best solutions~\cite{Ch03}. 

To tackle the problem of dealing with non-monotonic preferences, we rely on a natural extension of the relational model in which we just assume that  \emph{taxonomies}, represented by partial orders on values, are defined on some attribute domains~\cite{MT:VLDBJ2014}. Thus, for instance, in a geographical domain we can establish that the value \texttt{Italy} is more generic than the value \texttt{Rome}, since the former precedes the latter in the partial order. We then call \emph{t-relations} (i.e., relations over taxonomies) standard relations involving attributes over these taxonomic domains. 

We express preferences in this model in a declarative way, by means of first-order \emph{preference formulas} specifying the conditions under which, in a t-relation, a tuple $t_1$ is preferable to a tuple $t_2$. By taking advantage of the taxonomies defined over the domains, in a preference formula we can refer to values that are more generic than those occurring explicitly in a t-relation (e.g., the fact that we prefer \texttt{white} to \texttt{red} wines, as in Example~\ref{ex:concerts}). 
When evaluated over a t-relation $r$, a preference formula returns a \emph{preference relation} that includes all the pairs of tuples $(t_1,t_2)$ in $r$ such that $t_1$ is preferable to $t_2$. 
Since the input preference formula may not induce a preference relation enjoying both transitivity and specificity, such a formula then needs to be suitably rewritten.
Eventually, the rewritten formula is used to select the best tuples in $r$ by means of the \emph{Best} operator, which filters out all the tuples that are strictly worse than some other tuple~\cite{DBLP:journals/jacm/CiacciaMT20}.
How this rewriting has to be performed is thus the main focus of this paper.

\noindent
\textbf{Problem.}
\emph{To study, from both a theoretical and a practical point of view, to which extent the properties of transitivity and specificity can be obtained by suitable rewritings of the initial preference formula.}

We tackle the problem by introducing and formally investigating
two operators that rewrite a preference formula: $\Top$ to enforce transitivity and $\Sop$ to remove all conflicts between more generic and more specific preferences, thus attaining specificity.
In order to try to guarantee both properties, one thus needs to use both operators.
The first natural question that arises is whether the order in which they are applied is immaterial.
Unfortunately, it turns out that these two operators do not commute.
More so, even their repeated application can produce different results, inducing incomparable preference relations.
This motivates us to explore
the (infinite) space of possible \emph{sequences} of such operators.
Based on this analysis, we
 prove that it is indeed \emph{impossible} to always guarantee at the same time transitivity of the obtained preference relation and a complete absence of conflicts therein,
no matter the order in which $\Top$ and $\Sop$ are considered and how many times they are applied.
Intuitively, the removal of conflicts may compromise transitivity, whereas enforcement of transitivity may (re-)introduce conflicts.
We also show that this impossibility result would persist even if one considered a more fine-grained $\Sop$ operator that removes conflicts one by one (instead of all at a time).
In spite of this intrinsic limitation, we formally show that: (i)~the set of all possible sequences of operators can be reduced to a finite (and small) set, and
(ii)~there are only two sequences, which we call \emph{minimal-transitive}, that guarantee transitivity and, at the same time, \emph{minimize} residual conflicts between preferences.
We also show that the application of the Best operator using the rewritten formulas obtained through the two minimal-transitive sequences can lead to very different results. 
However, in common practical cases, experimental evidence shows that one of the two sequences typically resolves more conflicts, thus returning a more refined set of best tuples.

In order to observe and assess the actual behavior of sequences of operators, we developed an engine for implementing our approach, which rewrites an input preference formula and evaluates it over t-relations.
We conducted a number of experiments over both synthetic and real-world data and taxonomies in scenarios of different complexities, showing that:
(i) the overhead incurred by the rewriting process is low for the considered sequences; (ii) the computation of the best results largely benefits from the minimization of conflicts between preferences, both in terms of execution time and cardinality of results; (iii) the adoption of an original heuristic sorting criterion based on taxonomic knowledge greatly reduces execution times.

In sum, the contributions of this paper are the following:
\begin{itemize}
\item a general framework that is able to express, in a logic-based language, preferences over relations with taxonomic domains, as illustrated in Section~\ref{sec:preliminaries};
\item two operators, presented in Section~\ref{sec:propagation}, that rewrite, within this framework, the input preferences so as to enforce the important properties of transitivity, which is required for the correctness of the result, and specificity, which solves possible conflicts among preferences;
\item the formal investigation, illustrated in Section~\ref{sec:semantics}, of the combined and repeated application of these operators to an initial set of preferences;
\item a technique based on an original heuristics, presented in Section~\ref{sec:algorithm}, for selecting the best results associated with given sequences of operators, and the characterization of their differences;
\item the experimentation of the overall approach over both synthetic and real-world data, showing its effectiveness and practical feasibility, as illustrated in Section~\ref{sec:experiments}.
\end{itemize}
Related works are reported in Section~\ref{sec:related} whereas some conclusions are sketched in  Section~\ref{sec:conclusions}.

This paper is an extended version of~\cite{DBLP:journals/pvldb/CiacciaMT21}, with formal proofs available in the appendix.

\section{Preliminaries}\label{sec:preliminaries}

In this section, we introduce our data model, originating from~\cite{MT:VLDBJ2014}, and a logic-based preference model, inspired by~\cite{Ch03}.

We remind that a \emph{partial order} $\po$ on a domain $V$ is a subset of $V\times V$, whose elements are denoted by $v_1\po v_2$, that is: 1) reflexive ($v\po v$ for all $v\in V$), 2) antisymmetric (if $v_1\po v_2$ and $v_2\po v_1$ then
$v_1=v_2$), and 3) transitive (if $v_1\po v_2$ and $v_2\po v_3$ then $v_1\po v_3$). A set with a partial order is called a \emph{poset}.

\subsection{Data Model}\label{sec:datamodel}

We consider a simple extension of the relational model in which the values of an attribute can be arranged in a hierarchical \emph{taxonomy}.

\begin{definition}[Taxonomy]\label{def:hdomain}
A \emph{taxonomy} is a poset $T=(V,\pov)$, where $V$ is a set of \emph{values} and $\pov$ is a partial order on $V$.
\end{definition}

\begin{example} \label{ex:levelmapping}
A taxonomy relevant to our working example represents production sites at different levels of granularity. Considering Example~\ref{ex:concerts}, this taxonomy, $T_p$, shown in Figure~\ref{fig:geo-tax}, includes values representing wineries (as minimal elements of the poset) as well as values representing provinces, wine regions, and regions of Italy. For instance, we can have values like $\val{Laficaia}$ (a winery), $\val{Cuneo}$ (a province), $\val{Langhe}$ (a wine region) and $\val{Piedmont}$ (a region of Italy), with $\val{Laficaia}\pom \val{Cuneo}$, $\val{Laficaia}\pom\val{Langhe}$, $\val{Laficaia}\pom\val{Piedmont}$, $\val{Cuneo}\pom\val{Piedmont}$, and $\val{Langhe}\pom\val{Piedmont}$.
Additionally, Figure~\ref{fig:vini-colori} shows a simple taxonomy $T_w$ for wines, which associates each wine with a corresponding color.
Finally, we assume a taxonomy $T_y$ mapping production years before 2017 to $\val{aged}$ and the other years to $\val{young}$.
\end{example}

\begin{figure}%
\centering
\subfloat[][{A taxonomy $T_p$ for production sites.}]
{\includegraphics[width=0.9\columnwidth]{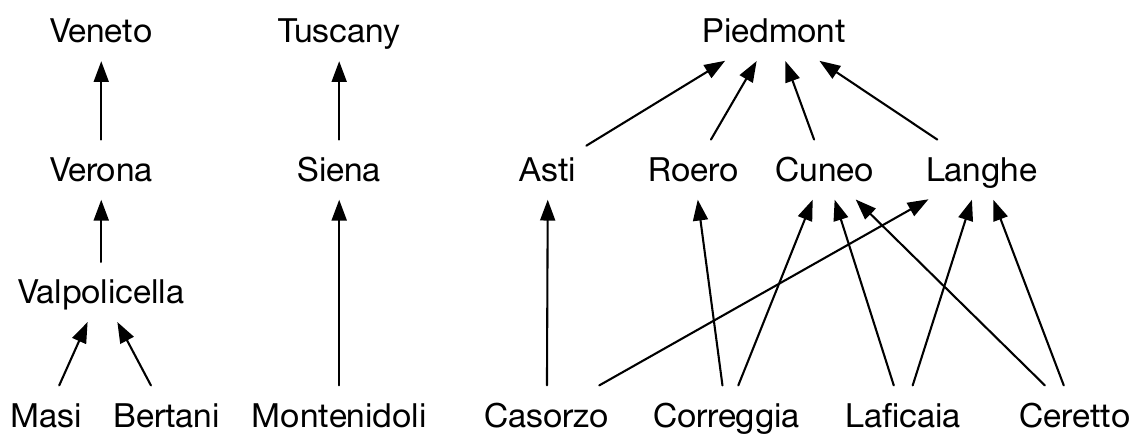}\label{fig:geo-tax}}%
\quad
\subfloat[][{A taxonomy $T_w$ for wines.}]
{\includegraphics[width=0.57\columnwidth]{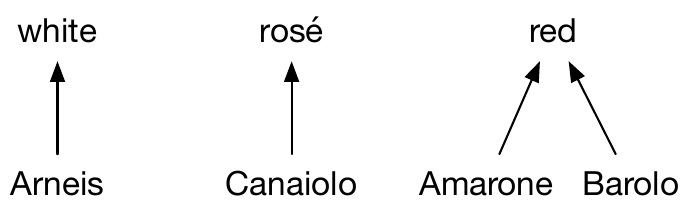}\label{fig:vini-colori}}%
\\[-2ex]
\caption{Taxonomies for the running example.}\label{fig:taxonomies-running}%
\end{figure}

A \emph{t-relation} is a standard relation of the relational model defined over a collection of taxonomies.

\begin{definition}[t-relation, t-schema, t-tuple]\label{def:trelation}
A \emph{t-schema} is a set 
$S=\{A_1:T_1,\ldots,A_d:T_d\}$, where each $A_i$ is a distinct \emph{attribute name} and each
 $T_i=(V_i,\povi)$ is a taxonomy.
A \emph{t-relation} over $S$
is
 a set of tuples over $S$ (\emph{t-tuples}) mapping each $A_i$ to a value in $V_i$. We denote by $t[A_i]$ the restriction of a t-tuple $t$ to the attribute $A_i$.
\end{definition}
For the sake of simplicity, in the following we will not make any distinction between the name of an attribute of a t-relation and that of the corresponding taxonomy, when no ambiguities can arise.
We observe that our model also accommodates ``standard'' attributes, in which the domain $V$ is a set of flat values (i.e., $\pov$ is empty).

\begin{example} \label{ex:second}
A catalog of Italian wines can be represented by the t-schema
$S=\{\lev{Wine}:T_w,\lev{Winery}:T_p, \lev{Year}:T_y\}$.
A possible t-relation over $S$ is shown in Figure~\ref{fig:concerts}.
Then we have ${b}[\lev{Year}]=\val{2014}$ and ${e}[\lev{Wine}]=\val{Barolo}$.
\end{example}

\comment{
The partial order defined on taxonomies naturally induces a partial order $\pot$ on t-tuples as follows.
\begin{definition}[Partial order on t-tuples]\label{def:pot}
Given two t-tuples $t_1$ and $t_2$ over $S_1$ and $S_2$, respectively,
with $S_2\subseteq S_1$,
we have that
$t_1\pot t_2$ if
$t_1[A_i]\povi t_2[A_i]$
for each $A_i:T_i \in S_2$.
\end{definition}
Note that, when $S_1=S_2$, the above definition corresponds to the well-known notion of product order.

\begin{example} \label{ex:posets}
\sloppy
Consider the t-schemas
$S_1=\{\lev{Artist},\lev{Time}, \lev{Venue}\}$ and
$S_2=\{\lev{Artist},$ $\lev{Time}\}$, and the t-tuples
$t_1 = \langle \val{Eddie Vedder},\val{16/04/2019},\val{Verona Arena}\rangle$ 
and $t_2 = \langle \val{rock},\val{spring}\rangle$ over $S_1$ and $S_2$, respectively.
Then, $t_1 \pot t_2$. 
\hfill$\Box$\end{example} 

}

\subsection{Preference Model }\label{sec:preferencemodel}

Given a set of attribute-taxonomy pairs $A_1:T_1,\ldots,A_d:T_d$, in which $A_1,\ldots,A_d$ are all distinct, let $\domain$ denote the set of all possible t-tuples over any t-schema that can be defined using such pairs.

\begin{definition}[Preference relation]\label{def:preference-relation}
A \emph{preference relation} over the t-tuples in $\domain$ is a
relation $\wpref$ on $\domain\times\domain$.
Given two t-tuples $t_1$ and $t_2$ in $\domain$, if $t_1\wpref t_2$ then $t_1$ is \emph{(weakly) preferable} to $t_2$, also written as $(t_1,t_2)\in\  \wpref$. If  $t_1\wpref t_2$ but $t_2\not\wpref t_1$, then $t_1$ is \emph{strictly preferable} to $t_2$, denoted by $t_1\pref t_2$.
\end{definition}

\begin{definition}[Incomparability and Indifference]\label{def:incomparability-indifference}
Given a preference relation on $\domain$ and a pair of t-tuples $t_1$ and $t_2$ in $\domain$, if neither $t_1\wpref t_2$ nor $t_2\wpref t_1$, 
then $t_1$ and $t_2$ are \emph{incomparable}.
When both $t_1\wpref t_2$ and $t_2\wpref t_1$ hold, $t_1$ and $t_2$ are \emph{indifferent}, denoted by $t_1 \equp t_2$.
\end{definition}

Notice that if $\wpref$ is transitive, then $\equp$ is an equivalence relation (up to reflexivity) and $\pref$ is a strict partial order (i.e., transitive and irreflexive). These properties do not hold, in the general case, when $\wpref$ is not transitive.

The transitivity of $\wpref$ implies that all the t-tuples involved in a cycle are indifferent to each other, thus the cycle vanishes when strict preferences are considered.
\begin{example}\label{ex:cycle}
Let us consider the t-relation in Figure \ref{fig:concerts} and assume that we have the cycle of preferences: $a \wpref b, b \wpref c$, and $c \wpref a$. If $\wpref$ is transitive then we also have  $a \wpref c$ (from $a \wpref b$ and $b \wpref c$), $b \wpref a$ (from $b \wpref c$ and $c \wpref a$) and $c \wpref b$ (from $c \wpref a$ and $a \wpref b$). Then, since $a \equp b$, $b \equp c$, and $a \equp c$, no cycle is present in $\pref$.
\end{example}

Given a set of t-tuples $r \subseteq \domain$, the ``best'' t-tuples in $r$ according to the preference relation $\wpref$ can be selected by means of the \emph{Best} operator $\Best$~\cite{DBLP:journals/jacm/CiacciaMT20}, which returns the t-tuples $t_1$ of $r$ such that there is no other t-tuple $t_2$ in $r$ that is \emph{strictly} preferable to $t_1$.

\begin{definition}[Best operator]\label{def:best}
Given a t-relation $r$ and a preference relation $\wpref$ on the t-tuples in $r$, the best operator $\Best$ is defined as follows:
$
\Best_\pref(r) 
= \{ t_1\in r \mid \nexists t_2\in r, t_2 \pref t_1\}.
$
\end{definition}
When $\pref$ is a strict partial order, $\Best_\pref(r)$ is not empty for any non-empty t-relation $r$. 
We remind that, if $\pref_1$ and $\pref_2$ are such that $\pref_1\subseteq \pref_2$ then $\Best_{\pref_2}(r)\subseteq\Best_{\pref_1}(r)$ holds for all $r$~\cite{Ch03}.

\begin{example}\label{ex:best}
Let us consider the t-relation in Figure \ref{fig:concerts} and assume that: $b \wpref a, a \wpref f, b \wpref f, b \wpref d, c \wpref e, e \wpref c$. It follows that: $b \pref a, a \pref f, b \pref f, b \pref d$ (since the opposite does not hold for those four preferences), 
but $c\equp e$ (since both $c \wpref e$ and $e \wpref c$). Then, we have $\Best_\pref(r)=\{b,c,e\}$.
\end{example} 

For expressing preferences we consider a logic-based language, in which $t_1 \wpref t_2$ iff they satisfy the first-order \emph{preference formula} $F(t_1,t_2)$:
$
t_1 \wpref t_2 \Leftrightarrow F(t_1,t_2).
$
Thus, when considering strict preferences we have:
\begin{equation}\label{eq:strict-pref}
t_1 \pref t_2 \Leftrightarrow F(t_1,t_2) \wedge \neg F(t_2,t_1).
\end{equation}
As in~\cite{Ch03}, we only consider \emph{intrinsic preference formulas} (ipf's), i.e., first-order formulas in which only built-in predicates are present and quantifiers are omitted, as in Datalog. 
Predicates have either the form
$(x[A_i]\; \povi\; v)$ or $(x[A_i]\; \not\leq_{V_i}\;  v)$, where
$A_i$ is an attribute defined over taxonomy $T_i=(V_i,\povi)$,
$x$ is a t-tuple variable over t-schemas including $A_i$,
and $v$ is a value in $V_i$. The predicate $(x[A_i]\; \povi\; v)$ (resp.\ $(x[A_i]\; \not\leq_{V_i}\;  v)$) holds for a t-tuple $t$ if $(t[A_i]\; \povi\; v)$ (resp.\ $(t[A_i]\; \not\leq_{V_i}\;  v)$) holds.
For convenience, we get rid of $\lnot$ as needed by transforming $\povi$ into $\not\leq_{V_i}$ and vice versa.

For the sake of generality, we consider that formula $F$ consists of a set of \emph{preference statements}, where each statement $P_i$ is in Disjunctive Normal Form (DNF), each disjunct of $P_i$ being termed a \emph{preference clause}, $C_{i,j}$:
\[
P_i(x,y) = \bigvee_{j=1}^{m_i} C_{i,j}(x,y)
\]
and where each clause $C_{i,j}$ is a conjunction of predicates.
We assume that each clause $C_{i,j}$ is non-contradictory, i.e., $\exists t_1,t_2\in\domain$ such that $C_{i,j}(t_1,t_2)$ is true. When a statement consists of a single clause we use the two terms ``clause'' and ``statement'' interchangeably.

A formula $F$ is a disjunction of $n \geq 1$ preference statements:
\[
F(x,y) = \bigvee_{i=1}^n P_i(x,y).
\]

\begin{example}\label{ex:pref-concerts}
The preferences informally stated in Example \ref{ex:concerts} can be expressed by the
 formula
\[F(x,y) = P_1(x,y) \vee P_2(x,y) \vee P_3(x,y) \vee P_4(x,y)\] 
where the 4 preference statements, in which we use $\leq$ in place of $\povi$ to improve readability, are: 
\[
\begin{array}{rl}
P_1(x,y) =& (x[\attr{Wine}] \leq\val{white}) \wedge 
				(y[\attr{Wine}]\leq\val{red}) \\[1mm]
P_2(x,y) =& (x[\attr{Wine}]\leq\val{Amarone}) \wedge 
				(y[\attr{Wine}]\leq\val{white}) \\[1mm]
P_3(x,y) =& (x[\attr{Winery}]\leq\val{Siena}) \wedge 
				(y[\attr{Winery}]\leq\val{Asti}) \\[1mm]
P_4(x,y) =& (x[\attr{Winery}]\leq\val{Langhe}) \wedge 
				(x[\attr{Year}]\leq\val{aged})\wedge \\
			  &	(y[\attr{Winery}]\leq\val{Langhe}) \wedge 
			    (y[\attr{Year}]\leq\val{young})
\end{array}
\]
The above statements, when evaluated over the t-tuples in Figure~\ref{fig:concerts}, yield the following preferences, written as pairs of t-tuples in $\wpref$ (for the sake of clarity, for each preference we also show the statement used to derive it): 
\[
\begin{array}{rl}
P_1:& \, (a,b), (a,c), (a,e), (f,b), (f,c), (f,e) \\
P_2:& \, (b,a), (b,f), (c,a), (c,f)  \\
P_4:& \, (e,f) 
\end{array}
\]
Notice that $P_3$ yields no preference, since there is no wine from Asti's province in the t-relation in Figure~\ref{fig:concerts}.
\end{example}

In the rest of the paper, with the aim to simplify the notation, preference statements in the examples will be written with a compact syntax, by omitting variables and attributes' names, and separating with $\wpref$ the ``better'' part from the ``worse'' part. For instance, the above statement $P_4$ will be written as:
 
\[
\begin{array}{rl}
P_4 =  & \val{Langhe} \wedge \val{aged} \wpref \val{Langhe} \wedge \val{young}.
\end{array}
\]

\section{Operations on Preferences}\label{sec:propagation}

In this section we introduce two operators that can be applied to a preference relation, postponing to the next section the detailed analysis of the possible ways in which they can be combined. The two operators are: Transitive closure (\Top) and Specificity-based refinement (\Sop). 
Let $\wpref$ denote the initial preference relation; the resulting relation is indicated $\wpreft$ for \Top\ and $\wprefs$ for \Sop.
Multiple application of operators, e.g., first \Top\ and then \Sop, leads to the relation $(\wpreft)_\Sop$, which we compactly denote as $\wpref_{\Top\Sop}$. In general, for any \emph{sequence} $X\in\{\Top,\Sop\}^*$, $\wprefx$ is the preference relation obtained from the initial preference relation $\wpref$ by applying the operators in the order in which they appear in $\Xop$. Notice that $\wprefO\; =\; \wpref$, where $\Emptyseq$ denotes the empty sequence.

We describe the behavior of the two operators by means of suitable rewritings of a preference formula. Given a sequence $\Xop$ of operators, and an initial (input) formula $F(x,y)$ inducing the preference relation $\wpref$, $F^\Xop(x,y)$ denotes the rewriting of $F$ that accounts for the application of the $\Xop$ sequence, thus yielding $\wprefx$.

\newtheorem{problem}{Problem}

\subsection{Transitive Closure}\label{sec:closure}

Transitivity of $\wpref$, and consequently of $\pref$, is a basic requirement of
any sound preference-based system.
  If $\wpref$ is not transitive then $\pref$ might contain cycles, a fact that could easily lead either to empty or non-stable results, as the following example shows.
\begin{example}\label{ex:non-transitive-algorithm}
Consider the t-tuples in Figure~\ref{fig:concerts2}, in which both \val{Sbarbata} and \val{Molinara} are ros\'e wines and \val{Vogadori} is a winery in the \val{Valpolicella} wine region.
\begin{figure}[h]
\begin{center}\rm%
\begin{tabular}{|ccc|c}
\cline{1-3}
\textsf{Wine} &\textsf{Winery} &\textsf{Year}\\
\cline{1-3}
 Arneis & Correggia & 2019 &\ $g$\\
 Barolo & Laficaia &  2014 &\ $h$\\
 Sbarbata & Laficaia & 2019 &\ $\ell$\\
 Molinara & Vogadori & 2014 &\ $m$\\
\cline{1-3}
\end{tabular}
\end{center}
\caption{A set of wines for Example \ref{ex:non-transitive-algorithm}.}\label{fig:concerts2}
\end{figure}

From the preference statements in Example \ref{ex:pref-concerts}, we have $g \wpref h$ (through $P_1$) and $h \wpref \ell$ (through clause $P_4$). However, $g \not\wpref \ell$.
Assume now two additional preference statements
\[
\begin{array}{rl}
P_\alpha =  & \val{ros\'e} \wedge \val{young} \wpref \val{ros\'e} \wedge \val{aged},\\
P_\beta =  & \val{Valpolicella} \wpref \val{Roero},
\end{array}
\]
which, respectively, induce preferences $\ell\wpref m$ and $m \wpref g$.
Overall, since no other preferences hold, we have the non-transitive cycle of strict preferences $g \pref h$, $h \pref \ell$, $\ell \pref m$ and $m \pref g$.
So, for a t-relation $r=\{g, h, \ell, m\}$, we have $\Best_{\pref}(r)=\emptyset$.

Consider now $r' = \{g,h,\ell\}$, for which $\Best_{\pref}(r')=\{g\}$, and $r''=\{g,\ell,m\}$, for which $\Best_{\pref}(r'')=\{\ell\}$.
Although both $r'$ and $r''$ contain $g$ and $\ell$, the choice of which of these t-tuples is better than the other depends on the presence of other t-tuples (like $h$ and $m$), thus making the result of the $\Best$ operator unstable.
\end{example}

The transitive closure operator, denoted \Top, given an input preference relation $\wprefx$ yields the preference relation $\wprefxt$.
We remind that, as observed in Section~\ref{sec:preferencemodel}, the transitivity of $\wprefxt$ entails that of $\prefxt$.
The transitive closure $F^{\Xop\Top}$ of an ipf $F^\Xop$ with $n$ statements $P_1,\ldots,P_n$ is still a finite ipf that can be computed via Algorithm \ref{alg:transitive-closure}, along the lines described in~\cite{Ch03}. For the sake of conciseness, given a preference clause $C(x,y)$, we denote by $C^b(x)$ (resp.\ $C^w(y)$) the part of $C(x,y)$ given by the conjunction of the predicates involving variable $x$ (resp.\ $y$). Notice that $C(x,y) = C^b(x) \wedge C^w(y)$ holds.

In the main loop of the algorithm (lines (\ref{line:repeat-T})--(\ref{line:until-T})) we test the possibility of transitively combining two preference statements at a time (line (\ref{line:statements-T})), by considering each of their clauses (line (\ref{line:clauses-T})).
Since clauses are assumed to be non-contradictory, the test at line (\ref{line:if-T}), which can also be written as $C^b_m(t_1) \wedge C^w_m(t_2) \wedge C^b_q(t_2) \wedge C^w_q(t_3)$, reduces to checking if  $C^w_m(t_2) \wedge C^b_q(t_2)$ is satisfiable in $\domain$.
This can be done by checking whether no contradictory pair of predicates occurs in $C^w_m(t_2) \wedge C^b_q(t_2)$.
In particular, two predicates of the form $(x[A_i]\; \povi\; v_1)$ and $(x[A_i]\; \povi\; v_2)$, over the same attribute $A_i$ and using the same variable $x$, are contradictory if values $v_1$ and $v_2$ are different and have no common descendant in the taxonomy $V_i$ (Section~\ref{sec:experiments} further discusses how to check the existence of a common descendant). If the predicates are of the form $(x[A_i]\; \povi\; v_1)$ and $(x[A_i]\; \not\leq_{V_i}\;  v_2)$, then they are contradictory in case there is a path from $v_1$ to $v_2$ in $V_i$ (or $v_1=v_2$).
\begin{algorithm}[t]
\scalebox{.95}
   {
    \begin{minipage}{1.33\textwidth}
	\begin{enumerate}
	    \item[Input:] \emph{formula $F^\Xop = P_1 \vee \ldots \vee P_n$, taxonomies $T_1,\ldots,T_d$}. 
	    \item[Output:] \emph{$F^{\Xop\Top}$, the transitive closure of $F^\Xop$}. 
	    \item $F^{\Xop\Top} := F^{\Xop}$
	    \item \coderepeat \label{line:repeat-T}
	    \item \quad $newPref :=$ \codefalse
	    \item \quad \codeforeach\ ordered pair $(P_i,P_j)$, $P_i$ in $F^{\Xop\Top}$, $P_j$ in $F^{\Xop}$ \label{line:statements-T}
		    \item \quad \quad $P :=$ empty	    
		    \item \quad\quad \codeforeach\ ordered pair $(C_m,C_q)$, $C_m$ in $P_i$, $C_q$ in $P_j$ \label{line:clauses-T}
				\item \quad\quad \quad \codeif\ $\exists\ t_1,t_2,t_3 \in\domain$ s.t.\ $C_m(t_1,t_2) \wedge C_q(t_2,t_3) =$ \codetrue \label{line:if-T} 
					\item[] \quad\quad \quad\quad \codethen\ $P := P \vee (C^b_m(x) \wedge C^w_q(y))$		
		 
	    	\item \quad \quad \codeif\ $P \neq$ empty \codethen\ $F^{\Xop\Top} := F^{\Xop\Top} \vee P$, $newPref :=$ \codetrue
	    \item \codeuntil\ $newPref =$ \codefalse	\label{line:until-T}
	    \item \codereturn\ $F^{\Xop\Top}$
	\end{enumerate}	    
	\end{minipage}
    }
	\caption{\Top\ operator: Transitive closure of $F^\Xop$.}
	\label{alg:transitive-closure}
\end{algorithm}

The fact that the transitive closure is computed with respect to the (possibly infinite) domain $\domain$ of the t-tuples, and \emph{not} with respect to a (finite) t-relation $r$ of t-tuples, is quite standard for preference relations (see e.g., \cite{Ch03}), and has the advantage of yielding a relation $\wprefxt$ that does not change with $r$ and avoiding the problems discussed in Example~\ref{ex:non-transitive-algorithm}.

\begin{example}\label{ex:pref-concerts-T}
Continuing with Example \ref{ex:pref-concerts}, the transitive closure of $F$ is the formula $F^{\Top}$ that, among others, adds the following statements to $F$:
\[
\begin{array}{rll}
P_5 =& \val{Amarone} &\wpreft \val{red} \\
P_6 =& \val{Siena} &\wpreft \val{Langhe} \wedge \val{young}
\end{array}
\]
Statement 
$P_5(x,y)$ clearly follows from $P_2(x,z)$ and $P_1(z,y)$. More interesting is statement $P_6(x,y)$, obtained from $P_3(x,z)$ and $P_4(z,y)$. 
Since there exists at least one winery that is both in the \val{Asti} province and in the \val{Langhe} region (\val{Casorzo} is one of them), this allows $P_3(x,z)$ and $P_4(z,y)$ to be transitively combined. With reference to the t-tuples in Figure \ref{fig:concerts}, we then have $d \wpreft f$. 
\end{example}

After applying the $\Top$ operator, we simplify the formula as needed, and, in particular, we remove statements that are subsumed by other statements.
Similarly, we also simplify statements by removing contradictory clauses and clauses subsumed within the same statement.

\subsection{Specificity-based Refinement}\label{sec:specifity}

The most intriguing of our operators is the \emph{specificity-based refinement} \Sop. As it is also apparent from Example \ref{ex:pref-concerts}, \emph{conflicting preferences}, such as $(a,b)$ and $(b,a)$, may hold. Although these preferences are compatible with the given definition of preference relation, we argue that some of these conflicts need to be resolved in order to derive a preference relation that better represents the stated user preferences. To this end we resort to a \emph{specificity principle}, which we adapt from the one typically used in non-monotonic reasoning to solve conflicts. According to such a principle, a conclusion derived from a more specific antecedent overrides a conflicting (defeasible) inference based on a less specific antecedent, that is, more specific information overrides more generic information.

\begin{example}\label{ex:cycle-two}
In our working example, we have a generic preference for white wines over red wines.
With no contradiction with the generic preference, we might have a \emph{more specific} preference stating that a bottle of Amarone (a red wine) is superior to a bottle of Arneis (a white wine). 
In this case, the more specific preference would entail, among others, $b \wpref a$; yet, because of the more generic preference for white wines, we also have $a \wpref b$, thus $a$ and $b$ become indifferent.
However, giving the same importance to both preference statements contradicts the intuition, as the more specific preference should take precedence over the more generic one.
\end{example} 

The specificity principle we adopt for analyzing conflicting preferences is based on the \emph{extension} of preferences statements, i.e., on the set of pairs of t-tuples in $\domain$ for which a statement is true. 

\begin{definition}[Specificity principle]\label{def:specificity}
Let $\wprefx$ be a preference relation, and let $F^\Xop$ be the corresponding formula.
Let $P_i$ and $P_j$ be two preference statements in $F^\Xop$.
We say that $P_i$ is \emph{more specific} than $P_j$ if, for any pair of t-tuples $t_1,t_2 \in \domain$ such that $P_i(t_1,t_2)$ is true, then $P_j(t_2,t_1)$ is also true, and the opposite does not hold.
\end{definition}

From Definition \ref{def:specificity} we can immediately determine how a less specific statement has to be rewritten so as to solve conflicts.

\begin{lemma}\label{lem:implication}
A preference statement $P_i(x,y)$ is more specific than $P_j(y,x)$ iff $P_i(x,y)$ implies $P_j(y,x)$ (written $P_i(x,y) \impl P_j(y,x)$) and the opposite does not hold.\footnote{The hypothesis that $P_j(y,x)$ does not imply $P_i(x,y)$ excludes the case of \emph{opposite preference statements} (e.g., white is better than red, and red is better than white), to which the \Sop\ operator clearly does not apply.}
If $P_j(y,x)$ is replaced by $P'_j(y,x) = P_j(y,x) \wedge \neg P_i(x,y)$, then $P_i$ and $P'_j$ do not induce any conflicting preferences.
\end{lemma}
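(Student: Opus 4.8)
The plan is to prove the two assertions of Lemma~\ref{lem:implication} separately, both essentially by unfolding the definitions. For the first claim (the biconditional), I would start from Definition~\ref{def:specificity}: $P_i$ is more specific than $P_j$ means that for every $t_1,t_2\in\domain$ with $P_i(t_1,t_2)$ true we have $P_j(t_2,t_1)$ true, and the converse fails. Now observe that the statement ``$P_i(x,y)$ implies $P_j(y,x)$'' is, by definition of implication between ipf's over the domain $\domain$, exactly the assertion that for all $t_1,t_2\in\domain$, $P_i(t_1,t_2)$ true entails $P_j(t_2,t_1)$ true (here one must be a little careful with the variable/argument bookkeeping: $P_j(y,x)$ evaluated at $(t_1,t_2)$ means $P_j$ with its first argument bound to $t_2$ and its second to $t_1$, i.e. $P_j(t_2,t_1)$). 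Hence the first conjunct of ``more specific'' is literally ``$P_i(x,y)\impl P_j(y,x)$'', and the second conjunct (``the opposite does not hold'') matches on both sides verbatim. So the biconditional is immediate once the quantifier forms of Definition~\ref{def:specificity} and of ipf-implication are placed side by side.

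For the second claim, I would argue that $P_i$ and $P'_j$ induce no conflicting preferences, where $P'_j(y,x) = P_j(y,x)\wedge\neg P_i(x,y)$. A conflict between these two statements would be a pair $(t_1,t_2)$ such that $P_i$ yields $t_1\wprefx t_2$ while $P'_j$ yields $t_2\wprefx t_1$ — that is, $P_i(t_1,t_2)$ is true and $P'_j$ evaluated so as to give the reverse preference is true. Evaluating $P'_j$ to obtain $t_2\wprefx t_1$ means its ``better'' argument is $t_2$ and its ``worse'' argument is $t_1$, i.e. $P_j(t_2,t_1)\wedge\neg P_i(t_1,t_2)$ holds. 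But then $\neg P_i(t_1,t_2)$ holds, contradicting $P_i(t_1,t_2)$ true. Hence no such pair exists, and the conflict between $P_i$ and $P_j$ that motivated the rewriting is removed. I would also note, to keep the claim honest, that replacing $P_j$ by $P'_j$ only \emph{shrinks} the extension of that statement ($P'_j\impl P_j$), so it cannot create new preferences, and in particular cannot create new conflicts with $P_i$; the preferences of $P_j$ that survive in $P'_j$ are exactly those not already reversed by $P_i$.

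The routine calculations here are genuinely routine; the only real subtlety — and the one place I would be careful to get right — is the argument-position bookkeeping between a statement $P(x,y)$ (meaning ``$x$ is preferred to $y$''), the predicate ``$P_i$ is more specific than $P_j$'' which crosses the arguments ($P_i(t_1,t_2)$ versus $P_j(t_2,t_1)$), and the Boolean combination defining $P'_j$. Getting a sign or an argument order wrong there would make the whole statement look false, so I would state once, explicitly, the convention that $P(x,y)$ holding at $(t_1,t_2)$ contributes $t_1\wprefx t_2$, and then mechanically track which tuple sits in which slot throughout. The footnote's hypothesis — that $P_j(y,x)$ does not imply $P_i(x,y)$, ruling out opposite statements — plays no role in the second claim; it is only needed to make ``more specific'' asymmetric in the first claim, and I would remark on that so the reader sees why it is assumed.
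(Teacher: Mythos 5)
Your proof is correct and matches the paper's intent: the paper gives no separate proof of Lemma~\ref{lem:implication}, presenting it as an immediate consequence of Definition~\ref{def:specificity}, and your careful unfolding of the quantifier form of ``more specific'' as implication, plus the observation that $P'_j(y,x)=P_j(y,x)\wedge\neg P_i(x,y)$ cannot hold simultaneously with $P_i(x,y)$ on any pair, is exactly that argument made explicit. The attention you pay to the argument-position bookkeeping is the right place to be careful, and it is handled correctly.
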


Checking whether $P_i(x,y)$ implies $P_j(y,x)$ amounts to checking whether $P_i(x,y) \land \lnot P_j(y,x)$ is false, i.e.,
every clause in the resulting formula is contradictory (contradictions can be checked as described for $\Top$).

The \Sop\ operator, whose behavior is defined by Algorithm \ref{alg:specificity}, removes from the preferences induced by a formula $F^{\Xop}$ all those that are conflicting and less specific. 

Notice that, after a first analysis of the existing implications  among the statements (line (\ref{line:implset-S})) and the rewriting of the implied statements (line (\ref{line:rewrite-S})), the analysis needs to be repeated, since new implications might arise. 
For instance, let $F^\Xop = P_1 \vee P_2 \vee P_3$, with $P_1(y,x) \impl P_2(x,y)$ being the only implication. After rewriting $P_2(x,y)$ into $P'_2(x,y) = P_2(x,y) \wedge \neg P_1(y,x)$, it might be the case that $P'_2(x,y) \impl P_3(y,x)$, thus $P_3$ needs to be rewritten.

Although multiple rounds might be needed, Algorithm \ref{alg:specificity} is guaranteed to terminate. 
Indeed, if $P_i(x,y) \impl P_j(y,x)$, and $P_j(y,x)$ is consequently replaced by $P'_j(y,x) = P_j(y,x) \wedge \neg P_i(x,y)$, the two statements $P_i$ and $P'_j$, as well as their possible further rewritings, have \emph{disjoint} extensions, and therefore will not interact anymore in the rewriting process. Since the number of statements is finite, so is the number of rewritings, which ensures that the algorithm will eventually stop.

\begin{algorithm}[t] 
\scalebox{.95}
   {
    \hspace{-0.5cm}
    \begin{minipage}{1.33\textwidth}
	\begin{enumerate}
	    \item[Input:] \emph{formula $F^\Xop = P_1 \vee \ldots \vee P_n$, taxonomies $T_1,\ldots,T_d$}. 
	    \item[Output:] \emph{$F^{\Xop\Sop}$, the specificity-based refinement  of $F^\Xop$}. 
	    \item \coderepeat 
	    \item \quad $newRound :=$ \codefalse
	    \item \quad \codeforeach\ statement $P_i$ 
		    \item \quad \quad $Impl(P_i) := \{P_j | P_j(y,x) \impl P_i(x,y) \wedge P_i(x,y) \not\impl P_j(y,x)\}$ \label{line:implset-S}
		    \item \quad \quad \codeif\ $Impl(P_i) \neq \emptyset$
				\codethen\
			\item[] \quad\quad\quad	$newRound :=$ \codetrue, $P'_i ;= P_i$
			\item[] \quad\quad\quad \codeforeach\ $P_j \in Impl(P_i)$
			\item[]   \quad\quad\quad\quad
			$P'_i(x,y) := P'_i(x,y) \wedge \neg P_{j}(y,x)$ \label{line:rewrite-S}	
	    \item \quad \codeif\ $newRound$ \codethen\ $P_i := P'_i, i=1,..,n$
		\item \codeuntil\ $newRound =$ \codefalse
	    \item \codereturn\ $F^{\Xop\Sop} = P_i \vee \ldots \vee P_n$
	\end{enumerate}	    
	\end{minipage}
    }
	\caption{\mbox{\Sop\ operator: Specificity-based refinement of $F^\Xop$.}}
	\label{alg:specificity}
\end{algorithm}

Here too, we simplify the formula resulting from the rewritings according to the same principles used for the $\Top$ operator.

\begin{example}\label{ex:pref-concerts-TS}
Continuing with Example \ref{ex:pref-concerts-T}, the application of the \Sop\ operator amounts to rewriting formula $F^{\Top}$ by replacing the clause $P_1(x,y)$ with $P_1(x,y)\wedge \neg P_2(y,x)$, since $P_2(y,x) \impl P_1(x,y)$. 
This, after distributing $\neg$ over the two predicates in $P_2$ and simplifying, leads to the new clause:
\[
\begin{array}{rll}
P_7 =& \val{white} &\wpref_{\Top\Sop} \val{red} \wedge \neg\val{Amarone}. 
\end{array}
\]
The preferences that were derived from $P_1$ can be seen in Example \ref{ex:pref-concerts}; we repeat them for the sake of clarity:
\[
\begin{array}{rl}
P_1:& \, (a,b), (a,c), (a,e), (f,b), (f,c), (f,e).
\end{array}
\]
Among them, $(a,b),(a,c),(f,b)$, and $(f,c)$ do not satisfy $P_7(x,y)$, since both $b$ and $c$ refer to \val{Amarone}. Thus,
$P_7: \, (a,e), (f,e).$
\end{example}

It is relevant to observe that the application of the \Sop\ operator always leads to smaller (i.e., cleaner) results.
For instance, considering t-relation $r$ in Figure~\ref{fig:concerts} and input preference statements $P_1$ and $P_2$ from Example~\ref{ex:pref-concerts}, we have $\Best_{\pref}(r)=\{a, b, c, d, f\}$, whereas $\Best_{\pref_\Sop}(r)=\{b, c, d\}$.

\begin{restatable}{lemma}{thmBestXS}\label{lem:BestXS}%
For any t-relation $r$ and any preference relation $\wpref_{\Xop}$ we have $\Best_{\pref_{\Xop\Sop}}(r) \subseteq \Best_{\pref_{\Xop}}(r)$.
\end{restatable}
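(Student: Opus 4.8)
My plan is to reduce the claim to a single inclusion between the \emph{strict} relations and then invoke the cited monotonicity property ($\pref_1 \subseteq \pref_2 \Rightarrow \Best_{\pref_2}(r) \subseteq \Best_{\pref_1}(r)$). Concretely, I would prove
\[
\pref_\Xop \;\subseteq\; \pref_{\Xop\Sop}
\]
and then instantiate monotonicity with $\pref_1 := \pref_\Xop$ and $\pref_2 := \pref_{\Xop\Sop}$, which yields exactly $\Best_{\pref_{\Xop\Sop}}(r) \subseteq \Best_{\pref_\Xop}(r)$. It is worth stressing the direction here: although $\Sop$ shrinks the \emph{weak} relation ($\wprefxs \subseteq \wprefx$, because every step of Algorithm~\ref{alg:specificity} only conjoins a negated statement, so each statement's extension, and hence their union, can only get smaller), it \emph{enlarges} the strict relation; this is why the correct inclusion to target is $\pref_\Xop \subseteq \pref_{\Xop\Sop}$ and not its converse.

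\textbf{Key lemma to establish.} The heart of the argument will be to show that $\Sop$ deletes from the weak relation only pairs that are part of a conflict: if $(a,b) \in \wprefx$ but $(a,b) \notin \wprefxs$, then also $(b,a) \in \wprefx$. I would argue this as follows. A pair present in $\wprefx$ can disappear only because every statement whose extension contained it was rewritten, at some round, from $P_j(y,x)$ into $P_j(y,x) \wedge \neg P_i(x,y)$ for some $P_i$ with $P_i(x,y) \impl P_j(y,x)$; the post-processing simplifications cannot be responsible, since deleting contradictory clauses (empty extension) or subsumed clauses leaves the union of extensions unchanged. Binding the better/worse slots $(y,x)$ to $(a,b)$, the pair is dropped only if the added conjunct $\neg P_i(x,y)$ fails there, i.e.\ $P_i(b,a)$ is true; and since extensions only shrink along the run, $P_i(b,a)$ true entails $(b,a)$ lies in the original extension of $P_i$, hence $(b,a) \in \wprefx$.

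\textbf{Deriving the strict inclusion.} Given the key lemma, the inclusion $\pref_\Xop \subseteq \pref_{\Xop\Sop}$ is a short check. Take $(t_1,t_2) \in \pref_\Xop$, so $t_1 \wprefx t_2$ and $t_2 \not\wprefx t_1$. Since $\wprefxs \subseteq \wprefx$, from $t_2 \not\wprefx t_1$ I immediately get $t_2 \not\wprefxs t_1$. For the other conjunct, I would argue by contradiction: if $t_1 \not\wprefxs t_2$ then, as $t_1 \wprefx t_2$, the key lemma forces $t_2 \wprefx t_1$, contradicting $t_2 \not\wprefx t_1$; hence $t_1 \wprefxs t_2$. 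Together these give $(t_1,t_2) \in \pref_{\Xop\Sop}$, completing the inclusion and, with monotonicity, the lemma.

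\textbf{Main obstacle.} I expect the delicate part to be the key lemma, and specifically making it robust to the multi-round, multi-statement nature of Algorithm~\ref{alg:specificity}: a pair may be generated by several clauses, and the witnessing statements $P_i, P_j$ of a deletion may themselves be rewritings produced in earlier rounds rather than original statements. The argument goes through because the only extension-changing operation is conjunction with a negated statement, which is monotone decreasing, so every intermediate statement's extension is contained in its original extension in $F^\Xop$; this is exactly what lets me trace the witness $P_i(b,a)$ back to a genuine pair of $\wprefx$. A secondary point to nail down is that the simplification phase truly preserves the extension, so that no pair is lost for a reason outside the conflict-resolution rewriting.
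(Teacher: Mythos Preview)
Your proposal is correct and uses essentially the same idea as the paper's proof: the key fact that $\Sop$ only removes a pair $(a,b)$ from $\wprefx$ when the reverse pair $(b,a)$ was already present (``otherwise there would have been no conflict''). You package this as an explicit inclusion $\pref_\Xop \subseteq \pref_{\Xop\Sop}$ followed by an appeal to the monotonicity of $\Best$, whereas the paper runs the same reasoning inline via a direct contradiction on $\Best$; the underlying argument is identical.
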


\section{Minimal-Transitive Sequences}\label{sec:semantics}

In this section we analyze the effect of performing the operations described in the previous section, and prove some fundamental properties of the obtained preference relations.
After introducing the basic properties and main desiderata in Section~\ref{sec:basic-properties-sequences}, we explore the space of possible sequences in Section~\ref{sec:space-sequences} and, as a major result, we show that, out of infinitely many candidates, only a finite number of sequences needs to be considered. Finally, in Section~\ref{sec:min-transitivity} we identify the only two sequences that meet all our requirements.

\subsection{Basic properties}\label{sec:basic-properties-sequences}
In order to clarify the relationships between the results of the different operations, we introduce the notions of equivalence and containment between sequences of operators.
\begin{definition}[Equivalence and containment]
	Let $\Xop,\Yop\in\{\Top,\Sop\}^*$; $\Xop$ is \emph{contained} in $\Yop$,
	denoted $\Xop\sqsubseteq\Yop$, if for every initial preference relation $\wpref$, $\wpref_{\Xop}\subseteq\wpref_{\Yop}$; $\Xop$ and $\Yop$ are \emph{equivalent}, denoted $\Xop\equiv\Yop$, if both $\Xop\sqsubseteq\Yop$ and $\Yop\sqsubseteq\Xop$.
\end{definition}

Among the basic properties of our operators, we observe that $\Top$ and $\Sop$ are idempotent, $\Top$ is monotone and cannot remove preferences, while $\Sop$ cannot add preferences.
In addition, the preference relation obtained after applying 
$\Top$ on the initial preference relation $\wpref$ is maximal, in that it includes all other relations obtained from $\wpref$ by applying $\Top$ and $\Sop$ in any way.

\begin{restatable}{theorem}{thmBasicProperties}\label{thm:basic-properties}%
	Let $\Xop,\Yop\in\{\Top,\Sop\}^*$, with $\Xop\sqsubseteq\Yop$. Then:
\begin{align}
\Xop\Top\Top&\equiv\Xop\Top	&\Xop\Sop\Sop&\equiv\Xop\Sop&&\mbox{idempotence}\\
\Xop\Top&\sqsubseteq\Yop\Top&&&&\mbox{monotonicity}\\
\Xop&\sqsubseteq\Xop\Top&\Xop\Sop&\sqsubseteq\Xop&&\mbox{inflation / deflation}\\
\Xop&\sqsubseteq\Top&&&&\mbox{maximality}
\end{align}
\end{restatable}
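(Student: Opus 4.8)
\textbf{Proof plan for Theorem~\ref{thm:basic-properties}.}
The plan is to establish the four groups of statements essentially independently, working from the characterizations of $\Top$ and $\Sop$ given by Algorithms~\ref{alg:transitive-closure} and~\ref{alg:specificity}, and from the basic monotonicity facts already hinted at in the text (that $\Top$ never removes preferences while $\Sop$ never adds them). Throughout, the key observation is that $\wprefx$ depends only on the initial $\wpref$ and the (fixed) taxonomies, so all claims are of the form ``for every $\wpref$, a certain set inclusion between the resulting relations holds''.

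First I would handle \emph{inflation/deflation} and \emph{idempotence}, since the others build on them. Inflation, $\Xop\sqsubseteq\Xop\Top$, is immediate: applying $\Top$ to $\wprefx$ is a transitive closure, which only adds pairs, so $\wprefx\subseteq\wprefxt$. Deflation, $\Xop\Sop\sqsubseteq\Xop$, follows because each round of Algorithm~\ref{alg:specificity} replaces a statement $P_j(y,x)$ by the logically stronger $P_j(y,x)\wedge\neg P_i(x,y)$, so the extension of the formula can only shrink; hence $\wprefxs\subseteq\wprefx$. For idempotence of $\Top$: $\wprefxt$ is already transitive, and the transitive closure of a transitive relation is itself, so $\wprefxtt=\wprefxt$, giving $\Xop\Top\Top\equiv\Xop\Top$. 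For idempotence of $\Sop$: here I would invoke the termination argument already spelled out in the text after Algorithm~\ref{alg:specificity} — once $\Sop$ has been run to completion, no statement implies the ``flipped'' version of another in a way that triggers a further rewrite (the surviving interacting statements have disjoint extensions), so a second run of $\Sop$ changes nothing, yielding $\Xop\Sop\Sop\equiv\Xop\Sop$. A small point to be careful about is the post-processing simplification step (removal of subsumed/contradictory clauses): I would note it does not change the induced preference relation, so it is harmless for all these equivalences.

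Next, \emph{monotonicity}: assuming $\Xop\sqsubseteq\Yop$, i.e.\ $\wprefx\subseteq\wprefy$ for all $\wpref$, I must show $\wprefxt\subseteq\wprefyt$. This is just the elementary fact that transitive closure is a monotone operation on relations: if $R\subseteq R'$ then the transitive closure of $R$ is contained in the transitive closure of $R'$ (any chain witnessing a pair in $R^+$ is also a chain in $R'$). Instantiating with $R=\wprefx$, $R'=\wprefy$ gives the claim. Finally, \emph{maximality}, $\Xop\sqsubseteq\Top$ for \emph{every} $\Xop\in\{\Top,\Sop\}^*$: I would prove $\wprefx\subseteq\wpreft$ by induction on the length of $\Xop$. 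The base case $\Xop=\Emptyseq$ is $\wpref\subseteq\wpreft$, i.e.\ inflation. For the inductive step, write $\Xop=\Yop\Zop$ with $\Zop\in\{\Top,\Sop\}$ and assume $\wprefy\subseteq\wpreft$. If $\Zop=\Sop$, then $\wprefx=\wprefys\subseteq\wprefy\subseteq\wpreft$ by deflation and the hypothesis. If $\Zop=\Top$, then $\wprefx=\wprefyt$; by monotonicity of transitive closure and the hypothesis $\wprefy\subseteq\wpreft$ we get $\wprefyt\subseteq\wprefttt$, wait — $\wprefttt=\wprefttt$ is $(\wpreft)$ closed twice, which by idempotence of $\Top$ equals $\wpreft$; hence $\wprefx\subseteq\wpreft$. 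This closes the induction.

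The main obstacle, and the only step requiring genuine care rather than bookkeeping, is idempotence of $\Sop$: unlike $\Top$, the $\Sop$ operator is defined by an iterative algorithm whose fixpoint is not obviously ``absorbing'' under a fresh invocation, because the rewritten statements $P_j\wedge\neg P_i$ have a more complex syntactic form than the originals and one must argue that no new implications $P\impl Q^{\text{flip}}$ can be created by simply re-running the procedure on the output. The argument to lean on is the disjoint-extension property already established in the text (statements that have interacted end up with disjoint extensions and thus cannot trigger further rewrites), together with the observation that implication between statements, and hence the sets $Impl(P_i)$ computed in line~(\ref{line:implset-S}), are determined entirely by the statements' extensions; so at the fixpoint all these sets are empty and a second run is a no-op. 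Everything else reduces to the two monotonicity facts about $\Top$ and $\Sop$ plus the standard algebra of transitive closure.
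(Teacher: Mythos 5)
Your proposal is correct and follows essentially the same route as the paper: inflation/deflation and idempotence read off from the definitions of the operators (with the idempotence of $\Sop$ resting on the same disjoint-extension/fixpoint argument the paper uses), monotonicity from the monotonicity of transitive closure, and maximality by the identical induction on sequence length splitting on whether the last operator is $\Top$ (monotonicity plus idempotence) or $\Sop$ (deflation plus the inductive hypothesis). If anything, your write-up is more explicit than the paper's, which dismisses several of these items as immediate from the definitions.
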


We now focus on those sequences, that we call complete, that include both $\Top$ and $\Sop$, since their corresponding operations are both part of the our requirements.
In particular, 
transitivity of the obtained \emph{strict} preference relation is at the core of the computation of the Best ($\Best$) operator, as shown in Example~\ref{ex:cycle-two}.
To this end, we characterize as transitive those sequences that entail such a transitivity.

\begin{definition}[Complete and transitive sequence]\label{def:ct-sequence}
A sequence $\Xop\in\{\Top,\Sop\}^*$ is \emph{complete} if $\Xop$ contains both $\Top$ and $\Sop$; 
$\Xop$ is \emph{transitive} if, for every initial preference relation $\wpref$, 
$\pref_\Xop$ is transitive.
\end{definition}
Eventually, our goal is to drop conflicting and less specific preferences while preserving transitivity.
To this end, we add minimality with respect to $\sqsubseteq$ as a desideratum.
In particular, we want to determine the so-called \emph{minimal-transitive} sequences, i.e., those that are minimal among the transitive sequences. As it turns out, all such sequences are also complete.
\begin{definition}[Minimal-transitive sequence]
Let $\seqSet$ be a set of sequences; $\Xop\in\seqSet$ is \emph{minimal} in $\seqSet$ if there exists no other sequence $\Yop\in\seqSet$, $\Yop\not\equiv\Xop$ such that $\Yop\sqsubseteq\Xop$.
A \emph{minimal-transitive} sequence is a sequence that is \emph{minimal} in the set of 
\emph{transitive} sequences.
\end{definition}

\subsection{The space of possible sequences}\label{sec:space-sequences}
We now chart the space of possible sequences so as to understand the interplay between completeness, transitivity and minimality.

We start by observing that any sequence with consecutive repetitions of the same operator is equivalent, through idempotence, to a shorter sequence with no such repetitions; for instance, $\Top\Sop\Sop$ is equivalent to $\Top\Sop$. Since sequences with repetitions play no significant role in our analysis, we shall henceforth disregard them.

Clearly, every sequence is contained in $\Top$, due to its maximality. Other containment relationships follow from inflation of $\Top$ and deflation of $\Sop$. Further relationships come from the following result, stating that adding $\Sop\Top$ (i.e., removing conflicts and then transitively closing the resulting preference formula) to a sequence ending with $\Top$ cannot introduce any new preference.

\begin{restatable}{lemma}{thmXTcontainsXTST}\label{thm:XT-contains-XTST}%
	Let $\Xop\in\{\Top,\Sop\}^*$. Then $\Xop\Top\Sop\Top\sqsubseteq\Xop\Top$.
\end{restatable}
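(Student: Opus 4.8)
The goal is to show $\Xop\Top\Sop\Top\sqsubseteq\Xop\Top$, i.e., for every initial preference relation $\wpref$, every pair in $\wpref_{\Xop\Top\Sop\Top}$ already belongs to $\wpref_{\Xop\Top}$. Write $\wprefx$ for the relation obtained after the prefix $\Xop$, and recall that $\wprefxt$ is transitive and, by inflation (Theorem~\ref{thm:basic-properties}), contains $\wprefx$. The plan is to pass through the formula rewriting: let $G = F^{\Xop\Top}$ be the (finite) ipf inducing $\wprefxt$. Since $\Sop$ only rewrites each statement $P_i(x,y)$ into $P_i(x,y)\wedge\bigwedge_j\neg P_{j}(y,x)$, every clause of $G^{\Sop}$ implies some clause of $G$; hence $\wprefxts\subseteq\wprefxt$ (this is just the deflation property of $\Sop$ applied with the prefix $\Xop\Top$). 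The crux is therefore to show that closing $\wprefxts$ transitively cannot escape $\wprefxt$: $\wprefxtst\subseteq\wprefxt$.

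\textbf{Key step.} I would argue this by induction on the length of the derivation chains that the $\Top$ operator builds on top of $\wprefxts$. A pair $(t_1,t_3)\in\wprefxtst$ witnesses a chain $t_1\wprefxts s_1\wprefxts\cdots\wprefxts t_3$. By the containment $\wprefxts\subseteq\wprefxt$ just established, each link of this chain is also a link in $\wprefxt$; therefore the whole chain is a chain in $\wprefxt$. But $\wprefxt$ is already transitive (Theorem~\ref{thm:basic-properties}, idempotence of $\Top$, or directly from the construction of Algorithm~\ref{alg:transitive-closure}), so the endpoints satisfy $t_1\wprefxt t_3$. This gives $\wprefxtst\subseteq\wprefxt$, and combined with $\wprefxts\subseteq\wprefxt$ we are done. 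At the level of formulas, the same reasoning says: every statement produced by Algorithm~\ref{alg:transitive-closure} running on $G^{\Sop}$ is obtained by chaining clauses each of which implies a clause of $G$; since $G$ is closed under this chaining (it is a fixed point of the algorithm up to the simplification/subsumption steps), the resulting clause is subsumed by a clause of $G$, hence adds no new pair to the induced preference relation.

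\textbf{Main obstacle.} The delicate point is the interaction between the satisfiability test at line~(\ref{line:if-T}) of Algorithm~\ref{alg:transitive-closure} and the extra negated conjuncts introduced by $\Sop$. When $\Top$ chains a clause $C_m$ of $G^{\Sop}$ with a clause $C_q$ of $G^{\Sop}$, the ``middle'' condition that must be satisfiable is $C_m^w(t_2)\wedge C_q^b(t_2)$, which now carries negative predicates inherited from $\Sop$. One must check that whenever this strengthened middle condition is satisfiable, the corresponding chain of the \emph{un-refined} clauses (the clauses of $G$ that $C_m$ and $C_q$ respectively imply) is also satisfiable in $\domain$ — so that $\Top$ acting on $G$ would already have produced (a generalization of) the combined clause. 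This is where I would be careful: dropping negative predicates can only make a conjunction easier to satisfy, so satisfiability of the refined middle condition implies satisfiability of the coarse one, and the produced clause $C_m^b(x)\wedge C_q^w(y)$ of $G^{\Sop\Top}$ implies the analogous clause $\overline{C_m}^b(x)\wedge\overline{C_q}^w(y)$ already present in $G$. Making this monotonicity argument precise — including the bookkeeping that $\Sop$ never touches the $x$-part $C^b$ of a clause that is chained on its left, nor the $y$-part $C^w$ of a clause chained on its right, so the surviving $C_m^b$ and $C_q^w$ are literally clauses appearing in $G$ up to added negated literals — is the technical heart of the proof; the rest is the routine induction sketched above.
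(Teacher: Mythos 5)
Your argument is correct and is essentially the paper's own proof: deflation of $\Sop$ gives $\Xop\Top\Sop\sqsubseteq\Xop\Top$, and your chain argument (the transitive closure of a subrelation of a transitive relation cannot escape it) is exactly monotonicity of $\Top$ followed by idempotence, which is how the paper concludes in three lines by citing Theorem~\ref{thm:basic-properties}. The formula-level bookkeeping in your last paragraph is not needed, since $\sqsubseteq$ is defined on the induced preference relations and the set-level argument already suffices.
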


Lemma~\ref{thm:XT-contains-XTST} induces two chains of inclusions, namely:
\begin{align}
\ldots\sqsubseteq\Top\Sop\Top\Sop\Top&\sqsubseteq\Top\Sop\Top\sqsubseteq\Top\label{eq:tchain}\\
\ldots\sqsubseteq\Sop\Top\Sop\Top\Sop\Top&\sqsubseteq\Sop\Top\Sop\Top\sqsubseteq\Sop\Top.\label{eq:schain}
\end{align}

In addition to that, the following result seems to suggest that the longer sequences in the above chains are preferable, since they lead to larger sets of strict preferences ($\pref$), which, as was observed in Section~\ref{sec:preferencemodel}, correspond to smaller (i.e., cleaner) results for the Best $\Best$ operator.

\begin{restatable}{proposition}{thmXTSTstrictContainsXTstrict}\label{thm:XTSTstrict-contains-XTstrict}%
	Let $\Xop\in\{\Top,\Sop\}^*$. Then, for any initial preference relation $\wpref$, we have $\pref_{\Xop\Top}\subseteq\pref_{\Xop\Top\Sop\Top}$.
\end{restatable}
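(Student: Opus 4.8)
The plan is to reason directly about the strict preference relations at the level of pairs of t-tuples, exploiting the containment already established in Lemma~\ref{thm:XT-contains-XTST}. Write $\wpref_1$ for $\wpref_{\Xop\Top}$ and $\wpref_2$ for $\wpref_{\Xop\Top\Sop\Top}$, and similarly $\pref_1,\pref_2$ for the corresponding strict relations. By Lemma~\ref{thm:XT-contains-XTST} we have $\wpref_2\subseteq\wpref_1$. Fix a pair $(t_1,t_2)$ with $t_1\pref_1 t_2$, i.e.\ $t_1\wpref_1 t_2$ and $t_2\not\wpref_1 t_1$; the goal is to show $t_1\pref_2 t_2$, i.e.\ $t_1\wpref_2 t_2$ and $t_2\not\wpref_2 t_1$. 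The second conjunct is immediate from $\wpref_2\subseteq\wpref_1$: since $t_2\not\wpref_1 t_1$ and $\wpref_2$ is smaller, $t_2\not\wpref_2 t_1$. So the entire content of the proposition reduces to the first conjunct: \emph{strict} preferences of $\Xop\Top$ survive into $\wpref_2$, even though $\Sop$ in general removes (weak) preferences.

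The key step is therefore to show that $\Sop$ never removes a preference $(t_1,t_2)$ that is \emph{strict} in $\wpref_1$. Recall how $\Sop$ acts (Algorithm~\ref{alg:specificity} and Lemma~\ref{lem:implication}): a statement $P_j$ is weakened to $P_j(x,y)\wedge\neg P_i(y,x)$ only when $P_i(y,x)\impl P_j(x,y)$ while $P_j(x,y)\not\impl P_i(y,x)$, i.e.\ only when $P_i$ witnesses a genuine conflict against $P_j$. Concretely, a pair $(t_1,t_2)\in\wpref_1$ is dropped by $\Sop$ only if every statement of $F^{\Xop\Top}$ deriving $(t_1,t_2)$ gets weakened in a way that excludes $(t_1,t_2)$, and each such weakening is triggered by some statement $P_i$ with $P_i(t_2,t_1)$ true — which forces $t_2\wpref_1 t_1$. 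Hence if $(t_1,t_2)$ is dropped by $\Sop$ then $(t_2,t_1)\in\wpref_1$, contradicting $t_1\pref_1 t_2$. Consequently $t_1\wpref_{\Xop\Top\Sop}t_2$, and then applying the final $\Top$ only adds pairs (inflation, Theorem~\ref{thm:basic-properties}), so $t_1\wpref_2 t_2$. This establishes $\pref_1\subseteq\pref_2$.

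The delicate point — and the one I expect to require the most care — is the claim that \emph{every} rewriting in a run of Algorithm~\ref{alg:specificity} that could delete $(t_1,t_2)$ is justified by a conflicting pair already present in $\wpref_{\Xop\Top}$, across \emph{multiple rounds} of the algorithm. After the first round, statements have the form $P_j\wedge\neg P_i$, and a second-round implication $P'_k(y,x)\impl P'_j(x,y)$ weakens $P'_j$ by a conjunct $\neg P'_k(y,x)$; one must check that if this conjunct excludes $(t_1,t_2)$, then $P'_k(t_2,t_1)$ is true, hence $P_k(t_2,t_1)$ is true, hence still $t_2\wpref_{\Xop\Top}t_1$ — using that the $\Sop$-rewritten clauses only ever shrink extensions relative to the original $F^{\Xop\Top}$ clauses. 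An induction on the round number, with the invariant ``every current clause implies some original clause of $F^{\Xop\Top}$ with the same variable roles,'' closes this gap. Once that invariant is in place the rest is the bookkeeping sketched above. One should also note, for $t_1\wpref_{\Xop\Top}t_2$ to be \emph{derived} at all by some statement, that this is guaranteed since $F^{\Xop\Top}$ is exactly the formula inducing $\wpref_{\Xop\Top}$ and $(t_1,t_2)\in\wpref_{\Xop\Top}$.
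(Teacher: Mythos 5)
Your proposal is correct and follows essentially the same route as the paper's proof: use Lemma~\ref{thm:XT-contains-XTST} to rule out $(t_2,t_1)$ in the larger sequence, observe that $\Sop$ can only remove a preference when the opposite (more specific) preference already holds in $\wpref_{\Xop\Top}$ — which would contradict strictness — and then invoke inflation of the final $\Top$. Your extra care about the multi-round behavior of Algorithm~\ref{alg:specificity} fills in a detail the paper states in one line, but it is the same argument.
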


There are, evidently, infinitely many sequences in the chains~\eqref{eq:tchain} and~\eqref{eq:schain} and, more generally, in $\{\Top,\Sop\}^*$. However, for any given initial preference formula, a counting argument on the number of formulas obtainable through the operators allows us to restrict to only a finite amount of sequences.
Moreover, it turns out that the repeated application of a $\Top\Sop$ suffix does not change the semantics of a sequence, so we can apply it just once and disregard all other sequences.
\begin{restatable}{lemma}{thmTSrepeated}\label{thm:TS-repeated}%
Let $\Xop\in\{\Top,\Sop\}^*$. Then $\Xop\Top\Sop\equiv\Xop\Top\Sop\Top\Sop$.
\end{restatable}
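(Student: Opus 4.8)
\textbf{Proof plan for Lemma~\ref{thm:TS-repeated}.}
The plan is to show the two containments $\Xop\Top\Sop\Top\Sop\sqsubseteq\Xop\Top\Sop$ and $\Xop\Top\Sop\sqsubseteq\Xop\Top\Sop\Top\Sop$ separately, for an arbitrary initial preference relation $\wpref$. For readability write $Y=\Xop\Top$, so that the claim becomes $Y\Sop\equiv Y\Sop\Top\Sop$; note that $\pref_Y$ is already transitive (it ends in $\Top$), and this is the fact we will lean on.

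For the first containment, I would apply Lemma~\ref{thm:XT-contains-XTST} with the role of the sequence there played by $Y\Sop$... but more directly: Lemma~\ref{thm:XT-contains-XTST} states $\Zop\Top\Sop\Top\sqsubseteq\Zop\Top$ for any $\Zop$. Unfortunately $Y\Sop$ does not end in $\Top$, so I instead argue by hand. Starting from $\wpref_{Y}$, apply $\Sop$ to get $\wpref_{Y\Sop}\subseteq\wpref_{Y}$ (deflation, Theorem~\ref{thm:basic-properties}). Now take the transitive closure: since $\wpref_{Y}$ is already transitive and $\wpref_{Y\Sop}\subseteq\wpref_{Y}$, the transitive closure $\wpref_{Y\Sop\Top}$ is still contained in $\wpref_{Y}$ (the transitive closure is the smallest transitive relation containing its argument, and $\wpref_Y$ is one such). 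Applying $\Sop$ once more is deflationary, so $\wpref_{Y\Sop\Top\Sop}\subseteq\wpref_{Y\Sop\Top}\subseteq\wpref_Y$; but I actually need containment in $\wpref_{Y\Sop}$, not just in $\wpref_Y$. The key point is that the pairs added by $\Top$ on top of $\wpref_{Y\Sop}$ are exactly pairs $(t_1,t_3)$ obtained by chaining, and each such pair already lies in $\wpref_{Y}$ by transitivity of $\wpref_Y$; the subsequent $\Sop$ then removes, among conflicting less-specific pairs, in particular those freshly introduced ones, returning us to (a subset of) $\wpref_{Y\Sop}$. Making this last sentence precise — that the second $\Sop$ exactly undoes the damage of the intermediate $\Top$ — is the main obstacle, and I expect to handle it at the level of the rewritten formulas: the statements of $F^{Y\Sop}$ have pairwise-disjoint extensions after refinement (as argued in the termination proof of Algorithm~\ref{alg:specificity}), the intermediate $\Top$ only adds derived statements whose extensions are covered by existing ones, and the final $\Sop$ restores disjointness, yielding a formula equivalent to $F^{Y\Sop}$.

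For the reverse containment $\wpref_{Y\Sop}\subseteq\wpref_{Y\Sop\Top\Sop}$, I would argue as follows. We have $\wpref_{Y\Sop}\subseteq\wpref_{Y\Sop\Top}$ by inflation of $\Top$ (Theorem~\ref{thm:basic-properties}). It remains to see that the trailing $\Sop$ does not remove any pair of $\wpref_{Y\Sop}$. A pair $(t_1,t_2)\in\wpref_{Y\Sop}$ is removed by $\Sop$ applied to $\wpref_{Y\Sop\Top}$ only if it is conflicting and strictly less specific than some statement active in $F^{Y\Sop\Top}$. But every statement of $F^{Y\Sop}$ survived the previous application of $\Sop$, meaning it was not strictly less specific than any other statement then present; and the intermediate $\Top$ only adds statements that are logical consequences (transitive combinations) of already-present ones, whose extensions are subsumed accordingly — so no new statement can be strictly more specific than a surviving statement of $F^{Y\Sop}$ in a way that creates a fresh conflict. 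Hence the final $\Sop$ leaves the statements of $F^{Y\Sop}$ untouched, giving $\wpref_{Y\Sop}\subseteq\wpref_{Y\Sop\Top\Sop}$.

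Combining the two containments yields $\Xop\Top\Sop\equiv\Xop\Top\Sop\Top\Sop$. I expect the delicate part throughout to be the bookkeeping at the formula level: precisely tracking which statements $\Top$ introduces, verifying that their extensions are contained in the union of the extensions of pre-existing statements (using transitivity of $\wpref_{\Xop\Top}$), and confirming that the specificity order among statements — and hence the behavior of $\Sop$ — is unchanged by these additions. Once that invariant is established, both inclusions follow cleanly from idempotence, inflation, and deflation as recorded in Theorem~\ref{thm:basic-properties}.
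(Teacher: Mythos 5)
Your high-level skeleton is sound and matches the paper's strategy in outline: reduce to $\Top\Sop\equiv\Top\Sop\Top\Sop$ (your $Y=\Xop\Top$), note that $\wpref_{Y\Sop\Top}\subseteq\wpref_{Y}$ because the transitive closure of a subset of the already-transitive $\wpref_{Y}$ stays inside it, and then argue that the trailing $\Sop$ removes exactly what the intermediate $\Top$ re-introduced and nothing else. But that last step is the entire content of the lemma, and you explicitly defer it ("I expect to handle it at the level of the rewritten formulas") with a sketch whose two load-bearing claims do not hold as stated. First, you assert that the intermediate $\Top$ "only adds derived statements whose extensions are covered by existing ones." If that were true, $\wpref_{Y\Sop\Top}=\wpref_{Y\Sop}$ and the whole lemma would follow from idempotence; the interesting case is precisely when the first $\Sop$ breaks transitivity (cf.\ Lemma~\ref{thm:S-non-transitive}) and the intermediate $\Top$ re-adds pairs such as $(a,c)$ that no statement of $F^{Y\Sop}$ generates. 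This also contradicts your own earlier sentence acknowledging that $\Top$ adds chained pairs that the final $\Sop$ must delete. Second, in the reverse containment you claim that no transitively derived statement can be strictly more specific than a surviving statement of $F^{Y\Sop}$ because its extension is "subsumed accordingly" --- but the transitive combination $C^b_m(x)\wedge C^w_q(y)$ of two clauses is in general subsumed by neither of them, so nothing is "subsumed accordingly" and the claim needs a real argument. Likewise, "the statements of $F^{Y\Sop}$ have pairwise-disjoint extensions" overstates what the termination argument for Algorithm~\ref{alg:specificity} gives you: only a statement and the refinement of a statement it conflicted with become disjoint (in the reversed sense), not all pairs of statements.

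The paper closes exactly this gap with a statement-level bookkeeping argument that your plan would still have to reproduce. Writing $P_{ac}$ for the statement (transitively obtained from $P_{ab}$ and $P_{bc}$) that generated the deleted pair $(a,c)$, and $P_{ca}$ for the more specific conflicting statement, the proof shows: (i) $P_{ca}$ survives both applications of $\Sop$ and is not subsumed; (ii) even when $P_{ab}$ and $P_{bc}$ have themselves been refined into $P'_{ab}$ and $P'_{bc}$, decomposing each refinement into its two clauses via the left/right sets $LS(\cdot)$ and $RS(\cdot)$ shows that their transitive combination in $F^{\Top\Sop\Top}$ reconstructs \emph{exactly} the original $P_{ac}$, so that $P^R_{ca}\subset P_{ac}$ again applies and the final $\Sop$ deletes $(a,c)$ once more; and (iii) a separate argument (the paper's Part~2) that pairs removed by the first $\Sop$ which are \emph{not} outcomes of a transitivity chain cannot be resurrected by the intermediate $\Top$. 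Until you supply the analogue of (ii) --- identifying precisely which statements the intermediate $\Top$ creates and proving they coincide with, or are no more specific than, statements already handled by the first $\Sop$ --- neither of your two containments is established.
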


An immediate consequence of this result is that, through elimination of consecutively repeated operators via idempotence and of consecutively repeated $\Top\Sop$ sub-sequences via Lemma~\ref{thm:TS-repeated},
we can restrict our attention to a set of just eight sequences, because any sequence is equivalent to one of those.

\begin{restatable}{theorem}{thmFiniteRepresentatives}\label{thm:finite-representatives}%
Let $\Xop\in\{\Top,\Sop\}^*$. Then $\exists\Yop\in\{\Emptyseq,\Top,\Sop,\Top\Sop,\Sop\Top,\Top\Sop\Top,$ $\Sop\Top\Sop,\Sop\Top\Sop\Top\}$ such that $\Xop\equiv\Yop$.
\end{restatable}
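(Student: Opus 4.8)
The plan is to take an arbitrary sequence $\Xop \in \{\Top,\Sop\}^*$ and normalize it, in two reduction phases, into one of the eight listed representatives, using only the results already available: idempotence (Theorem~\ref{thm:basic-properties}) and the $\Top\Sop$-collapse (Lemma~\ref{thm:TS-repeated}). First I would apply idempotence to collapse every maximal block of consecutive identical operators to a single occurrence; this yields an equivalent \emph{alternating} sequence, i.e., one of the form $\ldots\Top\Sop\Top\Sop\ldots$ or $\ldots\Sop\Top\Sop\Top\ldots$, with no two adjacent letters equal. So from now on $\Xop$ is alternating.

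Next I would argue that any alternating sequence of length $\geq 4$ can be shortened by one "$\Top\Sop$ period" via Lemma~\ref{thm:TS-repeated}. The cleanest way is to case-split on the first letter. If $\Xop$ starts with $\Top$ and is alternating, then $\Xop$ has the shape $\Top\Sop\Top\Sop\cdots$; writing it as a prefix followed by a trailing $\Top\Sop\Top\Sop$ block, Lemma~\ref{thm:TS-repeated} (in the form $\Yop\Top\Sop\Top\Sop\equiv\Yop\Top\Sop$) lets me drop the last $\Top\Sop$, and iterating brings the length down to at most $3$, i.e., into $\{\Top,\Top\Sop,\Top\Sop\Top\}$ (the last possibility $\Top\Sop\Top$ arising when the original sequence had odd alternating length ending in $\Top$; $\Top\Sop$ when it ended in $\Sop$). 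Symmetrically, if $\Xop$ starts with $\Sop$, then after a leading $\Sop$ the remainder begins with $\Top$ and alternates, and the same collapse leaves a sequence in $\{\Sop,\Sop\Top,\Sop\Top\Sop,\Sop\Top\Sop\Top\}$ — one needs to keep one extra letter here because the leading $\Sop$ sits before the first $\Top$ and cannot be absorbed into a $\Top\Sop$ period. Together with the empty sequence $\Emptyseq$ for the degenerate case, this produces exactly the eight representatives in the statement.

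The one point that needs care — and is the main obstacle — is making the "iterate Lemma~\ref{thm:TS-repeated}" step rigorous rather than hand-wavy: I must show the collapse is applicable at \emph{every} intermediate stage, not just once. The subtlety is that Lemma~\ref{thm:TS-repeated} is stated as $\Xop\Top\Sop\equiv\Xop\Top\Sop\Top\Sop$, i.e., it removes a $\Top\Sop$ that is \emph{immediately followed} by another $\Top\Sop$; so to peel periods off the right end of a long alternating $\ldots\Top\Sop\Top\Sop\Top\Sop$ I apply it with $\Xop$ equal to the shrinking prefix, and I should phrase this as an induction on the number of $\Top\Sop$ periods, using compatibility of $\equiv$ with concatenation (if $\Uop\equiv\Vop$ then $\Wop\Uop\equiv\Wop\Vop$, which follows immediately from the definition of $\sqsubseteq$ since applying a common suffix of operators to both sides preserves inclusion). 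I would state this concatenation-compatibility of $\equiv$ explicitly as a small observation before the induction. Handling the leading-$\Sop$ case also requires checking that the initial $\Sop$ really cannot be eliminated — but this is not needed for the proof of the theorem, which only claims \emph{existence} of an equivalent representative; we may keep that extra $\Sop$ harmlessly, so the case analysis only has to produce \emph{some} member of the eight-element set, never argue it is the shortest. With that, the proof reduces to a finite, mechanical enumeration of how the two reductions act on the first letter and the parity of the alternating length.
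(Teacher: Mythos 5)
Your proof is correct and takes essentially the same route as the paper, whose entire argument is the one-line observation that the theorem ``follows immediately'' from idempotence and Lemma~\ref{thm:TS-repeated}; you simply make explicit the normalization to an alternating sequence and the iterated peeling of $\Top\Sop$ periods. One small caution: justifying concatenation-compatibility of $\equiv$ ``from the definition of $\sqsubseteq$'' is shaky because $\Sop$ is not monotone and acts on formulas rather than relations, but in your reduction the only operator ever appended after a collapse is a trailing $\Top$, which is determined by the relation alone, so the iteration goes through.
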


\begin{figure}[h]
\centering
\includegraphics[width=0.4\columnwidth]{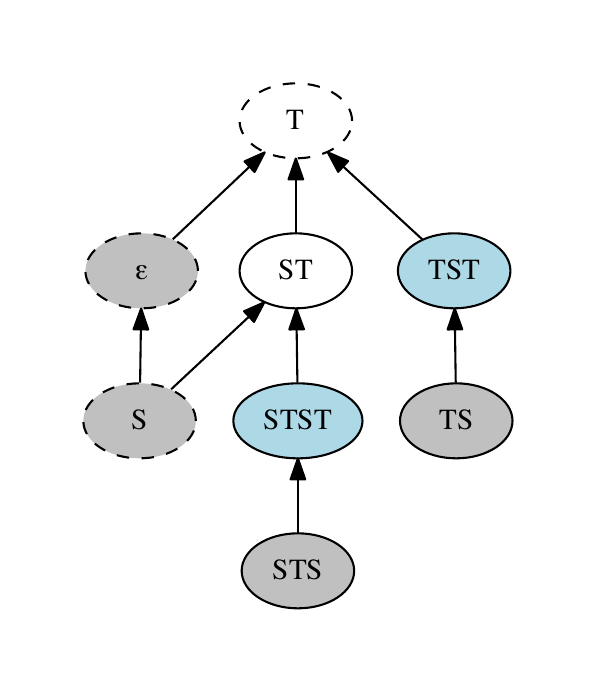}
\caption{A transitively reduced graph showing containment between sequences. Dashed border for incomplete sequences; grey background for non-transitive sequences; blue background for minimal-transitive sequences. All containment relationships are strict.
 }\label{fig:contenimentiSequenze}
\end{figure}

Figure~\ref{fig:contenimentiSequenze} shows a (transitively reduced) graph whose nodes correspond to the eight sequences mentioned in Theorem~\ref{thm:finite-representatives} and whose arcs indicate containment.
Thanks to the theorem, we have narrowed the space of possible sequences to analyze from an infinite set $\{\Top,\Sop\}^*$ to just these eight sequences.

\subsection{Minimality and transitivity}\label{sec:min-transitivity}
Now that we have restricted our scope to a small set of representative sequences, we can discuss minimality and transitivity in detail, so as to eventually detect minimal-transitive sequences.
Note that incomplete sequences can be immediately ruled out of our analysis: it is straightforward to show that $\Sop$ is not transitive, $\Top$ is not minimal (it is indeed maximal) and $\Emptyseq$ is neither.

\medskip

\textbf{Minimality.}
Generally, any complete sequence not ending with $\Sop$ is non-minimal, in that it may contain conflicting preferences (possibly introduced by $\Top$) that turn out to be in contrast with other, more specific preferences.
We exemplify this on $\Sop\Top$.
In the examples to follow, we shall refer to t-tuples with a single attribute on a single taxonomy about time.
\begin{example}\label{ex:ST-non-minimal}
Let $F$ consist of $P_1$ and the more specific $P_2$:
\[
\begin{array}{rrcl}
P_1 =   \B		\wpref \A ,
&P_2 =  \auno		\wpref \bdue.
\end{array}
\]
By specificity, in $F^\Sop$, 
$P_1$ is replaced by the statement $P_3$ consisting of two clauses (grouped by curly brackets):
\[
\begin{array}{r}
P_3 = \left\{
	\begin{array}{rcl}
	    \B &\!\!\!\wpref\!\!\!& \A  \land \lnot \auno\\
	    \B \land \lnot  \bdue &\!\!\!\wpref\!\!\!& \A
	\end{array}
	\right.
\end{array}
\]
In $F^{\Sop\Top}$, the clauses in $P_3$ transitively combine into $P_1$ again,
since, e.g., the value $\abdue$ is below $\A$ but not $\auno$ and below $\B$ but not $\bdue$; therefore $\bdue\wpref_{\Sop\Top}\auno$ holds.
However, in $F^{\Sop\Top\Sop}$, $P_1$ is again replaced by $P_3$, so that $\bdue\not\wpref_{\Sop\Top\Sop}\auno$, which shows that $\Sop\Top$ is not minimal.
\end{example}

All the containments indicated in Figure~\ref{fig:contenimentiSequenze} are strict, as can be shown through constructions similar to that of Example~\ref{ex:ST-non-minimal}, so no sequence ending with $\Top$ is minimal in $\{\Top,\Sop\}^*$.

\begin{restatable}{lemma}{thmTnonminimal}\label{thm:T-non-minimal}%
Let $\Xop\in\{\Top,\Sop\}^*$. Then $\Xop\Top$ is not minimal in $\{\Top,\Sop\}^*$.
\end{restatable}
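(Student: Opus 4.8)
The plan is to show that for any sequence $\Xop\Top$ there exists another sequence that is strictly contained in it, hence $\Xop\Top$ cannot be minimal. By Theorem~\ref{thm:finite-representatives}, it suffices to handle the finitely many representative sequences ending in $\Top$, namely $\Top$, $\Sop\Top$, $\Top\Sop\Top$, and $\Sop\Top\Sop\Top$. The natural candidate to compare against is the sequence obtained by appending one more $\Sop$: for $\Top$ compare with $\Top\Sop$, for $\Sop\Top$ compare with $\Sop\Top\Sop$ (or with $\Top\Sop$), for $\Top\Sop\Top$ compare with $\Top\Sop$ (using Lemma~\ref{thm:TS-repeated}, which gives $\Top\Sop\Top\Sop\equiv\Top\Sop$), and similarly for $\Sop\Top\Sop\Top$. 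In each case, the deflation property ($\Xop\Sop\sqsubseteq\Xop$) from Theorem~\ref{thm:basic-properties} guarantees containment; what remains is to show the containment is \emph{strict}, i.e., that the extra $\Sop$ actually removes something.

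First I would record the containment direction: appending $\Sop$ to any sequence yields a contained sequence by deflation, so $\Xop\Top\Sop\sqsubseteq\Xop\Top$ always holds. Then I would argue that $\Xop\Top\Sop\not\equiv\Xop\Top$ by exhibiting, for each representative, an initial preference relation $\wpref$ on which the $\Sop$ step strictly shrinks the relation. Example~\ref{ex:ST-non-minimal} already does exactly this for $\Sop\Top$: with $P_1,P_2$ as there, one has $\bdue\wpref_{\Sop\Top}\auno$ but $\bdue\not\wpref_{\Sop\Top\Sop}\auno$, so $\Sop\Top\Sop\sqsubsetneq\Sop\Top$, and since $\Sop\Top\Sop$ is one of the representatives, $\Sop\Top$ is not minimal. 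The same construction, read against the other sequences, should do the job: for $\Top$ itself, the pair $P_1,P_2$ gives a conflict $(a,b)$ and $(b,a)$ in $\wpref_\Top$ with $P_2$ strictly more specific than $P_1$, which $\Sop$ removes, so $\Top\Sop\sqsubsetneq\Top$; for $\Top\Sop\Top$ and $\Sop\Top\Sop\Top$, the final $\Top$ re-creates a conflict (just as in Example~\ref{ex:ST-non-minimal} the clauses of $P_3$ transitively recombine into $P_1$), which an appended $\Sop$ again removes, and Lemma~\ref{thm:TS-repeated} identifies the resulting sequence with one of the eight representatives.

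The cleanest way to package this is: (1) by Theorem~\ref{thm:finite-representatives}, $\Xop\Top$ is equivalent to one of $\Top,\Sop\Top,\Top\Sop\Top,\Sop\Top\Sop\Top$; (2) for each, $\Xop\Top\Sop$ is, after idempotence and Lemma~\ref{thm:TS-repeated}, equivalent to one of the eight representatives and is contained in $\Xop\Top$ by deflation; (3) a single witness family of preference relations (a generic statement together with a strictly more specific one, as in Example~\ref{ex:ST-non-minimal}) shows the containment is strict in every case, because $\Top$ always leaves — or $\Top$ following $\Sop$ always re-introduces — a conflict between a generic and a strictly more specific statement, which the trailing $\Sop$ then deletes. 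Therefore no representative ending in $\Top$ is minimal, and since minimality is invariant under $\equiv$, no sequence $\Xop\Top$ is minimal in $\{\Top,\Sop\}^*$.

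The main obstacle I anticipate is step (3): verifying that in \emph{every} one of the four cases the trailing $\Sop$ genuinely removes at least one pair, rather than acting vacuously. The delicate point is that after $\Top\Sop\Top$ (or $\Sop\Top\Sop\Top$) the formula has already been refined once, so one must check that the transitive closure really does re-manufacture a conflict between a more specific and a less specific statement — this is precisely the phenomenon illustrated in Example~\ref{ex:ST-non-minimal}, where the two clauses of $P_3$ recombine into $P_1$ via a witness value (like $\abdue$) that lies below the right taxonomy elements. I would make this rigorous by picking the taxonomy and the two statements so that such a witness value provably exists, then tracking the statement through the sequence of rewritings; the bookkeeping is routine once the witness is in hand, but it is the step that carries the real content, so I would state it carefully (perhaps isolating the recombination claim as a sub-observation) rather than waving at Example~\ref{ex:ST-non-minimal}.
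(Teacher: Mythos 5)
Your outline matches the paper's: reduce to the four representatives ending in $\Top$ via Theorem~\ref{thm:finite-representatives}, dispatch $\Sop\Top$ by Example~\ref{ex:ST-non-minimal}, and for each remaining case exhibit a strictly contained sequence (your $\Xop\Top\Sop$, which by idempotence and Lemma~\ref{thm:TS-repeated} is again one of the eight representatives and is contained in $\Xop\Top$ by deflation). The gap sits exactly where you feared, in step~(3), and it is not mere bookkeeping: the witness family you propose --- a generic positive statement together with a strictly more specific positive one, ``as in Example~\ref{ex:ST-non-minimal}'' --- does \emph{not} work for $\Top\Sop\Top$ or $\Sop\Top\Sop\Top$. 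Take $P_1=\B\wpref\A$ and $P_2=\auno\wpref\bdue$. In $F^{\Top}$ the combination of $P_1$ with $P_2$ produces $\B\wpref\bdue$, which \emph{subsumes} $P_2$, and the combination of $P_2$ with $P_1$ produces $\auno\wpref\A$, which is subsumed by $P_1$; after the subsumption-based simplification the paper prescribes for $\Top$, what survives is $F^{\Top}=(\B\wpref\A)\vee(\B\wpref\bdue)$, and neither of these two statements implies the reverse of the other. So no statement-level specificity remains, $F^{\Top\Sop}=F^{\Top}$, and the trailing $\Sop$ is vacuous: the pairwise conflict between $\val{sep10}$-tuples and $\val{oct10}$-tuples is still present in $\wpref_{\Top}$, but $\Sop$ can no longer see it because the first $\Top$ has absorbed the more specific statement into a broader one. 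Note that the failure is not the one you anticipated (absence of a recombination witness value); it is that $\Top$'s subsumption step destroys the more-specific/less-specific relationship between \emph{statements}, which is the only thing $\Sop$ acts on. This is precisely why the paper's proof switches to a different witness for $\Top\Sop\Top$, namely $P_1=\jun\wpref\may$ and $P_2=\lnot\jun\wpref\lnot\may$: the complemented predicates make the extensions of the two statements crosswise rather than nested, so the closure subsumes neither, the specificity of $P_1$ over $P_2$ survives into $F^{\Top}$, $\Top\Sop$ then removes $\may\wpref\jun$, and the final $\Top$ restores it. The same formula also settles $\Sop\Top\Sop\Top$, because for it $F^{\Sop\Top}=F^{\Top}$.

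Two smaller points. For the representative $\Top$ itself no specificity argument is needed at all: any non-transitive initial formula already gives $\Emptyseq\sqsubseteq\Top$ strictly, which is the paper's (more robust) route; your route via comparing $\Top$ with $\Top\Sop$ would again need a witness whose specificity survives the closure, and the Example~\ref{ex:ST-non-minimal} pair fails there for the reason above. Your reading of $\Sop\Top$ is correct, but only because in that case the $\Sop$ you rely on is applied \emph{before} any transitive closure, so the specificity between $P_1$ and $P_2$ is still visible; the failure mode only arises when a $\Top$ precedes the $\Sop$ that is supposed to do the shrinking, which is exactly the situation in the two remaining cases.
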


\textbf{Transitivity.}
Transitivity is certainly achieved for any sequence ending with $\Top$: any relation $\wpref_{\Xop\Top}$ is transitive by construction, which entails transitivity of $\pref_{\Xop\Top}$.
However, the following result shows that, in the general case, no sequence ending with $\Sop$ is transitive.
\begin{restatable}{lemma}{thmSnontransitive}\label{thm:S-non-transitive}%
Let $\Xop\in\{\Top,\Sop\}^*$. Then $\Xop\Sop$ is not transitive.
\end{restatable}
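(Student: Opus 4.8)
The plan is to exhibit, for a generic sequence $\Xop\Sop$, a concrete initial preference formula $F$ over a single taxonomic attribute such that $\pref_{\Xop\Sop}$ is not transitive. Since $\Sop$ is the last operator applied and it works by rewriting less specific statements as $P_j(y,x)\wedge\neg P_i(x,y)$, the key observation is that $\Sop$ can \emph{break} a transitive chain: if we have strict preferences $t_1\pref t_2$ and $t_2\pref t_3$ coming from one statement (or from the transitive closure) together with a more specific statement whose conflict resolution removes exactly the pair $(t_1,t_3)$, then transitivity fails. So the construction I would aim for is a formula $F$ such that, after applying the prefix $\Xop$, we have (i) a statement $P$ that yields $t_1\pref t_2$, $t_2\pref t_3$, and $t_1\pref t_3$, and (ii) a more specific statement $Q$ whose reverse $Q(y,x)$ is satisfied exactly by the pair $(t_3,t_1)$ but not by $(t_2,t_1)$ or $(t_3,t_2)$; then in $F^{\Xop\Sop}$ the clause for $P$ gets conjoined with $\neg Q(y,x)$, killing $(t_1,t_3)$ while leaving $(t_1,t_2)$ and $(t_2,t_3)$ intact, so $\pref_{\Xop\Sop}$ is not transitive.

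Concretely, I would work with the time taxonomy already used in Example~\ref{ex:ST-non-minimal} and pick a single statement that, over suitable values, orders three t-tuples linearly, plus a second, strictly more specific statement that targets only the ``extreme'' pair. For the bare sequence $\Sop$ this is a matter of choosing values $v_1,v_2,v_3$ in the taxonomy with $v_1\wprefO v_2\wprefO v_3$ (so transitivity of $\pref$ should hold after $\Sop$ if it held at all) and a specificity relationship that, when resolved, removes the $(v_1,v_3)$ link but not the others. I would then argue that prepending any prefix $\Xop$ does not spoil the construction: by Theorem~\ref{thm:basic-properties}, $\Xop\sqsubseteq\Top$, and the relevant preferences in my example are already present in $F$ (hence in $\wpref_\Xop$ by suitable choice, or absent and re-derivable); more simply, I would design $F$ so that it is already closed and refined in a way that makes $\wpref_\Xop$ coincide with $\wpref$ for all the tuples of interest, reducing every case to the $\Sop$-only case. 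Alternatively, I would just analyze the finitely many representatives from Theorem~\ref{thm:finite-representatives} whose last operator is $\Sop$ — namely $\Sop$, $\Top\Sop$, and $\Sop\Top\Sop$ — since by that theorem every $\Xop\Sop$ is equivalent to one of these, and exhibit a (possibly shared) counterexample for each.

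The main obstacle is ensuring the counterexample is robust under the prefix $\Xop$: I must make sure that applying $\Top$ somewhere inside $\Xop$ does not introduce additional preferences that either (a) re-create the missing pair $(v_1,v_3)$ through some other route, or (b) change which statement is ``more specific'' than which, thereby altering how $\Sop$ rewrites. The cleanest way around this is to lean on Theorem~\ref{thm:finite-representatives} and only handle $\Sop$, $\Top\Sop$, $\Sop\Top\Sop$: for $\Top\Sop$ and $\Sop\Top\Sop$ the preference relation fed to the final $\Sop$ is already transitively closed, so the transitive chain $v_1\pref v_2\pref v_3\pref v_1$-style or $v_1\pref v_2\pref v_3$ with $v_1\pref v_3$ is guaranteed to be there, and I only need the more specific statement to surgically remove one edge. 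Verifying that the more specific statement's rewriting does indeed delete precisely the intended pair — by checking satisfiability/contradiction of the relevant conjunctions of predicates, exactly as described for the $\Top$ and $\Sop$ algorithms — is the routine but essential computation, and I would present it on the concrete time-taxonomy instance rather than in the abstract.
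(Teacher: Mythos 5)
Your overall strategy is the paper's: invoke Theorem~\ref{thm:finite-representatives} to reduce the claim to the three representatives ending in $\Sop$ (namely $\Sop$, $\Top\Sop$, $\Sop\Top\Sop$ --- and, as you should make explicit, the reductions by idempotence and by Lemma~\ref{thm:TS-repeated} preserve a trailing $\Sop$, so every $\Xop\Sop$ does land on one of these three), and then exhibit a counterexample formula for each. The mechanism you describe --- a more specific reverse statement whose resolution surgically deletes one edge of a transitive chain while the other two edges survive --- is sound and is essentially how such counterexamples are built.

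The gap is that for a lemma whose entire content is the existence of counterexamples, you never actually produce one, and that is where all the work lies. For bare $\Sop$ the paper does something simpler than your sketch: it takes any non-transitive input with no conflicting statements, so $\Sop$ is idle and the input's non-transitivity passes through untouched; your construction for this case is more elaborate than needed. For $\Top\Sop$ and $\Sop\Top\Sop$ the verification is genuinely delicate precisely because the formula fed to the final $\Sop$ has been saturated by $\Top$: you must check that no other statement or transitive combination in the closed formula re-supplies the deleted pair, and the paper's actual $\Top\Sop$ example needs statements with negated predicates to arrange this. Moreover, in that example the failure of transitivity of $\pref_{\Top\Sop}$ does not come from deleting the weak edge $(t_1,t_3)$ at all, but from a surviving reverse weak edge that makes $t_1$ and $t_3$ indifferent ($t_1\equp t_3$) while $t_1\pref t_2$ and $t_2\pref t_3$ stay strict --- a second failure mode your plan does not anticipate. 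Until a concrete formula is written down and its rewriting under the relevant sequence is traced predicate by predicate, what you have is a plausible blueprint rather than a proof.
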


\textbf{Minimal-transitive sequences.}
\sloppy
As a consequence of Lemmas~\ref{thm:T-non-minimal} and~\ref{thm:S-non-transitive}, we can state a major result, showing that transitivity and minimality in $\{\Top,\Sop\}^*$ are mutually exclusive.
\begin{restatable}{theorem}{thmtransitivityvsminimality}\label{thm:transitivity-vs-minimality}%
	No sequence is both transitive and minimal in $\{\Top,\Sop\}^*$.
\end{restatable}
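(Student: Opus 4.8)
The plan is to derive Theorem~\ref{thm:transitivity-vs-minimality} as a direct corollary of the two preceding lemmas together with the structural reduction of Theorem~\ref{thm:finite-representatives}. First I would recall that, by Theorem~\ref{thm:finite-representatives}, every sequence in $\{\Top,\Sop\}^*$ is equivalent to one of the eight representatives, and that equivalence preserves both transitivity (it is defined via the induced relations $\pref_\Xop$, which coincide for equivalent sequences) and minimality in $\{\Top,\Sop\}^*$ (since $\sqsubseteq$ only depends on the equivalence class). Hence it suffices to argue about a single representative of each class, and in fact it suffices to observe that every nonempty sequence ends either with $\Top$ or with $\Sop$.

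The core of the argument is then a case split on the last operator of a (nonempty) sequence $\Xop$. If $\Xop$ ends with $\Top$, i.e.\ $\Xop=\Yop\Top$ for some $\Yop$, then Lemma~\ref{thm:T-non-minimal} immediately tells us $\Xop$ is not minimal in $\{\Top,\Sop\}^*$. If $\Xop$ ends with $\Sop$, i.e.\ $\Xop=\Yop\Sop$, then Lemma~\ref{thm:S-non-transitive} tells us $\Xop$ is not transitive. The only remaining case is the empty sequence $\Emptyseq$, which as already noted in the text just before the theorem is not transitive (the initial preference relation $\wpref$ need not be transitive, and indeed Example~\ref{ex:non-transitive-algorithm} exhibits a non-transitive $\pref$). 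Therefore no sequence can be simultaneously transitive and minimal: any candidate that is transitive must fail to end with $\Sop$ and must be nonempty, hence ends with $\Top$, hence is non-minimal; contrapositively, any minimal sequence cannot end with $\Top$, and being nonempty (since $\Emptyseq$ is non-minimal as $\Emptyseq\sqsubseteq\Top$ strictly, or simply because it is not transitive either) must end with $\Sop$, hence is non-transitive.

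I do not expect any real obstacle here, since the heavy lifting has already been done in Lemmas~\ref{thm:T-non-minimal} and~\ref{thm:S-non-transitive}; the only point requiring a little care is making explicit that "ends with $\Top$" versus "ends with $\Sop$" is genuinely exhaustive for nonempty sequences, and that the empty sequence is handled separately — this is exactly why the remark immediately preceding the theorem disposes of $\Emptyseq$ (and of $\Sop$ and $\Top$ individually) before the general statement. A secondary subtlety worth a sentence is that minimality is stated relative to the whole space $\{\Top,\Sop\}^*$ rather than relative to the eight representatives; but since Lemma~\ref{thm:T-non-minimal} already gives non-minimality in $\{\Top,\Sop\}^*$ directly, and Lemma~\ref{thm:S-non-transitive} likewise quantifies over all $\Xop\in\{\Top,\Sop\}^*$, no further reduction via Theorem~\ref{thm:finite-representatives} is strictly needed for the proof — it merely clarifies why the two lemmas together cover everything. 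The proof is thus a two-line corollary, and the main work is simply citing the right prior results in the right order.
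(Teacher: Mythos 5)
Your proof is correct and follows exactly the paper's route: the paper states the theorem ``as a consequence of Lemmas~\ref{thm:T-non-minimal} and~\ref{thm:S-non-transitive}'' (with $\Emptyseq$, $\Top$, $\Sop$ dispatched in the remark at the start of Section~\ref{sec:min-transitivity}), which is precisely your case split on the last operator. Your observation that the reduction via Theorem~\ref{thm:finite-representatives} is not actually needed here, since both lemmas already quantify over all of $\{\Top,\Sop\}^*$, is accurate.
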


Moreover, we observe that all complete sequences starting with $\Sop$ are incomparable (i.e., containment does not hold in any direction) with those starting with $\Top$,
  as stated below (also refer to~Figure~\ref{fig:contenimentiSequenze}).

\begin{restatable}{theorem}{thmincomparable}\label{thm:incomparable}%
Let $\Xop\in\{\Top\Sop,\Top\Sop\Top\}$ and $\Yop\in\{\Sop\Top,\Sop\Top\Sop,\Sop\Top\Sop\Top\}$. Then $\Xop\not\sqsubseteq\Yop$ and $\Yop\not\sqsubseteq\Xop$.
\end{restatable}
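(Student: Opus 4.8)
The plan is to reduce the six pairs $(\Xop,\Yop)$ — each requiring two non-containments — to just two inequalities, and then to settle those two by explicit counterexamples.

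For the reduction I would exploit the poset structure of the eight representatives, together with inflation (Theorem~\ref{thm:basic-properties}) and Lemma~\ref{thm:XT-contains-XTST}. By inflation $\Top\Sop\sqsubseteq\Top\Sop\Top$, so $\Top\Sop$ lies $\sqsubseteq$-below every $\Xop\in\{\Top\Sop,\Top\Sop\Top\}$. Also $\Sop\Top\Sop\Top\sqsubseteq\Sop\Top$ (Lemma~\ref{thm:XT-contains-XTST} with $\Xop=\Sop$) and $\Sop\Top\Sop\sqsubseteq\Sop\Top\Sop\Top$ (inflation), hence $\Sop\Top\Sop\sqsubseteq\Yop\sqsubseteq\Sop\Top$ for every $\Yop\in\{\Sop\Top,\Sop\Top\Sop,\Sop\Top\Sop\Top\}$. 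Consequently, if $\Xop\sqsubseteq\Yop$ held for some admissible pair we would get $\Top\Sop\sqsubseteq\Xop\sqsubseteq\Yop\sqsubseteq\Sop\Top$, and if $\Yop\sqsubseteq\Xop$ held we would get $\Sop\Top\Sop\sqsubseteq\Yop\sqsubseteq\Xop\sqsubseteq\Top\Sop\Top$. So it suffices to prove (a)~$\Top\Sop\not\sqsubseteq\Sop\Top$ and (b)~$\Sop\Top\Sop\not\sqsubseteq\Top\Sop\Top$.

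For (a), I would exhibit — in the style of Example~\ref{ex:ST-non-minimal}, i.e.\ a single attribute over a single taxonomy — an initial formula and a pair of t-tuples witnessing $\wpref_{\Top\Sop}\not\subseteq\wpref_{\Sop\Top}$. A construction of this kind uses three statements: a ``bridge'' $P_1,P_2$ in which the worse part of $P_1$ and the better part of $P_2$ share a single common descendant $v$, so that $\Top$ composes them into a genuinely new statement $Q$; plus a more specific statement $P_3$ whose only effect under $\Sop$ is to carve a hole out of $P_1$ exactly at $v$. Running $\Top$ first builds $Q$ before any hole exists, and $\Sop$ then leaves $Q$ intact (nothing in $F^{\Top}$ is more specific than $Q$ reversed), so a pair witnessing $Q$ survives in $\wpref_{\Top\Sop}$. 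Running $\Sop$ first punches the hole at $v$, after which the only surviving transitive route from $P_1$-material to $P_2$-material produces a strictly restricted version of $Q$; running Algorithm~\ref{alg:transitive-closure} on the refined formula shows that no other route reconstitutes the missing part, so that same pair is absent from $\wpref_{\Sop\Top}$.

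For (b), I would use a somewhat richer construction, engineered so that running $\Top$ first makes available a \emph{derived} statement $M$ that is more specific than the reverse of the statement $P$ carrying a chosen witness pair; then the middle $\Sop$ of $\Top\Sop\Top$ performs a refinement that removes the pair, and the final $\Top$ cannot restore it. In $\Sop\Top\Sop$, by contrast, the \emph{initial} $\Sop$ carves $P$ just enough to block the derivation of $M$ during the subsequent $\Top$, so the final $\Sop$ never performs that refinement and the pair survives in $\wpref_{\Sop\Top\Sop}$. The main obstacle throughout is precisely this last step: choosing the taxonomies and statements so that the operators interfere in exactly this asymmetric way, and then verifying the four relations by running Algorithms~\ref{alg:transitive-closure} and~\ref{alg:specificity} — delicate bookkeeping, since transitive closure spawns many derived statements and clauses mixing $\povi$ and $\not\leq_{V_i}$ predicates, and one must track which survive simplification and repeatedly test satisfiability of conjunctions of predicates over $\domain$. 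Keeping each example to a single taxonomic attribute, as in Example~\ref{ex:ST-non-minimal}, makes this case analysis finite and manageable.
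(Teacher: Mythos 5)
Your reduction of the twelve non-containments to just two is correct and is actually tighter than the paper's own organization. The chains $\Top\Sop\sqsubseteq\Xop\sqsubseteq\Top\Sop\Top$ (inflation) and $\Sop\Top\Sop\sqsubseteq\Yop\sqsubseteq\Sop\Top$ (inflation, deflation, and Lemma~\ref{thm:XT-contains-XTST} with $\Xop=\Sop$) do hold for every admissible $\Xop$ and $\Yop$, so everything follows from (a)~$\Top\Sop\not\sqsubseteq\Sop\Top$ and (b)~$\Sop\Top\Sop\not\sqsubseteq\Top\Sop\Top$. The paper instead establishes four base non-containments by counterexample ($\Top\Sop\not\sqsubseteq\Sop\Top\Sop$, $\Sop\Top\Sop\not\sqsubseteq\Top\Sop$, $\Top\Sop\not\sqsubseteq\Sop\Top$, and $\Sop\Top\Sop\not\sqsubseteq\Top\Sop\Top$) and propagates along Figure~\ref{fig:contenimentiSequenze}; your two facts are exactly the extremal ones, and identifying them as sufficient is a genuine (if modest) streamlining.

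The gap is that the two counterexamples are never actually produced. For a theorem whose entire content is ``there exist formulas separating these sequences,'' a description of the \emph{mechanism} a witness would exploit is not yet a proof. For (a) this is nearly harmless: the bridge-plus-carving construction you describe is precisely Example~\ref{ex:TS<>STS} in the main text ($P_1=\summer\wpref\spring$, $P_2=\jultwentyone\wpref\jun$, $P_3=\may\wpref\jul$, with witness pair $(\jultwentyone,\julten)$), and you could simply have cited it. For (b), however, you have only a plan: you posit a derived statement $M$ that appears in $F^{\Top}$ but whose derivation is blocked by an initial $\Sop$, and you yourself flag the instantiation as ``delicate bookkeeping'' without carrying it out. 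That bookkeeping is where all the difficulty of this theorem lives --- one must exhibit concrete taxonomies and statements, run Algorithms~\ref{alg:transitive-closure} and~\ref{alg:specificity} on both sides, and verify that a specific pair survives in $\wpref_{\Sop\Top\Sop}$ but not in $\wpref_{\Top\Sop\Top}$ (the paper does this with the three statements $\summer\wpref\may$, $\spring\wpref\may$, $\auno\wpref\spring$). Until a verified witness for (b) is on the page, the second half of the theorem ($\Yop\not\sqsubseteq\Xop$) is unproved.
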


This property is shown for $\Top\Sop$ and $\Sop\Top\Sop$ in the next example.
\begin{example}\label{ex:TS<>STS}
Let $F$ consist of the following statements:
\[
\begin{array}{rll}
P_1 = \summer \wpref \spring,
&P_2 = \jultwentyone \wpref \jun,
&P_3 = \may \wpref \jul.
\end{array}
\]
Then $F^\Top$ includes $P_1$, $P_3$ and the following 4 statements:
\[
\begin{array}{rrcllrrcll}
\!\!\!\!P_4= &\!\!\!\!\summer& \!\!\!\!\wpref\!\!\!\!& \jul& \!\!\!\!\mbox{ ($P_1+P_3$}),
\!\!\!\!&P_5=&\!\!\!\!\may& \!\!\!\!\wpref\!\!\!\!& \spring& \!\!\!\!\mbox{ ($P_3+P_1$}),\\
\!\!\!\!P_6= &\!\!\!\!\may& \!\!\!\!\wpref\!\!\!\!& \jun& \!\!\!\!\mbox{ ($P_3+P_2$}),
\!\!\!\!&P_7=&\!\!\!\!\summer& \!\!\!\!\wpref\!\!\!\!& \jun& \!\!\!\!\mbox{ ($P_1+P_6$}),
\end{array}
\]
while $P_2$ is removed, as it is redundant with respect to $P_7$.
No statement in $F^\Top$ is more specific than $P_4$, so $P_4$ is in
$F^{\Top\Sop}$ and, e.g., $\jultwentyone \wpref_{\Top\Sop} \julten$
 holds.
In $F^\Sop$, instead, $P_1$ (less specific than $P_3$) is replaced by
\[P_8 = \left\{
	\begin{array}{rcl}
	     \summer &\!\!\!\!\wpref\!\!\!\!& \spring \land \lnot \may\\
	    \summer \land \lnot \jul &\!\!\!\!\wpref\!\!\!\!& \spring
	\end{array}\right.
\]
So, now, by combining $P_8$ (instead of $P_1$) and $P_3$, in $F^{\Sop\Top}$ we do not obtain
$P_4$ and then
$\jultwentyone \not\wpref_{\Sop\Top\Sop} \julten$.
With this, $\Top\Sop\not\sqsubseteq\Sop\Top\Sop$.

For the other non-containment, consider that,
in $F^{\Sop\Top}$, $P_2$ combines with $P_8$ into the following statement:
\[P_{9} = \jultwentyone \wpref \spring,\]
so that $\jultwentyone \wpref_{\Sop\Top} \may$ holds.
No statement in $F^{\Sop\Top}$ is more specific than $P_{9}$, so $\jultwentyone \wpref_{\Sop\Top\Sop} \may$ also holds.
Instead, $\jultwentyone \not\wpref_{\Top\Sop} \may$, since $F^{\Top\Sop}$ is as $F^\Top$, but with $P_8$ instead of $P_1$.
Therefore $\Sop\Top\Sop\not\sqsubseteq\Top\Sop$.
\end{example}

The notion of minimal-transitive sequence captures the fact that transitivity cannot be waived, since we are indeed looking for the minimal sequences among those that are both complete and transitive.
Only three sequences are both complete and transitive: $\Sop\Top$, $\Top\Sop\Top$ and $\Sop\Top\Sop\Top$, the first of which contains the last one and is therefore not minimal.
The remaining two sequences are transitive, incomparable by Theorem~\ref{thm:incomparable}, and, therefore, minimal in the set of complete and transitive sequences, i.e., $\Top\Sop\Top$ and $\Sop\Top\Sop\Top$ are minimal-transitive sequences.

\begin{restatable}{theorem}{thmMCT}\label{thm:MCT}%
	The only minimal-transitive sequences are $\Top\Sop\Top$ and $\Sop\Top\Sop\Top$.
\end{restatable}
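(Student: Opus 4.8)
\textbf{Proof plan for Theorem~\ref{thm:MCT}.}
The plan is to leverage the substantial groundwork already laid. By Theorem~\ref{thm:finite-representatives}, every sequence in $\{\Top,\Sop\}^*$ is equivalent to one of the eight representatives $\Emptyseq$, $\Top$, $\Sop$, $\Top\Sop$, $\Sop\Top$, $\Top\Sop\Top$, $\Sop\Top\Sop$, $\Sop\Top\Sop\Top$, so it suffices to examine these eight. First I would identify which of the eight are \emph{transitive}: a sequence ending with $\Top$ is transitive by construction (the transitive closure operator yields a transitive $\wpref_{\Xop\Top}$, which forces transitivity of $\pref_{\Xop\Top}$), so $\Top$, $\Top\Sop\Top$ and $\Sop\Top\Sop\Top$ are transitive. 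The remaining five end with $\Sop$ or are empty; by Lemma~\ref{thm:S-non-transitive} no sequence ending with $\Sop$ is transitive, and $\Emptyseq$ is not transitive either (the initial preference relation need not be transitive). Hence the set of transitive sequences, up to equivalence, is exactly $\{\Top,\ \Top\Sop\Top,\ \Sop\Top\Sop\Top\}$ — and note that by Definition~\ref{def:ct-sequence} the latter two are also complete.

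Next I would determine which of these three are \emph{minimal} in the set of transitive sequences. The sequence $\Top$ is maximal by Theorem~\ref{thm:basic-properties}, and by Lemma~\ref{thm:T-non-minimal} no sequence ending with $\Top$ is minimal in $\{\Top,\Sop\}^*$; but here one must be slightly careful, since minimality is being taken relative to the \emph{smaller} set of transitive sequences, not all of $\{\Top,\Sop\}^*$. So the argument should be: among the three transitive representatives, is any strictly contained in another? From chain~\eqref{eq:schain} together with Lemma~\ref{thm:XT-contains-XTST} we have $\Sop\Top\Sop\Top\sqsubseteq\Sop\Top$, and one also checks (as drawn in Figure~\ref{fig:contenimentiSequenze}, with all containments strict) that $\Sop\Top\Sop\Top\sqsubseteq\Top$ strictly; thus $\Top$ is not minimal even within the transitive set. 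It remains to compare $\Top\Sop\Top$ and $\Sop\Top\Sop\Top$: these are incomparable by Theorem~\ref{thm:incomparable} (taking $\Xop=\Top\Sop\Top$ and $\Yop=\Sop\Top\Sop\Top$), so neither is contained in the other, and each is therefore minimal in the set of transitive sequences. Combining, the minimal-transitive sequences are precisely $\Top\Sop\Top$ and $\Sop\Top\Sop\Top$.

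I would close by confirming the earlier remark that all minimal-transitive sequences are complete: since both $\Top\Sop\Top$ and $\Sop\Top\Sop\Top$ contain $\Top$ and $\Sop$, they are complete by Definition~\ref{def:ct-sequence}, so no separate argument is needed. The main obstacle I anticipate is not any single hard inequality — all the heavy lifting (finite representation, strictness of the containments in Figure~\ref{fig:contenimentiSequenze}, the incomparability of the two $\Sop$-started and $\Top$-started families) is already packaged in Theorems~\ref{thm:finite-representatives} and~\ref{thm:incomparable} and Lemmas~\ref{thm:XT-contains-XTST}, \ref{thm:T-non-minimal}, \ref{thm:S-non-transitive}. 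Rather, the delicate point is the quantifier in the definition of minimality: one must verify that passing from minimality in $\{\Top,\Sop\}^*$ to minimality in the subset of transitive sequences does not accidentally promote $\Top$ (or any non-minimal transitive sequence) to minimal status, which is exactly why the explicit strict containment $\Sop\Top\Sop\Top\sqsubsetneq\Top$ must be invoked rather than merely citing Lemma~\ref{thm:T-non-minimal}.
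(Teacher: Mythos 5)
There is a concrete gap in your case analysis: you miscount which of the eight representatives end with $\Top$. The transitive representatives are \emph{four}, not three: $\Top$, $\Sop\Top$, $\Top\Sop\Top$, and $\Sop\Top\Sop\Top$ all end with $\Top$; the remaining four ($\Emptyseq$, $\Sop$, $\Top\Sop$, $\Sop\Top\Sop$) are the ones that end with $\Sop$ or are empty. Your claim that ``the remaining five end with $\Sop$ or are empty'' silently drops $\Sop\Top$ from the list of transitive candidates, and if $\Sop\Top$ were minimal among the transitive sequences the theorem would be false. So the proof as written does not establish the statement. The paper's proof handles exactly this case: $\Sop\Top$ is complete and transitive but not minimal because $\Sop\Top\Sop\Top\sqsubseteq\Sop\Top$ (Lemma~\ref{thm:XT-contains-XTST} with $\Xop=\Sop$), and the containment is strict. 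Ironically, you already invoke this very containment in your argument for ruling out $\Top$, so the repair is immediate --- but it must be aimed at $\Sop\Top$, which first has to appear in your list of candidates.

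Apart from this omission, your route matches the paper's: restrict to the eight representatives via Theorem~\ref{thm:finite-representatives}, discard the non-transitive ones via Lemma~\ref{thm:S-non-transitive}, discard the non-minimal transitive ones via strict containments, and conclude incomparability of the surviving pair from Theorem~\ref{thm:incomparable}. One point where you are actually \emph{more} careful than the paper: the paper's proof works with ``complete and transitive'' sequences and never explicitly disposes of $\Top$, which is transitive (though incomplete) and therefore a legitimate candidate under the definition of minimal-transitive; your explicit use of the strict containment $\Sop\Top\Sop\Top\sqsubseteq\Top$ to exclude it is a worthwhile addition. Keep that, add the $\Sop\Top$ case, and the argument is complete.
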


As observed in Theorem~\ref{thm:incomparable}, the sequence $\Sop\Top\Sop\Top$, which removes less specific conflicting preferences before computing the first transitive closure, does not in general entail a set of preferences included in those induced by $\Top\Sop\Top$. We shall further characterize the behavior of these two sequences in Section~\ref{sec:algorithm}, from a theoretical point of view, and, experimentally, in Section~\ref{sec:experiments}.

We also observe that the result of Theorem~\ref{thm:transitivity-vs-minimality} is inherent 
and that no finer granularity in the interleaving of $\Top$ and $\Sop$ (e.g., by making $\Sop$ resolve one conflict at a time instead of all together) would remove this limitation: as Example~\ref{ex:ST-non-minimal} shows, the presence of one \emph{single} preference ($\bdue\wpref\auno$) is sufficient to make the relation transitive but not minimal, and its absence to make it minimal but not transitive.
The atomicity of this conflict is enough to conclude that 
it is unavoidable and that no method whatsoever (not just those based on the $\Top$ and $\Sop$ operators) could solve it.

\section{Computing the Best Results}\label{sec:algorithm}

\subsection{\!\!\!Worst-case difference between $\Top\Sop\Top$ and $\Sop\Top\Sop\Top$}

As shown in Theorem \ref{thm:incomparable}, the two minimal-transitive semantics are incomparable, thus there will be t-relations $r$ and initial preference relations $\wpref$ for which the best results delivered by the two semantics will differ. A legitimate question is: How much can these results be different?
In order to answer this question we consider the \emph{maximum} value of the cardinality of the difference of the results delivered by the two minimal-transitive semantics over all t-relations with $n$ t-tuples and over all input preference relations $\wpref$. To this end, let us define, for any two sequences $\Xop$ and $\Yop$:
\begin{align*}
\DiffBest(\Xop,\Yop,n) = 
\max_{\wpref,\; |r| = n} \left\{ |\Best_{\pref_{\Xop}}(r)- \Best_{\pref_{\Yop}}(r)| \right\} 
\end{align*}
as the worst-case difference in the results delivered by \Xop\ with respect to those due to \Yop, for any given cardinality of the target t-relation $r$.
We can prove the following:
\begin{restatable}{theorem}{thmdiffBest}\label{thm:diffbest}
We have both
$\DiffBest(\Top\Sop\Top, \Sop\Top\Sop\Top,n) = \Theta(n)$ and
$\DiffBest(\Sop\Top\Sop\Top, \Top\Sop\Top,n) = \Theta(n)$. 
\end{restatable}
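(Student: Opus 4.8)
The plan is to prove the two $\Theta(n)$ bounds separately, and for each one to establish a matching upper bound and lower bound. The upper bound is the easy direction in both cases: for any t-relation $r$ with $|r| = n$ and any sequences $\Xop,\Yop$, we trivially have $|\Best_{\pref_{\Xop}}(r) - \Best_{\pref_{\Yop}}(r)| \le |\Best_{\pref_{\Xop}}(r)| \le n$, so $\DiffBest(\Xop,\Yop,n) = O(n)$ regardless of the sequences involved. Hence the whole content of the theorem is in exhibiting, for each of the two ordered pairs of minimal-transitive sequences, a family of instances (a t-relation of size $n$ together with an initial preference relation $\wpref$, realized by a preference formula over suitable taxonomies) on which the symmetric difference of the two Best sets grows linearly in $n$.

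For the lower bound on $\DiffBest(\Top\Sop\Top, \Sop\Top\Sop\Top, n)$, I would build on the mechanism already exhibited in Example~\ref{ex:TS<>STS}, which shows a concrete witness where $\Top\Sop\Top$ and $\Sop\Top\Sop\Top$ disagree on a strict preference: in $F^{\Top}$ the statements $P_1$ and $P_3$ transitively combine into $P_4$ ($\summer\wpref\jul$) before any specificity refinement is applied, so this derived preference survives in $F^{\Top\Sop\Top}$; whereas under $\Sop\Top\Sop\Top$ the less specific $P_1$ is cut down to $P_8$ first, $P_4$ is never generated, and the corresponding strict preference is lost. The idea is to take $n/c$ disjoint ``gadget'' copies of this configuration (using disjoint sub-taxonomies, so the gadgets do not interact through transitive closure), each contributing one t-tuple that is a Best-tuple under one semantics but dominated under the other. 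Concretely, in each gadget one arranges a pair of t-tuples $t,t'$ so that $t\pref_{\Top\Sop\Top} t'$ (hence $t'\notin\Best_{\pref_{\Top\Sop\Top}}$) but $t\not\pref_{\Sop\Top\Sop\Top} t'$ and $t'$ is undominated under $\Sop\Top\Sop\Top$ (hence $t'\in\Best_{\pref_{\Sop\Top\Sop\Top}}$); summing over the $\Theta(n)$ gadgets gives $|\Best_{\pref_{\Sop\Top\Sop\Top}}(r) - \Best_{\pref_{\Top\Sop\Top}}(r)| = \Theta(n)$, which is the statement $\DiffBest(\Sop\Top\Sop\Top,\Top\Sop\Top,n)=\Omega(n)$. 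For the other direction, $\DiffBest(\Top\Sop\Top,\Sop\Top\Sop\Top,n)=\Omega(n)$, one replicates the companion witness from the same example — the statement $P_9$ ($\jultwentyone\wpref\spring$), which is derivable under $\Sop\Top\Sop\Top$ (because $P_2$ combines with the refined $P_8$) but not under $\Top\Sop\Top$ — so that the roles of the two semantics are swapped, and again take $\Theta(n)$ disjoint copies.

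The main obstacle, and the step deserving the most care, is the \emph{stability of the gadget count}: I need the per-gadget disagreement to translate into a genuinely linear gap in the \emph{Best} sets, not to be absorbed by cross-gadget domination. This requires (i) ensuring that t-tuples from different gadgets are mutually incomparable under both semantics — handled by putting each gadget on a fresh, pairwise-incomparable region of the taxonomies, so no preference clause ever relates tuples across gadgets and no transitive combination bridges them; and (ii) verifying that within each gadget the ``extra'' tuple $t'$ is undominated by \emph{everything} (not just by its partner $t$) under the semantics in which it is supposed to be a Best-tuple — this is where one must actually recompute $F^{\Sop\Top\Sop\Top}$ (resp.\ $F^{\Top\Sop\Top}$) on the full formula and check, clause by clause, that no statement yields a strict preference over $t'$. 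A secondary technical point is bookkeeping the relation between $n$ and the gadget size: each gadget uses a constant number of t-tuples and a constant-size fragment of the taxonomies, so with $n$ t-tuples we fit $\Theta(n)$ gadgets, and the $O(n)$ upper bound from the first paragraph then pins the growth rate at exactly $\Theta(n)$. I would also remark that, since $\Best$ sets are monotone in $\pref$ (as recalled in Section~\ref{sec:preferencemodel}) and the two semantics are incomparable by Theorem~\ref{thm:incomparable}, both one-sided differences are genuinely nonempty in general, which is what makes the two-sided $\Theta(n)$ claim meaningful rather than vacuous.
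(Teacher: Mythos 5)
Your proposal is correct in outline but takes a genuinely different route from the paper. The paper does not replicate gadgets at all: it keeps a \emph{single, constant-size} preference formula (the one from Example~\ref{ex:TS<>STS} for one direction, the one from the proof of Theorem~\ref{thm:incomparable} for the other) and instead packs $n$ t-tuples into the taxonomy so that under one semantics \emph{all} $n$ tuples are mutually indifferent (hence $\Best_{\pref_\Xop}(r)=r$) while under the other a single tuple strictly dominates the remaining $n-1$, giving a difference of $n-1$ directly. Your construction instead takes $\Theta(n)$ disjoint copies of a constant-size gadget, so your initial formula grows linearly with $n$. Both satisfy the literal definition of $\DiffBest$ (the max is over all $\wpref$, with no bound on formula size), but the paper's version is stronger in spirit: it shows the linear gap arises already from a fixed user query as the data grows, and it is the only construction consistent with the remark immediately following the theorem, namely that one semantics can return $\Theta(n)$ results while the other returns $\Oh(1)$ --- in your construction both $\Best$ sets have size $\Theta(n)$. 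One point in your write-up deserves more care than you give it: the disagreement exhibited in Example~\ref{ex:TS<>STS} is on the \emph{weak} preference $\jultwentyone\wprefxt\julten$, and the pair $(\jultwentyone,\julten)$ does not directly yield a strict-preference disagreement, since both values lie below $\summer$ and below $\jul$, so $P_4$ makes them indifferent and both land in $\Best$ under either semantics. To make a gadget (or the paper's single-formula construction) work you must pick the witness tuples so that the preference surviving only under $\Top\Sop\Top$ is strict and the dominated tuple is not rescued by indifference --- e.g., by placing the ``extra'' tuples below $\summer$ but outside $\jul$, or below $\jul$ but outside the left-hand sides of all surviving clauses. You flag this verification but do not carry it out; it is the substantive step of the proof rather than a routine check.
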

From a practical point of view, Theorem \ref{thm:diffbest} shows that there is no all-seasons minimal-transitive semantics.
Furthermore, 
there can be cases (used in the proof of the theorem) in which the number of best results from any of the two semantics is comparable to $n$, whereas the other semantics returns $\Oh(1)$ t-tuples. In Section~\ref{sec:experiments} we will experimentally investigate the actual difference of results delivered by the two minimal-transitive semantics. 

\subsection{A heuristics for computing the best results}

In order to compute the best results according to the formula
$F^{\Xop}$ we adopt the well-known BNL algorithmic pattern~\cite{BKS01}.
We remind that BNL-like algorithms have worst-case quadratic complexity, although in practice they behave almost linearly~\cite{DBLP:journals/vldb/GodfreySG07}.
Remind also that, according to Equation~\eqref{eq:strict-pref}, given a preference formula $F^{\Xop}(x,y)$ defining \emph{weak} preferences, the corresponding \emph{strict} preferences are those induced by the formula $F_\pref^{\Xop}(x,y) = F^{\Xop}(x,y) \wedge \neg F^{\Xop}(y,x)$.

The t-tuples that do not match any side of any clause in the preference formula correspond to those objects that the formula does not talk about and that can, thus, be considered irrelevant. 
As recognized in the germane literature~\cite{DBLP:conf/icde/GeorgiadisKCNS08}, such objects are of little interest and, in the following, we shall therefore compute $\Best$ so as to only include \emph{relevant} t-tuples (i.e., those that satisfy either side of at least one clause of $F^\Xop$, thus of $F_\pref^{\Xop}$ as well). 

The algorithm keeps the current best t-tuples in the $Best$ set. When a new t-tuple $t$ is read, and $t$ is found to be relevant, $t$ is compared to the tuples in $Best$. 
Given $t' \in Best$, if $t' \pref_\Xop t$ then $t$ is immediately discarded. Conversely,
$t$ is added to $Best$ and all t-tuples $t'\in Best$ such that $t \pref_\Xop t'$ are removed from the $Best$ set.
Eventually, we have $\Best_{\pref_\Xop}(r) = Best$.

An improvement to this basic scheme is to pre-sort the t-relation so that the t-tuples matching the left side of a clause and corresponding to lower-level values in the taxonomies come first. The rationale is that lower-level values are likely associated with a smaller amount of t-tuples, so that a smaller $Best$ partial result can be found before scanning large amounts of data. Furthermore, such t-tuples are likely to be preferred to many others, in particular when specificity is a concern.
More in detail, we scan $r$ and, for each relevant t-tuple $t$ (irrelevant t-tuples are immediately discarded) we compute a \emph{height index}, $hi(t)$, as follows:
For any clause $C(x,y) = C^b(x) \wedge C^w(y)$ such that $C^b(t)$ holds,
we consider the ``height'' of each value $v$ occurring in the clause, computed as the distance of $v$ from the leaves of its taxonomy.\footnote{In case of non-functional taxonomies, in which a node may have more than one parent, we take the minimum distance.} Then, the minimum height over predicates in $C^b(t)$ and over all other matching clauses is used as value of $hi(t)$, and t-tuples are sorted by increasing height index values; conventionally, when $t$ matches no clauses, we set $hi(t)=\infty$.

\begin{example}
Consider a formula $F= P_1 \lor P_2$, where $P_1$ and $P_2$ are taken from Example~\ref{ex:pref-concerts}.
Then, we have $F^{\Sop\Top\Sop\Top}=P_3 \lor P_2 \lor P_4$, where $P_3= \val{white} \wpref \val{red} \land \lnot \val{Amarone}$ and $P_4=\val{Amarone} \wpref \val{red} \land \lnot \val{Amarone}$.
Out of the t-tuples in Figure~\ref{fig:concerts}, $d$ is irrelevant, while $e$ does not match any clause, and thus $hi(e)=\infty$.
Wines $a$ and $f$ match $\val{white}$ in $P_3$, which has height $1$ (see Figure~\ref{fig:vini-colori}), so $hi(a)=hi(f)=1$, while $b$ and $c$ match $\val{Amarone}$ in both $P_2$ and $P_4$, with $hi(b)=hi(c)=0$. Thus, $b$ and $c$ come before $a$ and $f$ in the ordering, and $e$ is last.
\end{example}

\section{Experiments}\label{sec:experiments}

In this section, we consider from a practical point of view the sequences of operators $\Top$, $\Top\Sop\Top$, and $\Sop\Top\Sop\Top$, discussed in the previous sections.
The main goals of the experimental study are:
(i) to understand the impact of the rewriting process on the overall query execution time and how this depends on the specific sequence at hand;
(ii) to assess the effect of minimal-transitive sequences on (the cardinality of) the results of the $\Best$ (Best) operator;
(iii) to compare overall execution times incurred by minimal-transitive sequences with respect to baseline strategies in which either no rewriting occurs or only the transitive closure of the input formula is computed;
(iv) to measure the effects of the heuristics presented in Section~\ref{sec:algorithm}.
In particular, we study how efficiency and effectiveness are affected by 
taxonomy's size and morphology,
dataset size, number of attributes, and 
number and type of preferences.
The relevant parameters used in our analysis are summarized in Table~\ref{tab:operating_parameters}.

In summary, we show that:
the rewriting due to the minimal-transitive sequences $\Top\Sop\Top$ and $\Sop\Top\Sop\Top$ incurs a low overhead across all tested scenarios;
such sequences are effective both in reducing the cardinality of $\Best$ and in achieving substantial speedup with respect to baseline strategies, and that the speedup is further incremented when adopting our heuristics.

\newcommand{\fanout}{f}
\newcommand{\depth}{\delta}
\newcommand{\flipkart}{{\tt flipkart}\xspace}
\newcommand{\usedcars}{{\tt UsedCars}\xspace}
\newcommand{\level}{\ell}
\newcommand{\attributes}{d}
\newcommand{\clauses}{c}
\newcommand{\size}{N}

\begin{table}[h]
	\centering
	\caption{Operating parameters for performance evaluation (defaults, when available, are in bold).}
\scalebox{1}{
  		\begin{tabular}{|l|l|}
    		\hline
    			Full name							& Tested value \\
			\hline
				Taxonomy's depth $\depth$			& 2, 3, 4, 5, \textbf{6}, 7, 8, 9, 10\\
				Taxonomy's fanout $\fanout$			& 2, 3, 4, \textbf{5}, 6, 7, 8, 9, 10\\
				Synthetic taxonomy's kind  & \textbf{regular}, random, scale-free\\
				\# of attributes $\attributes$		& \textbf{1}, 2, 3, 4, 5 \\
				\# of input clauses $\clauses$		& \textbf{2}, 4, 6, 8, 10\\
				\# of maximal values  		& 2, 4, \textbf{6}, 8, 10\\
				Type of preferences					& \textbf{conflicting}, contextual \\
				Dataset size $\size$				& \textbf{10K}, 50K, 100K, 500K, 1M \\
			\hline
  		\end{tabular}
		}
	\label{tab:operating_parameters}
\end{table}

\subsection{Taxonomies, datasets, and preferences}
We use two families of taxonomies: synthetic and real taxonomies.

We run our tests on three kinds of synthetic taxonomies: regular, random and scale-free.
A regular taxonomy is generated as
a forest of $\fanout$ (``fanout'') rooted trees
consisting of $\depth$ levels
 and $\fanout$ children for each internal node. The total number of nodes is therefore $\sum_{i=1}^\depth \fanout^{i}$, i.e., $\frac{\fanout (\fanout^\depth-1)}{\fanout-1}$.
A random taxonomy is generated as in the previous case, but the fanout of each node is Poisson distributed with an average of $\fanout$.
The default values for $\fanout$ and $\depth$ are chosen to match the size of the real taxonomies used in the experiments
(15-20K nodes).
Finally, a scale-free taxonomy targets the same number of nodes, but following a power-law distribution (which is observed to be a recurrent structure, e.g., in the Semantic Web; see~\cite{DBLP:journals/is/TheoharisGC12,DBLP:journals/tkde/TheoharisTKC08}), for the fanout. Scale-free taxonomies generated this way (with reasonable exponents around 2.7) are typically very deep (between 30 and 60 levels).
All synthetic taxonomies are functional by construction, i.e., every node has exactly one parent.
Synthetic datasets of various sizes are generated by drawing values uniformly at random from a different taxonomy for each attribute.

We adopt two real taxonomies and datasets: \flipkart\footnote{\url{https://www.flipkart.com}} and \usedcars\footnote{\url{https://www.kaggle.com/austinreese/craigslist-carstrucks-data}}.
The former lists product categories 
of various kinds
and consists of 15,236 nodes (of which 12,483 leaf categories) and 15,465 arcs spread throughout $10$ levels.
This taxonomy is non-functional, in that there exist nodes with more than one parent, i.e., some products belong to more than one category.
Product info is available as a t-relation consisting of 19,673 t-tuples that also include original price, discounted price, and user rating, rendered here as attributes associated with a ``flat'' taxonomy with 
three values (e.g., ``high'', ``medium'', ``low'').
\usedcars features a large collection of used vehicles for sale consisting, after cleaning, of 232,470 t-tuples including, among others, price range (as a flat taxonomy) and model. Models are organized in a functional taxonomy, with 14,588 nodes and 14,540 leaves, over three levels (besides model name and make, we obtained country information via the Car Models List DB\footnote{\url{https://www.teoalida.com/cardatabase/car-models-list}}).

The study of the best taxonomy representation in the general case is orthogonal with respect to the problems we study in this paper (see, e.g.,~\cite{DBLP:conf/sigmod/JinXRW08}).
However, given the taxonomies we deal with, it is convenient to precompute all paths in order to speed up all taxonomy-based computations, e.g., establishing when a value is more specific than another.

For our experiments, we consider two common types of preferences, discussed below: conflicting preferences
and contextual preferences.
We omit the results concerning other common types of preferences, as their behavior is not essentially different.

A pair of \emph{conflicting preference} statements has the following form:
\[
P_1= v_1  \wpref v_2, \quad
P_2= v'_2 \wpref v_1,
\]
where $v_1$ and $v_2$ are
maximal
 values (i.e., tree roots)
  of the same taxonomy $T_i$
   and $v'_2\povi v_2$.
Clearly, $P_2$ is 
more specific than $P_1$.

The second kind of preferences, used for experiments on multi-attribute relations, are pairs of conflicting \emph{contextual preferences}, i.e., conflicting preferences applied to one attribute, in which the other attributes are used to establish a sort of ``context'' of applicability.
A pair of contextual preferences is of the following form:
\[
\begin{array}{rrcllrrcll}
 P_1=& v^{(1)}_1  \land v^{(2)} \land \ldots \land v^{(\attributes)} & \!\!\!\wpref\!\!\! & v^{(1)}_2 \land v^{(2)} \land \ldots \land v^{(\attributes)},\\
 P_2=& v'^{(1)}_2 \land v^{(2)} \land \ldots \land v^{(\attributes)} & \!\!\!\wpref\!\!\! & v^{(1)}_1 \land v^{(2)} \land \ldots \land v^{(\attributes)},
\end{array}
\]
where the $(i)$ superscript denotes values from taxonomy $T_i$, $v^{(1)}_1$ and $v^{(1)}_2$ are maximal in $T_1$, and $v'^{(1)}_2\povi v^{(1)}_2$. Note that, when there are $\attributes=1$ attributes, this is just a pair of conflicting preferences.
For real data, flat taxonomies are used for context attributes.
An example of contextual preference is given by statement $P_4$ in Example~\ref{ex:pref-concerts}.

\subsection{Results: computation of the output formula}\label{sec:exp-results}

In order to assess feasibility of the computation of the preference formula resulting after applying a sequence of operators, we report the corresponding execution time averaged out over 100 different runs (as measured on a machine sporting a 2,3 GHz 8-Core Intel Core i9 with 32 GB of RAM).

Our first experiments test the impact of the characteristics of the taxonomy in the case of synthetic taxonomies and one pair of conflicting preferences.
For regular taxonomies, computing $F^{\Top}$ ($0.5ms$ on average) is generally faster than computing $F^{\Sop\Top\Sop\Top}$ ($1.5ms$) and $F^{\Top\Sop\Top}$ ($2.7ms$) and neither $\fanout$ nor $\depth$ affect the computation time significantly.
Similar times are obtained with random taxonomies.
With scale-free taxonomies the same relative costs are kept, but times are slightly higher, due to the much deeper structure, and tend to decrease as the the number of maximal nodes increases, as shown in Figure~\ref{fig:powerLaw-formulaTimes}; still, all times 
are well under $0.2s$ and thus
negligible with respect to the time required for computation of $\Best$, as will be shown in Section~\ref{sec:best-experiments}.

\begin{figure}%
\centering
\subfloat[][{Scale-free taxonomies.}]
{\includegraphics[width=.44\columnwidth]{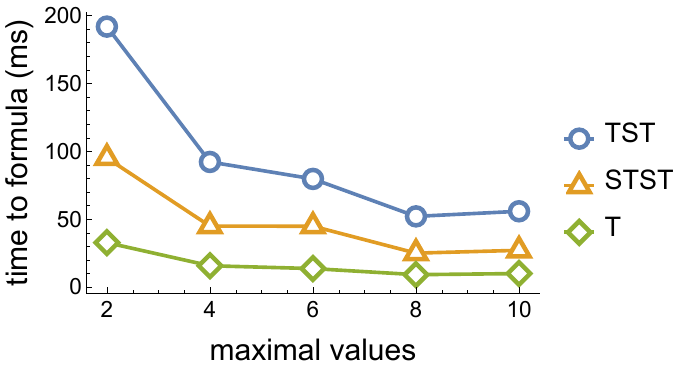}\label{fig:powerLaw-formulaTimes}}%
\quad
\subfloat[][{Regular taxonomies.}]
{\includegraphics[width=0.48\columnwidth]{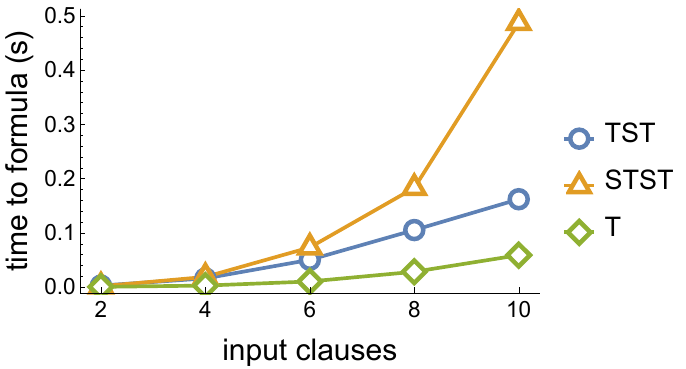}\label{fig:esperimenti_risultati_figure_VaryingPairs-times}}%
\\[-2ex]
\subfloat[][{Contextual preferences.}]
{\includegraphics[width=0.48\columnwidth]{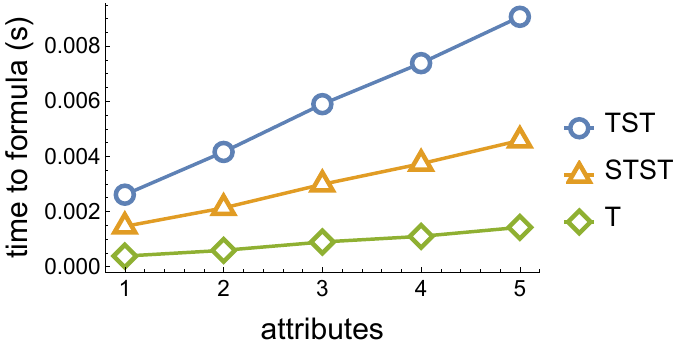}\label{fig:esperimenti_risultati_figure_VaryingContextualPrefs-times}}%
\quad
\subfloat[][{\flipkart taxonomy.}]
{\includegraphics[width=0.48\columnwidth]{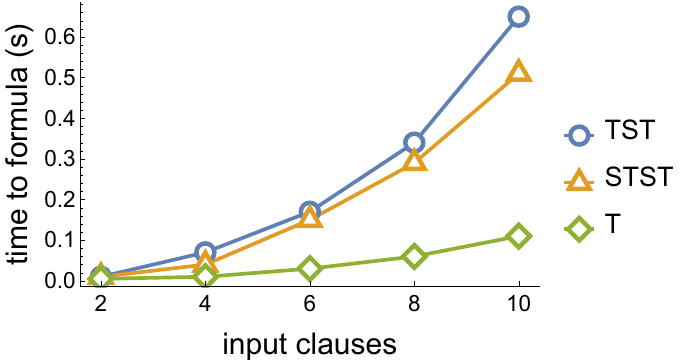}\label{fig:esperimenti_risultati_figure_FlipkartVaryingPairs-times}}%
\\[-2ex]
\caption{Time for computing the formula: various settings.}\label{fig:esperimenti_risultati_figure_VaryingPairs-formula}%
\end{figure}

Figure~\ref{fig:esperimenti_risultati_figure_VaryingPairs-times} shows that the time for computing the formula grows with the number of input clauses,
with times always below $0.5s$.

For a multi-attribute scenario, Figure~\ref{fig:esperimenti_risultati_figure_VaryingContextualPrefs-times} shows the behavior with contextual preferences as the number of attributes varies.
The resulting formula is always computed in less than $0.01s$; times slightly grow as the number of attributes grows, but remain low.

We now turn to the case of real taxonomies.
With \usedcars, which is functional, results are very similar to those obtained with synthetic taxonomies and thus not shown here in the interest of space.
We then test on \flipkart, which is non-functional, the case of conflicting preferences as the number of input clauses $\clauses$ varies.
This has an impact on the overhead for determining redundancies in formulas and for checking clause satisfiability when computing $\Top$. Indeed, both require checking whether two values $v_1$ and $v_2$ have a common descendant in the taxonomy, which is immediate in the case of functional taxonomies, as it suffices to check whether there is a path from $v_1$ to $v_2$ or vice versa.
However, for non-functional taxonomies this check may require extracting all descendants of $v_1$ and $v_2$, which may be expensive for large taxonomies, especially when $v_1$ and $v_2$ are maximal values.
Yet, for taxonomies in which only few nodes have more than one parent (like \flipkart, with 170 such nodes), it is convenient to keep track of those nodes at taxonomy load time; with this, we can check the existence of a common descendant between $v_1$ and $v_2$ by checking whether there is a path to both from one of those nodes (if they are not the descendant of one another).
As Figure~\ref{fig:esperimenti_risultati_figure_FlipkartVaryingPairs-times} shows, the times measured with the \flipkart taxonomy are only slightly higher than with synthetic taxonomies (and always sub-second).

\subsection{Results: computation of $\Best$}\label{sec:best-experiments}

\begin{figure}%
\centering
\subfloat[][{Cardinality of $\Best$.}]
{\includegraphics[width=0.45\columnwidth]{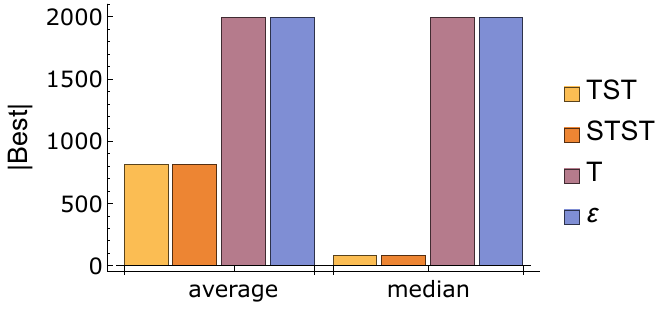}\label{fig:esperimenti_risultati_figure_Default-Best}}%
\quad
\subfloat[][{Time for computing $\Best$.}]
{\includegraphics[width=0.45\columnwidth]{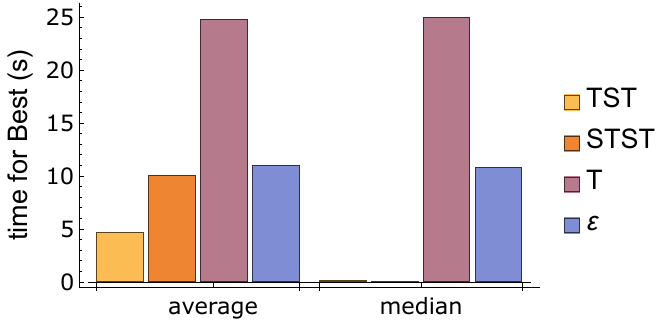}\label{fig:esperimenti_risultati_figure_Default-BestTimes}}%
\\[-2ex]
\caption{Computing $\Best$ with default parameter values.\label{fig:esperimenti_risultati_figure_Default}}%
\end{figure}

As discussed in Section~\ref{sec:algorithm}, we restrict the $\Best$ operator to act only on \emph{relevant} t-tuples.
In the same vein, we shall only consider preferences inducing a non-empty set of relevant t-tuples.

With conflicting preferences and default parameter values on regular taxonomies, the amount of relevant t-tuples
is roughly $40\%$ of the size of a synthetic dataset.
Figure~\ref{fig:esperimenti_risultati_figure_Default-Best} shows that
both 
 $\Top$ 
and $\Emptyseq$ retain about half of the relevant t-tuples (which is both the average and the median value we obtained),
while $\Top\Sop\Top$ and $\Sop\Top\Sop\Top$ retain less than $2\%$ in the median case (the average value goes up to $20\%$ due to runs with unfocused input formulas referring to values not in the dataset).
This is reflected in the computation times, shown in Figure~\ref{fig:esperimenti_risultati_figure_Default-BestTimes}, which are consistently around $24s$ for $\Top$ and $10s$ for $\Emptyseq$, but nearly two orders of magnitude smaller in the median case for $\Top\Sop\Top$ and $\Sop\Top\Sop\Top$.
With both scale-free and random taxonomies, the amount of relevant t-tuples varies much more (with an average still around $40\%$), but times are on average one order of magnitude smaller for $\Top\Sop\Top$ and $\Sop\Top\Sop\Top$ than for $\Top$, with results for the latter covering almost the entire dataset due to the lack of conflict resolution.

We observe that the application of $\Top$ alone corresponds to the work performed by preference evaluation methods that only aim at guaranteeing transitivity, e.g., \cite{Kie02,Ch03,DBLP:journals/tkde/GolfarelliRB11}, which are therefore outperformed by our approach.
The inability of $\Top$ 
to deal with conflicting preferences, thus generating many indifferent t-tuples, which in turn induce (very) large result sets, 
indeed
applies to all our scenarios.
Similar observations apply to $\Emptyseq$ (i.e., the empty sequence, corresponding to the input formula), which represents the action of works on preference evaluation using no rewriting whatsoever, such as \cite{DBLP:conf/sigmod/ChanJTTZ06,DBLP:conf/ijcai/LukasiewiczMS13}. 
Additionally, the results obtained via $\Emptyseq$ would be totally unreliable, due to lack of transitivity (see Example~\ref{ex:non-transitive-algorithm}).
We thus refrain from considering $\Top$ and $\Emptyseq$ from now on.

We now analyze the cost incurred by the computation of $\Best$ as we deviate from standard parameter values.
In the case of contextual preferences, adding context makes the $\Best$ set leaner and, thus, easier to compute, so that times are under $1s$ already with two attributes.
As usual, $\Sop\Top\Sop\Top$ is slightly quicker to compute, since it gives rise to a smaller formula (although its strict version coincides with that of $\Top\Sop\Top$, and thus their cardinalities coincide).

As already visible in Figure~\ref{fig:esperimenti_risultati_figure_Default}, random preference formulas may fail to represent a meaningful specification of preferences, thus leading to very large result sets.
For this reason, we disregard such formulas and, in particular,
in the next experiments we only retain those ``good runs'' in which either $\Top\Sop\Top$ or $\Sop\Top\Sop\Top$ produce less than 2\% of the t-tuples in the dataset.
Figure~\ref{fig:esperimenti_risultati_figure_HeuristicSortGOODRUNSVaryingPrefsBest} shows how the cardinality of $\Best$ varies, under these hypotheses and default parameter values, as the number of input clauses $\clauses$ varies, thus confirming that $\Sop\Top\Sop\Top$ typically leads to a smaller result than $\Top\Sop\Top$.

\begin{figure}%
\centering
\subfloat[][{Cardinality of $\Best$.}]
{\includegraphics[width=0.48\columnwidth]{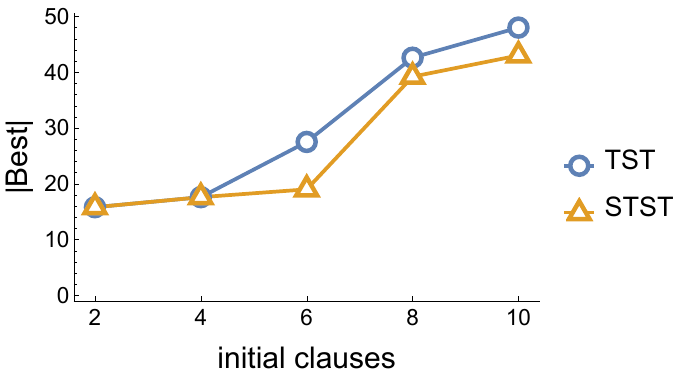}\label{fig:esperimenti_risultati_figure_HeuristicSortGOODRUNSVaryingPrefsBest}}%
\quad
\subfloat[][{Time for computing $\Best$ with and without heuristic sort (H).}]
{\includegraphics[width=0.48\columnwidth]{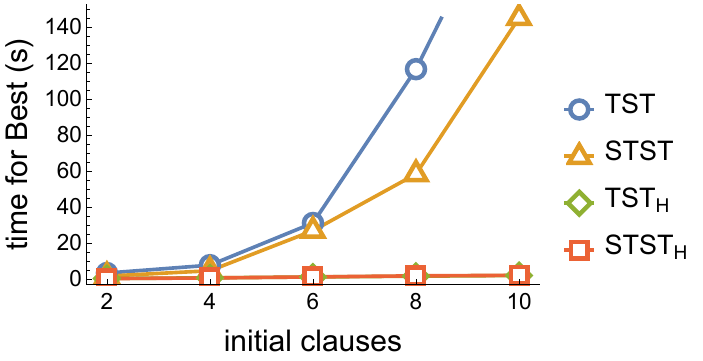}\label{fig:esperimenti_risultati_figure_HeuristicSortGOODRUNSVaryingPrefsTimes}}%
\\[-2ex]
\caption{Synthetic datasets: conflicting preferences, varying the number of input clauses $\clauses$ (only good runs).}\label{fig:esperimenti_risultati_figure_HeuristicSortGOODRUNSVaryingPrefs}%
\end{figure}

We now consider the heuristics described in Section~\ref{sec:algorithm}, which sorts the t-relation according to increasing height index values.
Figure~\ref{fig:esperimenti_risultati_figure_HeuristicSortGOODRUNSVaryingPrefsTimes} compares times obtained with the heuristic sort strategy (marked with an $H$ subscript) to those obtained with no heuristics as the number of input clauses $\clauses$ varies.
The sort takes between 3\% and 10\% of the total time spent for computing $\Best$, yet the use of the proposed heuristics largely outperforms standard executions, with times never exceeding $2s$; without the heuristics, times diverge to well over $100s$, on average, in the more expensive scenarios.

\begin{figure}%
\centering
\subfloat[][{Cardinality of $\Best$.}]
{\includegraphics[width=0.48\columnwidth]{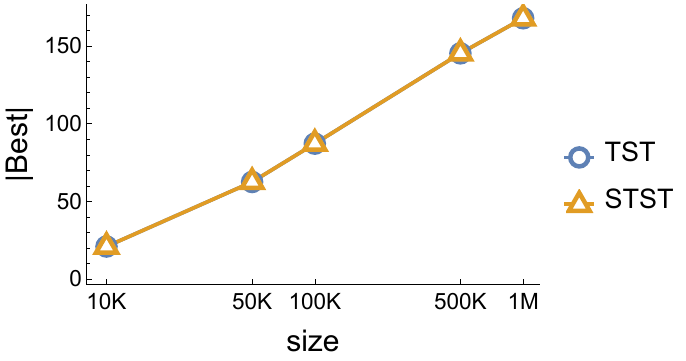}\label{fig:esperimenti_risultati_figure_VaryingSize1Pair-HS-Best}}%
\quad
\subfloat[][{Time for computing $\Best$ with heuristic sort (H).}]
{\includegraphics[width=0.48\columnwidth]{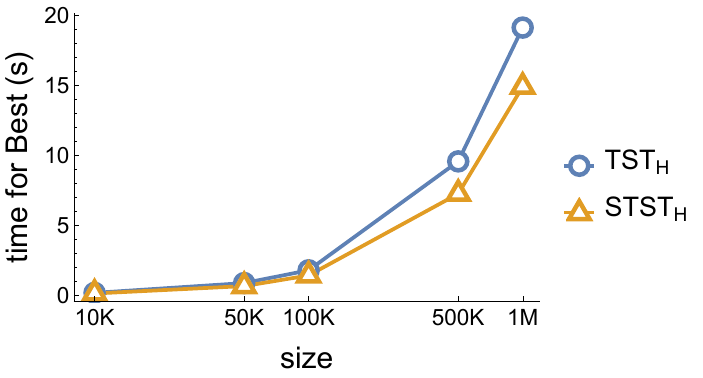}\label{fig:esperimenti_risultati_figure_VaryingSize1Pair-HS-BestTimes}}%
\\[-2ex]
\caption{Synthetic datasets: varying the dataset size $\size$ (only good runs, size in logarithmic scale).}\label{fig:esperimenti_risultati_figure_VaryingSize1Pair-HS}%
\end{figure}

Having ascertained the suitability of the heuristic sort, we demonstrate its scalability with the experiment shown in Figure~\ref{fig:esperimenti_risultati_figure_VaryingSize1Pair-HS}, which shows a linear trend for times as the size $\size$ of the dataset varies, while  cardinalities tend to grow logarithmically.

\begin{figure}%
\centering
\subfloat[][{Cardinality of $\Best$.}]
{\includegraphics[width=0.48\columnwidth]{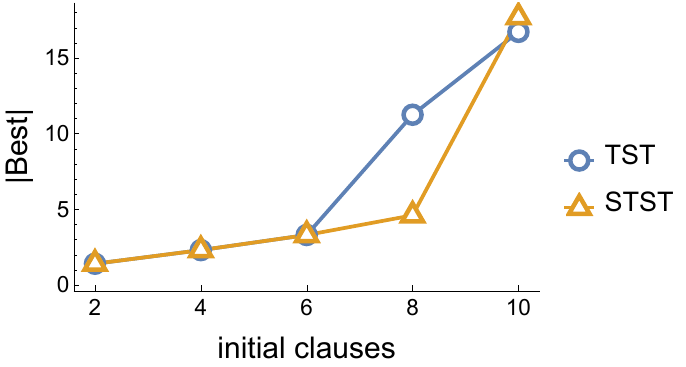}\label{fig:esperimenti_risultati_figure_FlipkartHeuristicSortGOODRUNSVaryingPairsBest}}%
\quad
\subfloat[][{Time for computing $\Best$ with heuristic sort (H).}]
{\includegraphics[width=0.48\columnwidth]{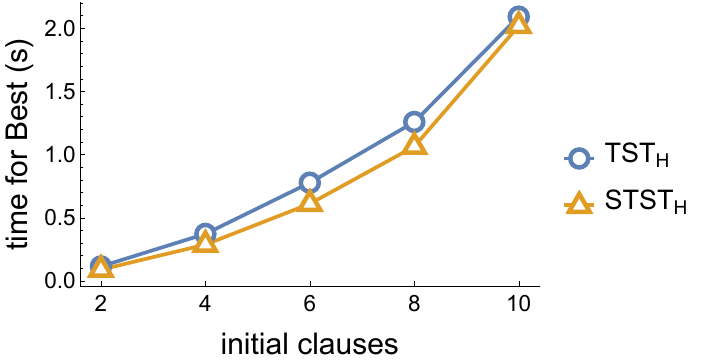}\label{fig:esperimenti_risultati_figure_FlipkartHeuristicSortGOODRUNSVaryingPairsTimes}}%
\\[-2ex]
\caption{\flipkart: conflicting preferences, varying the number of input clauses $\clauses$ (only good runs).}\label{fig:esperimenti_risultati_figure_FlipkartHeuristicSortGOODRUNSVaryingPairs}%
\end{figure}
\begin{figure}%
\centering
\subfloat[][{Cardinality of $\Best$.}]
{\includegraphics[width=0.48\columnwidth]{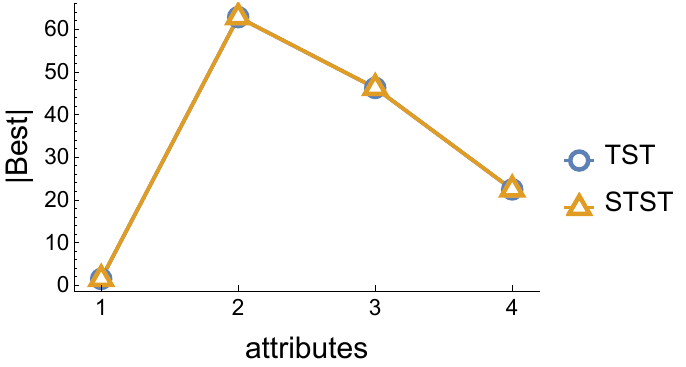}\label{fig:esperimenti_risultati_figure_FlipkartContextualHeuristicSortGOODRUNSVaryingAttributesBest}}%
\quad
\subfloat[][{Time for computing $\Best$ with heuristic sort (H).}]
{\includegraphics[width=0.48\columnwidth]{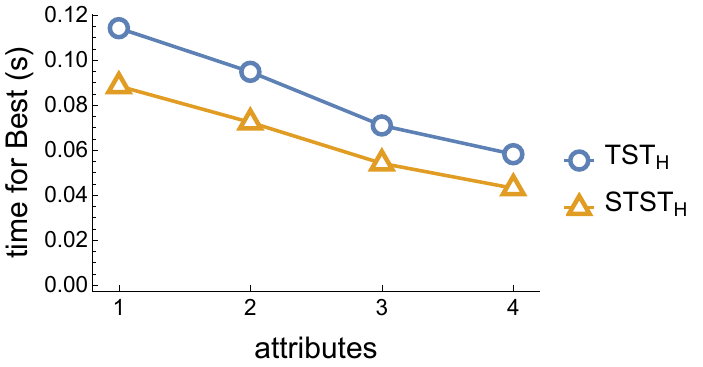}\label{fig:esperimenti_risultati_figure_FlipkartContextualHeuristicSortGOODRUNSVaryingAttributesTimes}}%
\\[-2ex]
\caption{\flipkart: contextual preferences, varying the number of attributes $\attributes$ (only good runs).}\label{fig:esperimenti_risultati_figure_FlipkartContextualHeuristicSortGOODRUNSVaryingAttributes}%
\end{figure}

The trends shown with synthetic data are confirmed with real data on \flipkart.
Figure~\ref{fig:esperimenti_risultati_figure_FlipkartHeuristicSortGOODRUNSVaryingPairsBest} shows that the cardinality of $\Best$ typically grows as the number of input clauses grows.
Consequently, Figure~\ref{fig:esperimenti_risultati_figure_FlipkartHeuristicSortGOODRUNSVaryingPairsTimes} shows times slightly growing with the number of input clauses, but always under $2.1s$.
The case of contextual preferences is shown in Figure~\ref{fig:esperimenti_risultati_figure_FlipkartContextualHeuristicSortGOODRUNSVaryingAttributes}, where times decrease as the number of attributes grows, since the number of relevant t-tuples decreases with the number of applied contexts.
For the same reason, the cardinality of $\Best$ is higher with 2 or 3 attributes than with 4; however, with only 1 attribute (and thus no context) the cardinality is the lowest, since the t-tuples satisfying the most specific preference are not filtered out by contexts. 

\begin{figure}
    \centering
    \begin{tabular}{cc}
    \adjustbox{valign=b}{\subfloat[User interface.\label{fig:interface}]{%
          \includegraphics[trim={0 0 0 5.5cm},clip,width=0.45\columnwidth]{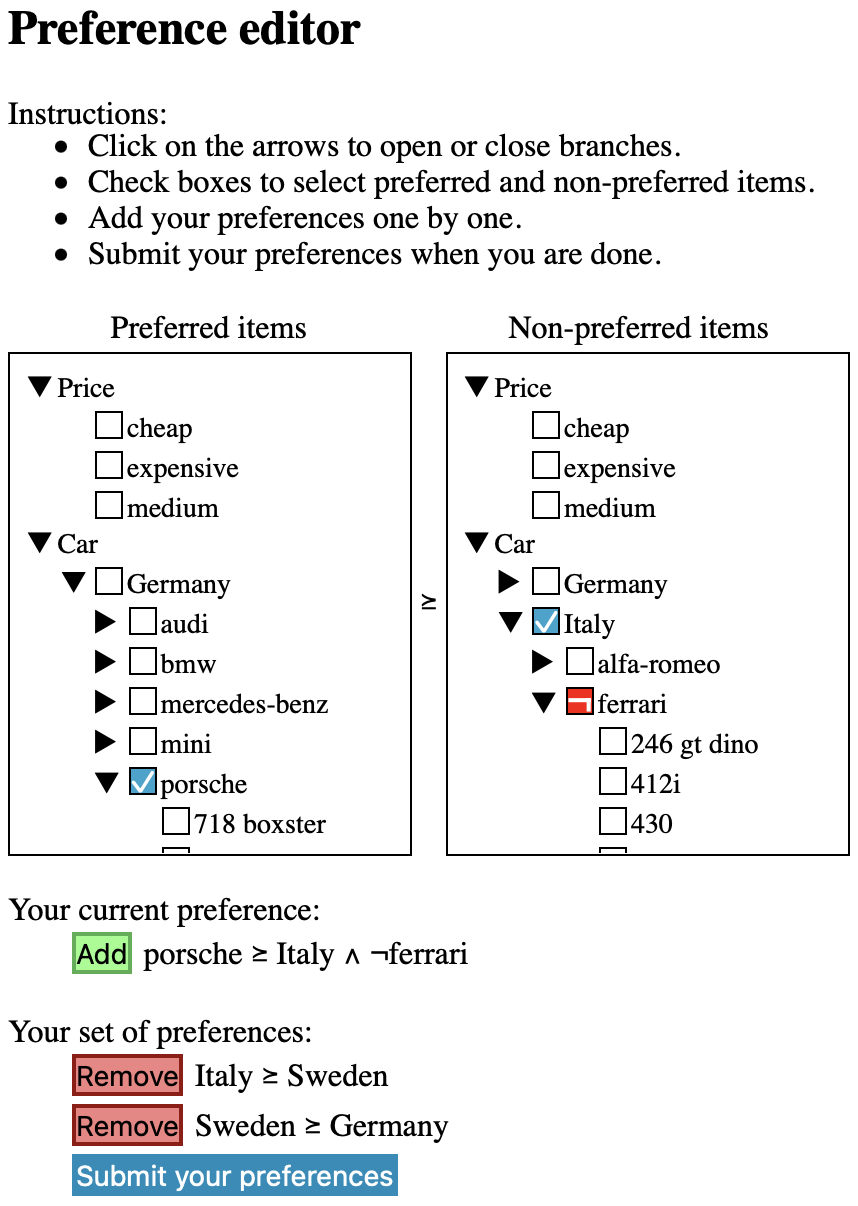}}}
    &
    \adjustbox{valign=b}{\begin{tabular}{@{}c@{}}
    \subfloat[Cardinality of $\Best$.\label{fig:SmallCarsCards}]{%
          \includegraphics[width=0.45\columnwidth]{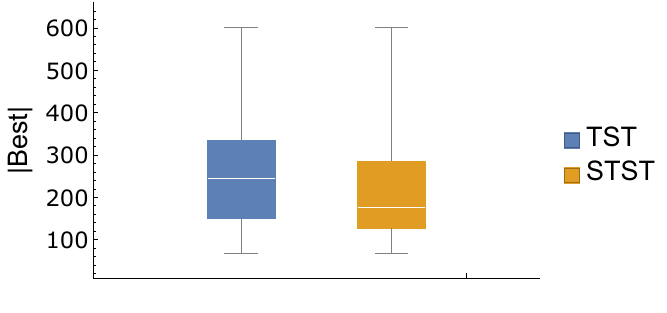}} \\[-2ex]
    \subfloat[Time for computing $\Best$.\label{fig:SmallCarsTimes}]{%
          \includegraphics[width=0.45\columnwidth]{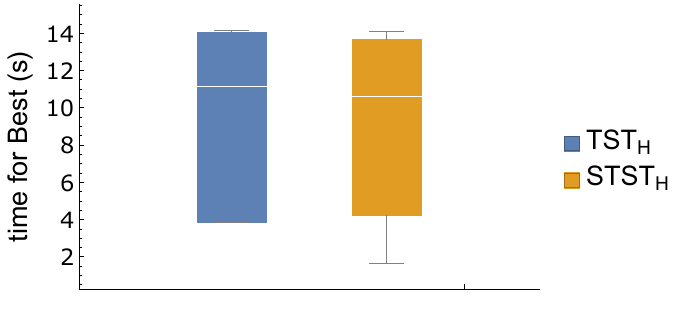}}
    \end{tabular}}
    \end{tabular}
    \caption{User interface and experiments on \usedcars.}\label{fig:usedcars}
\end{figure}

For experiments on \usedcars, we collected preferences from a set of 107 users by means of a Web interface allowing the specification of statements in the simplified notation presented in Section~\ref{sec:preliminaries} through an expandable tree-view of the taxonomy
 (see Figure~\ref{fig:interface}).
After instructing users on how to specify preferences (even conflicting ones), we observed an average of $3.4$ statements (from 2 to 9) per query and as many as 78\% of cases of conflicts.
Figure~\ref{fig:SmallCarsCards} shows box plots representing the distributions of cardinalities of $\Best$ obtained with user-defined preferences, which confirms that $\Sop\Top\Sop\Top$ tends to produce slightly smaller results than $\Top\Sop\Top$.
We observe that such cardinalities, typically corresponding to the number of cars available for a specific model and price range, are very low with respect to the dataset size (and could be further reduced if filters based on other criteria, such as mileage, were applied).
Execution times (Figure~\ref{fig:SmallCarsTimes}) are, on average, below $10s$ for both sequences, and thus
overall acceptable and comparable with the measurements obtained with similarly sized synthetic data (Figure~\ref{fig:esperimenti_risultati_figure_VaryingSize1Pair-HS-BestTimes}).

\section{Related Works and Discussion}\label{sec:related}

In spite of the many works on the use of \emph{qualitative} preferences for querying databases (see, e.g., \cite{DBLP:journals/tods/StefanidisKP11}), only a few address the issues arising when attributes' domains exhibit a hierarchical structure.

Preferences in OLAP systems are considered in \cite{DBLP:journals/tkde/GolfarelliRB11}, where an algebraic language, based on that in \cite{Kie02},
is adopted. Preferences on attributes are only of an \emph{absolute} type, stating which are the most (resp.\ least) preferred values at a given ``level'' of a dimensional attribute. Preferences are then propagated along levels, with no concern for the combination of preferences, less so conflicting ones.

Lukasiewicz et al. \cite{DBLP:conf/ijcai/LukasiewiczMS13}
extend the Datalog+/- ontological language with qualitative preferences, yet they do not address the problems arising from conflicting preferences. In a subsequent work \cite{DBLP:journals/jodsn/LukasiewiczMST15}, the authors assume that, besides the order generated by the preferences, another linear order exists, originating from probabilistic scores attached to specific objects. Since the two orders may conflict, ad-hoc operators for compromising among the two orders are introduced and evaluated. Although \cite{DBLP:journals/jodsn/LukasiewiczMST15} considers conflicts, these are \emph{not} among preferences and their solutions are not applicable to the scenario we consider in this paper.

To the best of our knowledge, no other work addresses the exact same issues we tackle here. Yet, Section~\ref{sec:best-experiments} has shown how existing methods (those that just enforce transitivity as well as works on preference evaluation using no rewritings) would be unsuitable to meet the goals we set in this paper.

The specificity principle on which we have based the definition of our \Sop\ operator follows a long-standing tradition in the AI and KR fields,
in which conflicts arising from contradictory evidences (antecedents) are solved by means of non-monotonic reasoning.
However, in this context, the issue of 
 inheritance of properties, which can be dealt with in different ways according to the adopted reasoning theory (see, e.g., \cite{Horty94}), leads to problems that are quite different from those we have considered in this paper.

The need to address conflicts arising from preferences was also observed in~\cite{DBLP:conf/er/CiacciaMT19}. The framework proposed there allows for a restricted form of taxonomies (with all values organized into distinct, named levels) and hints at an ad hoc procedure with very limited support for conflict resolution;  the focus of~\cite{DBLP:conf/er/CiacciaMT19} is, however, on the downward propagation of preferences.

A kind of specificity principle was also considered in \cite{DBLP:journals/jacm/CiacciaMT20}, albeit on a different preference model (using strict rather than weak preferences) and a different scenario, in which preferences are to be combined across different \emph{contexts}~\cite{DBLP:conf/fqas/MartinenghiT09}. In that work, given two conflicting preferences, e.g., $a \pref b$, which is valid in a context $c$, and $b \pref a$ valid in context $c'$, if context $c$ is more specific than $c'$ then $a \pref b$ wins and $b \pref a$ is discarded. Thus, specificity considered in \cite{DBLP:journals/jacm/CiacciaMT20} concerns contexts,
whereas, in the present paper, specificity has to do with preference statements that involve values at different levels of detail in the taxonomies.
Conflicts in~\cite{DBLP:journals/jacm/CiacciaMT20} are at the level of a single pair of objects (since no language for specifying preferences was considered there), whereas in the present work we deal with conflicts between \emph{preference statements}, which in general involve many pairs of objects - a fact that requires a solution incomparable with those adopted in~\cite{DBLP:journals/jacm/CiacciaMT20}.

A line of research that is only apparently related to ours concerns the problem of propagating preferences across the nodes/terms of an ontology, see, e.g., 
\cite{DBLP:journals/isci/CenaLO13,DBLP:conf/ausai/ChamielP08,DBLP:journals/jiis/Martinez-Garcia19}. Given ``interest scores'' attached to some terms, these works focus on (numerical) methods to combine and propagate such scores to ``similar'' terms in the ontology.

A definitely relevant issue, orthogonal to our focus and thus outside the scope of this paper, is that of \emph{preference elicitation}. This problem has been thoroughly studied in various fields, such as Recommender Systems, decision making, marketing, and behavioral economics, with remarkable recent attention on \emph{relative preferences}, either expressed with pairwise comparisons or inferred from absolute preferences~\cite{DBLP:conf/recsys/Kalloori0G18,DBLP:conf/sigir/KallooriL019}.

Common methods to solve conflicts among preferences are based on the use of
 operators, the most well-known being Pareto and Prioritized composition \cite{Ch03,Kie02,DBLP:journals/jacm/CiacciaMT20}. Given a conflict between $a$ and $b$ originating from two different preference statements, Pareto composition just drops both preferences $a \pref b$ and $b \pref a$. Conversely, Prioritized composition \emph{a priori} assumes that one of the two statements is more important than the other, and then solves the conflict by retaining the corresponding preference. We have no such a-priori notion of priority, which might be hard to define in practice; rather, we rely on a definition of specificity that \emph{dynamically} determines if a statement takes precedence over another depending on the available taxonomies.

Many algorithms have been devised to answer preference queries, although most of them work only for numerical attributes \cite{KoIo04}. Among the algorithms that can be applied to arbitrary strict partial orders $\pref$, BNL \cite{BKS01} is undoubtedly the most well-known among those that compute the result sets by means of dominance tests. Improvements to the BNL logic, such as those found in the SFS \cite{DBLP:conf/icde/ChomickiGGL03} and SaLSa \cite{BaCP08} algorithms, require the input relation to be topologically sorted, which in these algorithms is based on the presence of numerical attributes.
A different approach, pioneered in \cite{DBLP:conf/icde/GeorgiadisKCNS08}, avoids (most of the) dominance tests by partitioning the domain of (relevant) tuples into a set of equivalence classes, where each class includes all and only those tuples whose values are the best for a subset of the input preference statements. For instance, a statement like 
$(A_i = v) \wpref (A_i = v')$ induces two equivalence classes, the first including all tuples with value $v$ for attribute $A_i$, and the second those with value $v'$. For each equivalence class a different SQL query is then executed, until it is guaranteed that no further optimal tuples exist. However, since the number of equivalence classes is exponential in the number of input statements, this approach cannot be adopted in our framework, in which the rewritten formula to be evaluated, due to the transitive closure operator, can well contain tens of statements.

Common practice typically focuses on the specification of \emph{quantitative} preferences, for instance by means of a function expressing a score based on the attribute values, as is commonly done in top-$k$ queries~\cite{DBLP:journals/csur/IlyasBS08,DBLP:journals/pvldb/MartinenghiT10,DBLP:journals/tkde/MartinenghiT12}. Recent works have tried to combine the qualitative nature of (Pareto) dominance with the quantitative aspects of ranking~\cite{DBLP:journals/tods/CiacciaM20,DBLP:journals/pvldb/CiacciaM17,DBLP:conf/cikm/CiacciaM18,DBLP:conf/sebd/CiacciaM18,DBLP:conf/sisap/BedoCMO19,DBLP:conf/sebd/CiacciaM19,DBLP:conf/sigmod/MouratidisL021,CM:PACMMOD2024}.

The specification of preferences is sometimes expressed through constraints of a ``soft'' nature, i.e., which can be violated. It should be interesting to combine the effects of the specification of ``hard'' constraints, such as the integrity constraints of a database, commonly adopted for query optimization and integrity maintenance~\cite{M:PHD2005,DBLP:conf/lopstr/ChristiansenM03,DBLP:conf/foiks/ChristiansenM04,DBLP:conf/fqas/Martinenghi04,DBLP:conf/dexa/MartinenghiC05,DBLP:journals/aai/ChristiansenM00,DBLP:conf/dexaw/DeckerM07,DBLP:conf/lpar/DeckerM06,DBLP:conf/dexaw/DeckerM06,DBLP:journals/tplp/CaliM10}, or even of structural constraints governing access to data~\cite{CM:ICDE2008,DBLP:journals/tplp/CaliM10,DBLP:conf/er/CaliM08,CCM:JUCS2009}, with the techniques studied in this paper for retrieving the best options.
A more tolerant approach consists in coping with the presence of inconsistent or missing values~\cite{DBLP:conf/lpar/DeckerM06,DBLP:conf/dexaw/DeckerM06,DBLP:conf/dexaw/DeckerM07,DBLP:conf/ppdp/DeckerM08}; in such cases, it would be interesting to understand how the amount of such an inconsistency in the data~\cite{DBLP:journals/jiis/GrantH06,DBLP:conf/er/DeckerM09,DBLP:conf/ijcai/GrantH11,DBLP:conf/ecsqaru/GrantH13,DBLP:journals/ijar/GrantH17,DBLP:journals/ijar/GrantH23} may affect the results.

We also observe that preference elicitation and management are typical parts of data preparation pipelines, which might then involve data subsequently processed by Machine Learning algorithms~\cite{DBLP:conf/fqas/Masciari09,DBLP:journals/isci/MasciariMZ14,DBLP:conf/ismis/MasciariMPS20,DBLP:conf/ideas/MasciariMPS20} and retrieved from heterogeneous sources, including  RFID~\cite{DBLP:conf/ideas/FazzingaFMF09,DBLP:journals/tods/FazzingaFFM13}, pattern mining~\cite{DBLP:conf/ideas/MasciariGZ13}, crowdsourcing applications~\cite{DBLP:conf/socialcom/GalliFMTN12,DBLP:conf/www/BozzonCCFMT12,DBLP:conf/mmsys/LoniMGGMAMMVL13}, and streaming data~\cite{DBLP:journals/jiis/CostaMM14}.

\section{Conclusions}\label{sec:conclusions}

In this paper we have tackled the problem of finding the best elements from a repository on the basis of preferences referring to values that are more generic than the underlying data and may involve conflicts. To this aim, we have introduced and formally investigated two operators for enforcing, in a given collection of preferences, the properties of specificity, which can solve conflicts, and transitivity, which guarantees the soundness of the final result. We have then characterized the limitations that can arise from their combination and identified the best ways in which they can be used together. We have finally proposed a technique based on an original heuristics for selecting the best results associated with given sequences of operators and shown, with a number of experiments over both synthetic and real-world datasets, the effectiveness and practical feasibility of the overall approach.
Future work includes extending our framework to more general scenarios in which domain values are connected by ontological relationships, as is the case in Ontology-Based Data Access~\cite{DBLP:conf/ijcai/XiaoCKLPRZ18}.

\bibliographystyle{plain}

\begin{thebibliography}{10}

\bibitem{BaCP08}
Ilaria Bartolini, Paolo Ciaccia, and Marco Patella.
\newblock Efficient sort-based skyline evaluation.
\newblock {\em {ACM} Trans. Database Syst.}, 33(4):31:1--31:49, 2008.

\bibitem{DBLP:conf/sisap/BedoCMO19}
Marcos V.~N. Bedo, Paolo Ciaccia, Davide Martinenghi, and Daniel de~Oliveira.
\newblock A k-skyband approach for feature selection.
\newblock In Giuseppe Amato, Claudio Gennaro, Vincent Oria, and Milos
  Radovanovic, editors, {\em Similarity Search and Applications - 12th
  International Conference, {SISAP} 2019, Newark, NJ, USA, October 2-4, 2019,
  Proceedings}, volume 11807 of {\em Lecture Notes in Computer Science}, pages
  160--168. Springer, 2019.

\bibitem{BKS01}
Stephan B{\"{o}}rzs{\"{o}}nyi, Donald Kossmann, and Konrad Stocker.
\newblock The skyline operator.
\newblock In {\em Proceedings of the 17th International Conference on Data
  Engineering, April 2-6, 2001, Heidelberg, Germany}, pages 421--430, 2001.

\bibitem{DBLP:conf/www/BozzonCCFMT12}
Alessandro Bozzon, Ilio Catallo, Eleonora Ciceri, Piero Fraternali, Davide
  Martinenghi, and Marco Tagliasacchi.
\newblock A framework for crowdsourced multimedia processing and querying.
\newblock In {\em Proceedings of the First International Workshop on
  Crowdsourcing Web Search, Lyon, France, April 17, 2012}, pages 42--47, 2012.

\bibitem{CCM:JUCS2009}
Andrea Cal{\`{i}}, Diego Calvanese, and Davide Martinenghi.
\newblock {Dynamic Query Optimization under Access Limitations and
  Dependencies}.
\newblock {\em Journal of Universal Computer Science}, 15(21):33--62, 2009.
\newblock (SJR: Q2).

\bibitem{DBLP:conf/er/CaliM08}
Andrea Cal{\`{\i}} and Davide Martinenghi.
\newblock Conjunctive query containment under access limitations.
\newblock In Qing Li, Stefano Spaccapietra, Eric S.~K. Yu, and Antoni
  Oliv{\'{e}}, editors, {\em Conceptual Modeling - {ER} 2008, 27th
  International Conference on Conceptual Modeling, Barcelona, Spain, October
  20-24, 2008. Proceedings}, volume 5231 of {\em Lecture Notes in Computer
  Science}, pages 326--340. Springer, 2008.

\bibitem{CM:ICDE2008}
Andrea Cal{\`{i}} and Davide Martinenghi.
\newblock {Querying Data under Access Limitations}.
\newblock In {\em Proceedings of the 24th International Conference on Data
  Engineering, ICDE 2008, April 7-12, 2008, Canc{\'u}n, M{\'e}xico}, pages
  50--59, 2008.
\newblock (GGS: A++).

\bibitem{DBLP:journals/tplp/CaliM10}
Andrea Cal{\`{\i}} and Davide Martinenghi.
\newblock Querying incomplete data over extended {ER} schemata.
\newblock {\em Theory Pract. Log. Program.}, 10(3):291--329, 2010.

\bibitem{DBLP:journals/isci/CenaLO13}
Federica Cena, Silvia Likavec, and Francesco Osborne.
\newblock Anisotropic propagation of user interests in ontology-based user
  models.
\newblock {\em Inf. Sci.}, 250:40--60, 2013.

\bibitem{DBLP:conf/ausai/ChamielP08}
Gil Chamiel and Maurice Pagnucco.
\newblock Exploiting ontological structure for complex preference assembly.
\newblock In Wayne Wobcke and Mengjie Zhang, editors, {\em {AI} 2008: Advances
  in Artificial Intelligence, 21st Australasian Joint Conference on Artificial
  Intelligence, Auckland, New Zealand, December 1-5, 2008. Proceedings}, volume
  5360 of {\em Lecture Notes in Computer Science}, pages 86--92. Springer,
  2008.

\bibitem{DBLP:conf/sigmod/ChanJTTZ06}
Chee~Yong Chan, H.~V. Jagadish, Kian{-}Lee Tan, Anthony K.~H. Tung, and Zhenjie
  Zhang.
\newblock Finding k-dominant skylines in high dimensional space.
\newblock In Surajit Chaudhuri, Vagelis Hristidis, and Neoklis Polyzotis,
  editors, {\em Proceedings of the {ACM} {SIGMOD} International Conference on
  Management of Data, Chicago, Illinois, USA, June 27-29, 2006}, pages
  503--514. {ACM}, 2006.

\bibitem{Ch03}
Jan Chomicki.
\newblock Preference formulas in relational queries.
\newblock {\em {ACM} Trans. Database Syst.}, 28(4):427--466, 2003.

\bibitem{DBLP:conf/icde/ChomickiGGL03}
Jan Chomicki, Parke Godfrey, Jarek Gryz, and Dongming Liang.
\newblock Skyline with presorting.
\newblock In Umeshwar Dayal, Krithi Ramamritham, and T.~M. Vijayaraman,
  editors, {\em Proceedings of the 19th International Conference on Data
  Engineering, March 5-8, 2003, Bangalore, India}, pages 717--719. {IEEE}
  Computer Society, 2003.

\bibitem{DBLP:journals/aai/ChristiansenM00}
Henning Christiansen and Davide Martinenghi.
\newblock Symbolic constraints for meta-logic programming.
\newblock {\em Appl. Artif. Intell.}, 14(4):345--367, 2000.

\bibitem{DBLP:conf/lopstr/ChristiansenM03}
Henning Christiansen and Davide Martinenghi.
\newblock Simplification of database integrity constraints revisited: {A}
  transformational approach.
\newblock In Maurice Bruynooghe, editor, {\em Logic Based Program Synthesis and
  Transformation, 13th International Symposium {LOPSTR} 2003, Uppsala, Sweden,
  August 25-27, 2003, Revised Selected Papers}, volume 3018 of {\em Lecture
  Notes in Computer Science}, pages 178--197. Springer, 2003.

\bibitem{DBLP:conf/foiks/ChristiansenM04}
Henning Christiansen and Davide Martinenghi.
\newblock Simplification of integrity constraints for data integration.
\newblock In Dietmar Seipel and Jose Maria~Turull Torres, editors, {\em
  Foundations of Information and Knowledge Systems, Third International
  Symposium, FoIKS 2004, Wilhelminenberg Castle, Austria, February 17-20, 2004,
  Proceedings}, volume 2942 of {\em Lecture Notes in Computer Science}, pages
  31--48. Springer, 2004.

\bibitem{DBLP:journals/pvldb/CiacciaM17}
Paolo Ciaccia and Davide Martinenghi.
\newblock Reconciling skyline and ranking queries.
\newblock {\em {PVLDB}}, 10(11):1454--1465, 2017.

\bibitem{DBLP:conf/sebd/CiacciaM18}
Paolo Ciaccia and Davide Martinenghi.
\newblock Beyond skyline and ranking queries: Restricted skylines (extended
  abstract).
\newblock In Sonia Bergamaschi, Tommaso~Di Noia, and Andrea Maurino, editors,
  {\em Proceedings of the 26th Italian Symposium on Advanced Database Systems,
  Castellaneta Marina (Taranto), Italy, June 24-27, 2018}, volume 2161 of {\em
  {CEUR} Workshop Proceedings}. CEUR-WS.org, 2018.

\bibitem{DBLP:conf/cikm/CiacciaM18}
Paolo Ciaccia and Davide Martinenghi.
\newblock {FA} + {TA} {\textless} {FSA}: Flexible score aggregation.
\newblock In {\em Proceedings of the 27th {ACM} International Conference on
  Information and Knowledge Management, {CIKM} 2018, Torino, Italy, October
  22-26, 2018}, pages 57--66, 2018.

\bibitem{DBLP:conf/sebd/CiacciaM19}
Paolo Ciaccia and Davide Martinenghi.
\newblock Flexible score aggregation (extended abstract).
\newblock In Massimo Mecella, Giuseppe Amato, and Claudio Gennaro, editors,
  {\em Proceedings of the 27th Italian Symposium on Advanced Database Systems,
  Castiglione della Pescaia (Grosseto), Italy, June 16-19, 2019}, volume 2400
  of {\em {CEUR} Workshop Proceedings}. CEUR-WS.org, 2019.

\bibitem{DBLP:journals/tods/CiacciaM20}
Paolo Ciaccia and Davide Martinenghi.
\newblock Flexible skylines: Dominance for arbitrary sets of monotone
  functions.
\newblock {\em {ACM} Trans. Database Syst.}, 45(4):18:1--18:45, 2020.

\bibitem{CM:PACMMOD2024}
Paolo Ciaccia and Davide Martinenghi.
\newblock {Directional Queries: Making Top-k Queries More Effective in
  Discovering Relevant Results}.
\newblock {\em Proc. {ACM} Manag. Data}, 2(6), 2024.
\newblock (GGS: A++).

\bibitem{DBLP:conf/er/CiacciaMT19}
Paolo Ciaccia, Davide Martinenghi, and Riccardo Torlone.
\newblock Finding preferred objects with taxonomies.
\newblock In Alberto H.~F. Laender, Barbara Pernici, Ee{-}Peng Lim, and
  Jos{\'{e}} Palazzo~M. de~Oliveira, editors, {\em Conceptual Modeling - 38th
  International Conference, {ER} 2019, Salvador, Brazil, November 4-7, 2019,
  Proceedings}, volume 11788 of {\em Lecture Notes in Computer Science}, pages
  397--411. Springer, 2019.

\bibitem{DBLP:journals/jacm/CiacciaMT20}
Paolo Ciaccia, Davide Martinenghi, and Riccardo Torlone.
\newblock Foundations of context-aware preference propagation.
\newblock {\em J. {ACM}}, 67(1):4:1--4:43, 2020.

\bibitem{DBLP:journals/pvldb/CiacciaMT21}
Paolo Ciaccia, Davide Martinenghi, and Riccardo Torlone.
\newblock Preference queries over taxonomic domains.
\newblock {\em Proc. {VLDB} Endow.}, 14(10):1859--1871, 2021.

\bibitem{DBLP:journals/jiis/CostaMM14}
Gianni Costa, Giuseppe Manco, and Elio Masciari.
\newblock Dealing with trajectory streams by clustering and mathematical
  transforms.
\newblock {\em J. Intell. Inf. Syst.}, 42(1):155--177, 2014.

\bibitem{DBLP:conf/dexaw/DeckerM06}
Hendrik Decker and Davide Martinenghi.
\newblock Avenues to flexible data integrity checking.
\newblock In {\em 17th International Workshop on Database and Expert Systems
  Applications {(DEXA} 2006), 4-8 September 2006, Krakow, Poland}, pages
  425--429. {IEEE} Computer Society, 2006.

\bibitem{DBLP:conf/lpar/DeckerM06}
Hendrik Decker and Davide Martinenghi.
\newblock A relaxed approach to integrity and inconsistency in databases.
\newblock In Miki Hermann and Andrei Voronkov, editors, {\em Logic for
  Programming, Artificial Intelligence, and Reasoning, 13th International
  Conference, {LPAR} 2006, Phnom Penh, Cambodia, November 13-17, 2006,
  Proceedings}, volume 4246 of {\em Lecture Notes in Computer Science}, pages
  287--301. Springer, 2006.

\bibitem{DBLP:conf/dexaw/DeckerM07}
Hendrik Decker and Davide Martinenghi.
\newblock Getting rid of straitjackets for flexible integrity checking.
\newblock In {\em 18th International Workshop on Database and Expert Systems
  Applications {(DEXA} 2007), 3-7 September 2007, Regensburg, Germany}, pages
  360--364. {IEEE} Computer Society, 2007.

\bibitem{DBLP:conf/ppdp/DeckerM08}
Hendrik Decker and Davide Martinenghi.
\newblock Classifying integrity checking methods with regard to inconsistency
  tolerance.
\newblock In Sergio Antoy and Elvira Albert, editors, {\em Proceedings of the
  10th International {ACM} {SIGPLAN} Conference on Principles and Practice of
  Declarative Programming, July 15-17, 2008, Valencia, Spain}, pages 195--204.
  {ACM}, 2008.

\bibitem{DBLP:conf/er/DeckerM09}
Hendrik Decker and Davide Martinenghi.
\newblock Modeling, measuring and monitoring the quality of information.
\newblock In Carlos~A. Heuser and G{\"{u}}nther Pernul, editors, {\em Advances
  in Conceptual Modeling - Challenging Perspectives, {ER} 2009 Workshops CoMoL,
  ETheCoM, FP-UML, MOST-ONISW, QoIS, RIGiM, SeCoGIS, Gramado, Brazil, November
  9-12, 2009. Proceedings}, volume 5833 of {\em Lecture Notes in Computer
  Science}, pages 212--221. Springer, 2009.

\bibitem{DBLP:journals/tods/FazzingaFFM13}
Bettina Fazzinga, Sergio Flesca, Filippo Furfaro, and Elio Masciari.
\newblock Rfid-data compression for supporting aggregate queries.
\newblock {\em {ACM} Trans. Database Syst.}, 38(2):11, 2013.

\bibitem{DBLP:conf/ideas/FazzingaFMF09}
Bettina Fazzinga, Sergio Flesca, Elio Masciari, and Filippo Furfaro.
\newblock Efficient and effective {RFID} data warehousing.
\newblock In Bipin~C. Desai, Domenico Sacc{\`{a}}, and Sergio Greco, editors,
  {\em International Database Engineering and Applications Symposium {(IDEAS}
  2009), September 16-18, 2009, Cetraro, Calabria, Italy}, {ACM} International
  Conference Proceeding Series, pages 251--258, 2009.

\bibitem{DBLP:conf/socialcom/GalliFMTN12}
Luca Galli, Piero Fraternali, Davide Martinenghi, Marco Tagliasacchi, and
  Jasminko Novak.
\newblock A draw-and-guess game to segment images.
\newblock In {\em 2012 International Conference on Privacy, Security, Risk and
  Trust, PASSAT 2012, and 2012 International Conference on Social Computing,
  SocialCom 2012, Amsterdam, Netherlands, September 3-5, 2012}, pages 914--917,
  2012.

\bibitem{DBLP:conf/icde/GeorgiadisKCNS08}
Periklis Georgiadis, Ioannis Kapantaidakis, Vassilis Christophides,
  Elhadji~Mamadou Nguer, and Nicolas Spyratos.
\newblock Efficient rewriting algorithms for preference queries.
\newblock In Gustavo Alonso, Jos{\'{e}}~A. Blakeley, and Arbee L.~P. Chen,
  editors, {\em Proceedings of the 24th International Conference on Data
  Engineering, {ICDE} 2008, April 7-12, 2008, Canc{\'{u}}n, Mexico}, pages
  1101--1110. {IEEE} Computer Society, 2008.

\bibitem{DBLP:journals/vldb/GodfreySG07}
Parke Godfrey, Ryan Shipley, and Jarek Gryz.
\newblock Algorithms and analyses for maximal vector computation.
\newblock {\em {VLDB} J.}, 16(1):5--28, 2007.

\bibitem{DBLP:journals/tkde/GolfarelliRB11}
Matteo Golfarelli, Stefano Rizzi, and Paolo Biondi.
\newblock myolap: An approach to express and evaluate {OLAP} preferences.
\newblock {\em {IEEE} Trans. Knowl. Data Eng.}, 23(7):1050--1064, 2011.

\bibitem{DBLP:journals/jiis/GrantH06}
John Grant and Anthony Hunter.
\newblock Measuring inconsistency in knowledgebases.
\newblock {\em J. Intell. Inf. Syst.}, 27(2):159--184, 2006.

\bibitem{DBLP:conf/ijcai/GrantH11}
John Grant and Anthony Hunter.
\newblock Measuring the good and the bad in inconsistent information.
\newblock In Toby Walsh, editor, {\em {IJCAI} 2011, Proceedings of the 22nd
  International Joint Conference on Artificial Intelligence, Barcelona,
  Catalonia, Spain, July 16-22, 2011}, pages 2632--2637. {IJCAI/AAAI}, 2011.

\bibitem{DBLP:conf/ecsqaru/GrantH13}
John Grant and Anthony Hunter.
\newblock Distance-based measures of inconsistency.
\newblock In Linda~C. van~der Gaag, editor, {\em Symbolic and Quantitative
  Approaches to Reasoning with Uncertainty - 12th European Conference,
  {ECSQARU} 2013, Utrecht, The Netherlands, July 8-10, 2013. Proceedings},
  volume 7958 of {\em Lecture Notes in Computer Science}, pages 230--241.
  Springer, 2013.

\bibitem{DBLP:journals/ijar/GrantH17}
John Grant and Anthony Hunter.
\newblock Analysing inconsistent information using distance-based measures.
\newblock {\em Int. J. Approx. Reason.}, 89:3--26, 2017.

\bibitem{DBLP:journals/ijar/GrantH23}
John Grant and Anthony Hunter.
\newblock Semantic inconsistency measures using 3-valued logics.
\newblock {\em Int. J. Approx. Reason.}, 156:38--60, 2023.

\bibitem{Horty94}
John~F. Horty.
\newblock Some direct theories of nonmonotonic inheritance.
\newblock In {\em Handbook of Logic in Artificial Intelligence and Logic
  Programming (Vol. 3): Nonmonotonic Reasoning and Uncertain Reasoning}, page
  111–187, USA, 1994. Oxford University Press, Inc.

\bibitem{DBLP:journals/csur/IlyasBS08}
Ihab~F. Ilyas, George Beskales, and Mohamed~A. Soliman.
\newblock A survey of top-\emph{k} query processing techniques in relational
  database systems.
\newblock {\em {ACM} Comput. Surv.}, 40(4):11:1--11:58, 2008.

\bibitem{DBLP:conf/sigmod/JinXRW08}
Ruoming Jin, Yang Xiang, Ning Ruan, and Haixun Wang.
\newblock Efficiently answering reachability queries on very large directed
  graphs.
\newblock In Jason~Tsong{-}Li Wang, editor, {\em Proceedings of the {ACM}
  {SIGMOD} International Conference on Management of Data, {SIGMOD} 2008,
  Vancouver, BC, Canada, June 10-12, 2008}, pages 595--608. {ACM}, 2008.

\bibitem{DBLP:conf/sigir/KallooriL019}
Saikishore Kalloori, Tianyu Li, and Francesco Ricci.
\newblock Item recommendation by combining relative and absolute feedback data.
\newblock In Benjamin Piwowarski, Max Chevalier, {\'{E}}ric Gaussier, Yoelle
  Maarek, Jian{-}Yun Nie, and Falk Scholer, editors, {\em Proceedings of the
  42nd International {ACM} {SIGIR} Conference on Research and Development in
  Information Retrieval, {SIGIR} 2019, Paris, France, July 21-25, 2019}, pages
  933--936. {ACM}, 2019.

\bibitem{DBLP:conf/recsys/Kalloori0G18}
Saikishore Kalloori, Francesco Ricci, and Rosella Gennari.
\newblock Eliciting pairwise preferences in recommender systems.
\newblock In Sole Pera, Michael~D. Ekstrand, Xavier Amatriain, and John
  O'Donovan, editors, {\em Proceedings of the 12th {ACM} Conference on
  Recommender Systems, RecSys 2018, Vancouver, BC, Canada, October 2-7, 2018},
  pages 329--337. {ACM}, 2018.

\bibitem{Kie02}
Werner Kie{\ss}ling.
\newblock Foundations of preferences in database systems.
\newblock In {\em {VLDB} 2002, Proceedings of 28th International Conference on
  Very Large Data Bases, August 20-23, 2002, Hong Kong, China}, pages 311--322,
  2002.

\bibitem{KoIo04}
Georgia Koutrika and Yannis~E. Ioannidis.
\newblock Personalization of queries in database systems.
\newblock In {\em Proceedings of the 20th International Conference on Data
  Engineering, {ICDE} 2004, 30 March - 2 April 2004, Boston, MA, {USA}}, pages
  597--608, 2004.

\bibitem{LiBe01}
G.S. Linoff and M.J.A. Berry.
\newblock {\em Mining the web: Transforming Customer Data into Customer Value}.
\newblock John Wiley \& Sons, New York, 2001.

\bibitem{DBLP:conf/mmsys/LoniMGGMAMMVL13}
Babak Loni, Maria Menendez, Mihai Georgescu, Luca Galli, Claudio Massari,
  Ismail~Seng{\"o}r Alting{\"o}vde, Davide Martinenghi, Mark Melenhorst, Raynor
  Vliegendhart, and Martha Larson.
\newblock Fashion-focused creative commons social dataset.
\newblock In {\em Multimedia Systems Conference 2013, MMSys '13, Oslo, Norway,
  February 27 - March 01, 2013}, pages 72--77, 2013.
\newblock (GGS: B-).

\bibitem{DBLP:journals/jodsn/LukasiewiczMST15}
Thomas Lukasiewicz, Maria~Vanina Martinez, Gerardo~I. Simari, and Oana
  Tifrea{-}Marciuska.
\newblock Preference-based query answering in probabilistic datalog+/-
  ontologies.
\newblock {\em J. Data Semant.}, 4(2):81--101, 2015.

\bibitem{DBLP:conf/ijcai/LukasiewiczMS13}
Thomas Lukasiewicz, Maria~Vanina Martinez, and Gerardo~Ignacio Simari.
\newblock Preference-based query answering in datalog+/- ontologies.
\newblock In Francesca Rossi, editor, {\em {IJCAI} 2013, Proceedings of the
  23rd International Joint Conference on Artificial Intelligence, Beijing,
  China, August 3-9, 2013}, pages 1017--1023. {IJCAI/AAAI}, 2013.

\bibitem{DBLP:conf/fqas/Martinenghi04}
Davide Martinenghi.
\newblock Simplification of integrity constraints with aggregates and
  arithmetic built-ins.
\newblock In Henning Christiansen, Mohand{-}Said Hacid, Troels Andreasen, and
  Henrik~Legind Larsen, editors, {\em Flexible Query Answering Systems, 6th
  International Conference, {FQAS} 2004, Lyon, France, June 24-26, 2004,
  Proceedings}, volume 3055 of {\em Lecture Notes in Computer Science}, pages
  348--361. Springer, 2004.

\bibitem{M:PHD2005}
Davide Martinenghi.
\newblock {\em Advanced Techniques for Efficient Data Integrity Checking}.
\newblock PhD thesis, Roskilde University, Dept. of Computer Science, Roskilde,
  Denmark, 2005.
\newblock Available in Datalogiske Skrifter, vol. 105, Roskilde University,
  Denmark.

\bibitem{DBLP:conf/dexa/MartinenghiC05}
Davide Martinenghi and Henning Christiansen.
\newblock Transaction management with integrity checking.
\newblock In Kim~Viborg Andersen, John~K. Debenham, and Roland~R. Wagner,
  editors, {\em Database and Expert Systems Applications, 16th International
  Conference, {DEXA} 2005, Copenhagen, Denmark, August 22-26, 2005,
  Proceedings}, volume 3588 of {\em Lecture Notes in Computer Science}, pages
  606--615. Springer, 2005.

\bibitem{DBLP:journals/pvldb/MartinenghiT10}
Davide Martinenghi and Marco Tagliasacchi.
\newblock Proximity rank join.
\newblock {\em Proc. {VLDB} Endow.}, 3(1):352--363, 2010.

\bibitem{DBLP:journals/tkde/MartinenghiT12}
Davide Martinenghi and Marco Tagliasacchi.
\newblock Cost-aware rank join with random and sorted access.
\newblock {\em {IEEE} Trans. Knowl. Data Eng.}, 24(12):2143--2155, 2012.

\bibitem{DBLP:conf/fqas/MartinenghiT09}
Davide Martinenghi and Riccardo Torlone.
\newblock Querying context-aware databases.
\newblock In Troels Andreasen, Ronald~R. Yager, Henrik Bulskov, Henning
  Christiansen, and Henrik~Legind Larsen, editors, {\em Flexible Query
  Answering Systems, 8th International Conference, {FQAS} 2009, Roskilde,
  Denmark, October 26-28, 2009. Proceedings}, volume 5822 of {\em Lecture Notes
  in Computer Science}, pages 76--87. Springer, 2009.

\bibitem{MT:VLDBJ2014}
Davide Martinenghi and Riccardo Torlone.
\newblock Taxonomy-based relaxation of query answering in relational databases.
\newblock {\em {VLDB} J.}, 23(5):747--769, 2014.

\bibitem{DBLP:journals/jiis/Martinez-Garcia19}
Miriam Mart{\'{\i}}nez{-}Garc{\'{\i}}a, A{\"{\i}}da Valls, and Antonio Moreno.
\newblock Inferring preferences in ontology-based recommender systems using
  {WOWA}.
\newblock {\em J. Intell. Inf. Syst.}, 52(2):393--423, 2019.

\bibitem{DBLP:conf/fqas/Masciari09}
Elio Masciari.
\newblock Trajectory clustering via effective partitioning.
\newblock In Troels Andreasen, Ronald~R. Yager, Henrik Bulskov, Henning
  Christiansen, and Henrik~Legind Larsen, editors, {\em Flexible Query
  Answering Systems, 8th International Conference, {FQAS} 2009, Roskilde,
  Denmark, October 26-28, 2009. Proceedings}, volume 5822 of {\em Lecture Notes
  in Computer Science}, pages 358--370, 2009.

\bibitem{DBLP:conf/ideas/MasciariGZ13}
Elio Masciari, Shi Gao, and Carlo Zaniolo.
\newblock Sequential pattern mining from trajectory data.
\newblock In Bipin~C. Desai, Josep~Llu{\'{\i}}s Larriba{-}Pey, and Jorge
  Bernardino, editors, {\em 17th International Database Engineering {\&}
  Applications Symposium, {IDEAS} '13, Barcelona, Spain - October 09 - 11,
  2013}, pages 162--167. {ACM}, 2013.

\bibitem{DBLP:journals/isci/MasciariMZ14}
Elio Masciari, Giuseppe~Massimiliano Mazzeo, and Carlo Zaniolo.
\newblock Analysing microarray expression data through effective clustering.
\newblock {\em Inf. Sci.}, 262:32--45, 2014.

\bibitem{DBLP:conf/ismis/MasciariMPS20}
Elio Masciari, Vincenzo Moscato, Antonio Picariello, and Giancarlo
  Sperl{\`{\i}}.
\newblock A deep learning approach to fake news detection.
\newblock In Denis Helic, Gerhard Leitner, Martin Stettinger, Alexander
  Felfernig, and Zbigniew~W. Ras, editors, {\em Foundations of Intelligent
  Systems - 25th International Symposium, {ISMIS} 2020, Graz, Austria,
  September 23-25, 2020, Proceedings}, volume 12117 of {\em Lecture Notes in
  Computer Science}, pages 113--122. Springer, 2020.

\bibitem{DBLP:conf/ideas/MasciariMPS20}
Elio Masciari, Vincenzo Moscato, Antonio Picariello, and Giancarlo
  Sperl{\`{\i}}.
\newblock Detecting fake news by image analysis.
\newblock In Bipin~C. Desai and Wan{-}Sup Cho, editors, {\em {IDEAS} 2020: 24th
  International Database Engineering {\&} Applications Symposium, Seoul,
  Republic of Korea, August 12-14, 2020}, pages 27:1--27:5. {ACM}, 2020.

\bibitem{DBLP:conf/sigmod/MouratidisL021}
Kyriakos Mouratidis, Keming Li, and Bo~Tang.
\newblock Marrying top-k with skyline queries: Relaxing the preference input
  while producing output of controllable size.
\newblock In Guoliang Li, Zhanhuai Li, Stratos Idreos, and Divesh Srivastava,
  editors, {\em {SIGMOD} '21: International Conference on Management of Data,
  Virtual Event, China, June 20-25, 2021}, pages 1317--1330. {ACM}, 2021.

\bibitem{RicciRS11}
Francesco Ricci, Lior Rokach, and Bracha Shapira.
\newblock Introduction to recommender systems handbook.
\newblock In Francesco Ricci, Lior Rokach, Bracha Shapira, and Paul~B. Kantor,
  editors, {\em Recommender Systems Handbook}, pages 1--35. Springer, 2011.

\bibitem{DBLP:journals/tods/StefanidisKP11}
Kostas Stefanidis, Georgia Koutrika, and Evaggelia Pitoura.
\newblock A survey on representation, composition and application of
  preferences in database systems.
\newblock {\em {ACM} Trans. Database Syst.}, 36(3):19:1--19:45, 2011.

\bibitem{DBLP:journals/is/TheoharisGC12}
Yannis Theoharis, George Georgakopoulos, and Vassilis Christophides.
\newblock Powergen: {A} power-law based generator of {RDFS} schemas.
\newblock {\em Inf. Syst.}, 37(4):306--319, 2012.

\bibitem{DBLP:journals/tkde/TheoharisTKC08}
Yannis Theoharis, Yannis Tzitzikas, Dimitris Kotzinos, and Vassilis
  Christophides.
\newblock On graph features of semantic web schemas.
\newblock {\em {IEEE} Trans. Knowl. Data Eng.}, 20(5):692--702, 2008.

\bibitem{DBLP:conf/ijcai/XiaoCKLPRZ18}
Guohui Xiao, Diego Calvanese, Roman Kontchakov, Domenico Lembo, Antonella
  Poggi, Riccardo Rosati, and Michael Zakharyaschev.
\newblock Ontology-based data access: {A} survey.
\newblock In J{\'{e}}r{\^{o}}me Lang, editor, {\em Proceedings of the
  Twenty-Seventh International Joint Conference on Artificial Intelligence,
  {IJCAI} 2018, July 13-19, 2018, Stockholm, Sweden}, pages 5511--5519.
  ijcai.org, 2018.

\end{thebibliography}

\appendix
\clearpage
\section{Proofs}

\thmBestXS*
\begin{proof}
By contradiction, assume $\exists\ r$ and a t-tuple $t_1 \in r$ with $t_1 \in \Best_{\pref_{\Xop\Sop}}(r)$ and $t_1 \not\in \Best_{\pref_{\Xop}}(r)$. 
This implies that $r$ includes a t-tuple $t_2$ such that $t_2 \pref_{\Xop} t_1$, yet $t_2 \not\pref_{\Xop\Sop} t_1$. Since $\wpref_{\Xop\Sop} \subseteq\wpref_{\Xop}$, we have that the preference $(t_2,t_1)$ has been removed from $\wpref_{\Xop}$ by the \Sop\ operator. But this in turn implies that $t_1 \wpref_{\Xop} t_2$ (otherwise there would have been no conflict), thus $t_1 \equp_{\Xop} t_2$, which contradicts the hypothesis that $t_2 \pref_{\Xop} t_1$.
\end{proof}

\thmBasicProperties*
\begin{proof}
\emph{Idempotence}. Idempotence of transitive closure ($\Top$) is well-known.
As for $\Sop$,
by its construction and thanks to Lemma~\ref{lem:implication},
after the rewriting caused by $\Sop$, none of the resulting clauses is more specific than any other clause; therefore, another application of $\Sop$ would be idle.

\emph{Inflation/deflation} and \emph{monotonicity}. Directly from the definitions of the operators.

\emph{Maximality} of $\Top$.
We prove it by induction on the length of the sequence.
\emph{Base case}.
For the only sequence of length 0 ($\Emptyseq$), we have $\Emptyseq \sqsubseteq \Top$ by  inflation on $\Emptyseq$.
\emph{Inductive step}. By inductive hypothesis, assume that $\Yop \sqsubseteq \Top$, where $|\Yop|=n$. We show that containment also holds for all sequences of length $n+1$, i.e., $\Yop\Top$ and $\Yop\Sop$.
By monotonicity of $\Top$, we have $\Yop\Top \sqsubseteq \Top\Top$ and, by idempotence, we obtain $\Yop\Top \sqsubseteq \Top$.
Analogously, $\Yop\Sop \sqsubseteq \Top$ thanks to the inductive hypothesis and deflation.
\end{proof}

\thmXTcontainsXTST*
\begin{proof}
By deflation, we have $\Xop\Top\Sop \sqsubseteq \Xop\Top$.
By monotonicity, $\Xop\Top\Sop\Top \sqsubseteq \Xop\Top\Top$.
The thesis follows from idempotence of $\Top$.
\end{proof}

\thmXTSTstrictContainsXTstrict*
\begin{proof}
Let $(a,b)$ be a preference that holds in $\pref_{\Xop\Top}$.
This means that $(a,b)$ holds in $\wpref_{\Xop\Top}$ while $(b,a)$ does not hold in $\wpref_{\Xop\Top}$.
Since there is no preference opposite to (and more specific than) $(a,b)$ in $\wpref_{\Xop\Top}$, the application of $\Sop$ cannot remove $(a,b)$, which then also holds in $\wpref_{\Xop\Top\Sop}$, while $(b,a)$ continues not to hold in $\wpref_{\Xop\Top\Sop}$, since $\Sop$ cannot add preferences.
So, $(a,b)$ also holds in $\wpref_{\Xop\Top\Sop\Top}$, since $\Top$ does not remove any preferences.
Moreover, $(b,a)$ cannot hold in $\wpref_{\Xop\Top\Sop\Top}$, since it does not hold in $\wpref_{\Xop\Top}$ and $\Xop\Top\Sop\Top\sqsubseteq\Xop\Top$ by Lemma~\ref{thm:XT-contains-XTST}.
Therefore, $(a,b)$ holds in $\pref_{\Xop\Top\Sop\Top}$.
\end{proof}

\thmTSrepeated*
\begin{proof}
It suffices to show that $\Top\Sop\equiv\Top\Sop\Top\Sop$, since the relationship must hold for any initial formula $F$.
If, for any $F$, $F^\Top=F^{\Top\Sop}$ or $F^\Top=F^{\Top\Sop\Top}$ or $F^{\Top\Sop}=F^{\Top\Sop\Top}$ then $\Top\Sop\equiv\Top\Sop\Top\Sop$ by idempotence.
So assume $F^\Top\neq F^{\Top\Sop}$, $F^\Top\neq F^{\Top\Sop\Top}$ and $F^{\Top\Sop}\neq F^{\Top\Sop\Top}$; so $\wpref_{\Top\Sop}$ is not transitive.
So there must be three t-tuples $a$, $b$, $c$ that violate transitivity, i.e.,
$a\wpref_{\Top\Sop} b$ and $b \wpref_{\Top\Sop} c$, but $a\not\wpref_{\Top\Sop} c$.
Since $\wpref_{\Top}$ is transitive and $\Sop$ cannot add any preference, $\wpref_{\Top}$ must also contain the preferences $(a,b)$, $(b,c)$ and $(a,c)$.
The removal of $(a,c)$ means that in $\wpref_{\Top}$ it was in conflict with $(c,a)$, and less specific. So $\wpref_{\Top}$ must also contain $(c,a)$ and, by transitivity, $(c,b)$ and $(b,a)$.
Clearly, $(a,c)$ must be in $\wpref_{\Top\Sop\Top}$, since $\wpref_{\Top\Sop}$ contains $(a,b)$ and $(b,c)$.
So, everything that was deleted from $\wpref_{\Top}$ by violating transitivity of $\wpref_{\Top\Sop}$ is restored in $\wpref_{\Top\Sop\Top}$.

We prove the theorem in two parts.
Part 1: what was added back to $\wpref_{\Top\Sop\Top}$ in order to restore transitivity (i.e., $(a,c)$) is removed from $\wpref_{\Top\Sop\Top\Sop}$.
Part 2: any pair $(x,y)$ that is removed from $\wpref_{\Top}$ in $\wpref_{\Top\Sop}$ and is not the outcome of transitivity is not added back to $\wpref_{\Top\Sop\Top}$.
So $\wpref_{\Top\Sop}$ and $\wpref_{\Top\Sop\Top\Sop}$ coincide.

Let us write $P^R$ to denote the same as statement $P$, but with swapped arguments; let us write $P\subset P'$ to indicate that $P$ is subsumed by $P'$ and, consequently, $P^R\subset P'$ to indicate that $P$ is more specific with respect to $P'$. Let us also call the \emph{left set} of $P$, denoted $LS(P)$, the set of tuples $\{x \mid P(x,y)\}$ and the \emph{right set} of $P$, denoted $RS(P)$, the set $\{y \mid P(x,y)\}$.
Let $F$ be the initial formula corresponding to $\wpref$.

\textbf{Part 1}: $(a,c)$ is in $\wpref_{\Top}$, not in $\wpref_{\Top\Sop}$, and in $\wpref_{\Top\Sop\Top}$. We show that it is not in $\wpref_{\Top\Sop\Top\Sop}$.
Let $(a,c)$ be generated by statement $P_{ac}$, which is transitively obtained via two statements $P_{ab}$ and $P_{bc}$.
Let $(c,a)$ be generated by at least one statement more specific than $P_{ac}$ -- call it $P_{ca}$, so $P_{ca}^R\subset P_{ac}$.

In $F^{\Top\Sop}$, via specificity, $P_{ac}$ is replaced by a statement $P'_{ac}$ such that $(a,c)$ does not hold, and thus $P'_{ac}\subset P_{ac}$.
In $F^{\Top\Sop}$, we also have $P_{ca}$, since, if $P_{ca}$ was replaced by another statement $P'_{ca}$, then in $F^{\Top}$ we should have a statement $P''_{ac}\subset P_{ca}$; but then, since $P^R_{ca}\subset P_{ac}$, we would have $P''_{ac}\subset P_{ac}$, so $P''_{ac}$ would be subsumed and, thus, removed.
If in $F^{\Top\Sop}$ we have $P_{ab}$ and $P_{bc}$, then in $F^{\Top\Sop\Top}$ we have $P_{ac}$ again, so specificity between $P_{ca}$ and $P_{ac}$ will remove $(a,c)$ again from $\wpref_{\Top\Sop\Top\Sop}$.
If not, then at least one of $P_{ab}$ and $P_{bc}$ was replaced by a more specific statement in $F^{\Top\Sop}$, call them $P'_{ab}$ and $P'_{bc}$.
If $P'_{ab}$ does not coincide with $P_{ab}$, that means that in $F^{\Top}$ there is at least a statement $P''^R_{ba}\subset P_{ab}$ and $P'_{ab}$ is obtained via specificity of $P''_{ba}$ wrt $P_{ab}$; however, $(b,a)$ cannot hold in $P''_{ba}$, or else $(a,b)$ would not hold in $P'_{ab}$, against our hypotheses that $(a,b)$ holds in $\wpref_{\Top\Sop}$. Similarly for $P'_{bc}$, with $P''^R_{cb}\subset P_{bc}$.

In $F^{\Top\Sop\Top}$, there must be a statement $P^*_{ac}$ transitively obtained through $P'_{ab}$ and $P'_{bc}$, so that $(a,c)$ holds in $\wpref_{\Top\Sop\Top}$.
Statement $P_{ca}$ is preserved, since it is not replaced in $F^{\Top\Sop}$, as shown, and it cannot be subsumed in  $F^{\Top\Sop\Top}$, or it would also have been subsumed in $F^{\Top}$.
The only way in which $P_{ca}$ could not be more specific than $P^*_{ac}$ is to have $P''_{ba}$ or $P''_{cb}$ in $F^{\Top}$.
If $P''_{ba}$ removed (by specificity) from the right set of $P_{ab}$ all its intersection with the left set of $P'_{bc}$ then $(a,c)$ would not hold, against hypotheses. So, it cannot remove all of it. Similarly, it cannot remove all the left set of $P_{ab}$, or $(a,c)$ would not hold, against hypotheses. But then, $P'_{ab}$ is formed by two sets of pairs:
\[
\begin{array}{ll}
	LS(P_{ab}) \wpref RS'\\
	LS' \wpref RS(P_{ab}),
\end{array}
\]
where $LS'$ is the same as $LS(P_{ab})$ without the part removed by $P''_{ba}$ (and similarly for $RS'$).
Similarly, $P'_{bc}$ is formed by:
\[
\begin{array}{ll}
	LS(P_{bc}) \wpref RS''\\
	LS'' \wpref RS(P_{bc}),
\end{array}
\]
where $LS''$ is the same as $LS(P_{bc})$ without the part removed by $P''_{bc}$ (and similarly for $RS''$).
By combining the first part of $P'_{ab}$ with the second of $P'_{bc}$ (which can necessarily be combined, or else $(a,c)$ would not be obtained), we get
\[
\begin{array}{ll}
	LS(P_{ab}) \wpref RS(P_{bc}),
\end{array}
\]
which is $P_{ac}$, so $P^*_{ac}$ is indeed $P_{ac}$ and $P^R_{ca}\subset P_{ac}$. So, again, by specificity, $P_{ac}$ is going to be replaced by $P'_{ac}$, in which $(a,c)$ does not hold.

\textbf{Part 2:}
Let $(x,y)$ be a pair in $\wpref_{\Top}$ but not in $\wpref_{\Top\Sop}$ and that is not the outcome of a transitivity triple.
Then $(y,x)$ is also in $\wpref_{\Top}$, derived through a statement $P_{yx}$ such that $P^R_{yx}\subset P_{xy}$, both in $F^{\Top}$ and $F^{\Top\Sop}$.
By our assumptions, there are no two statements $P_1$ and $P_2$ in $F^\Top$ that can be transitively combined such that $(x,y)$ holds in their combination $P_3$.
Then these statements cannot exist in $F^{\Top\Sop}$ either, since $\Sop$ cannot add new preferences.
Then these statements cannot exist in $F^{\Top\Sop\Top}$ either, since $(x,y)$ would be obtained by transitivity already in $\wpref_{\Top}$, against hypotheses.
So they cannot exist in $F^{\Top\Sop\Top\Sop}$ either, so $(x,y)$ is not in $\wpref_{\Top\Sop\Top\Sop}$.
\end{proof}

\thmFiniteRepresentatives*
\begin{proof}
The theorem follows immediately from Lemma~\ref{thm:TS-repeated}, which allows the elimination of repeated $\Top\Sop$ sub-sequences, and from idempotence of $\Top$ and $\Sop$, which allows removing consecutive repetitions of the same operator from a sequence.
\end{proof}

\thmTnonminimal*
\begin{proof}
We already proved the claim for $\Sop\Top$ in Example~\ref{ex:ST-non-minimal}.
By Theorem~\ref{thm:finite-representatives}, we only need to prove it for $\Top$, $\Top\Sop\Top$, and $\Sop\Top\Sop\Top$.

For $\Top$, it suffices to consider any non-transitive formula: since it becomes transitive after applying $\Top$, it must contain extra preferences with respect to $F$.

For $\Top\Sop\Top$, consider a formula $F$ consisting of $P_1=\jun \wpref \may$ and $P_2=\lnot\jun \wpref \lnot\may$. Then, $F^{\Top}$ also includes $P_3=\jun \wpref \lnot\may$ and $P_4=\lnot\jun \wpref \may$. In $F^{\Top\Sop}$, $P_2$ (less specific than $P_1$) is replaced by $P_5=\{\lnot\jun \wpref \lnot\jun \land \lnot\may, \lnot\jun \land \lnot\may \wpref \lnot\may\}$, so that $\may\wpref\jun$ does not hold. Finally, in $F^{\Top\Sop\Top}$, $P_2$ is restored by transitively combining $P_5$ with itself, so that $\may\wpref\jun$ holds and $\Top\Sop\Top$ is not minimal.
Since, in this case, $F^{\Sop\Top}=F^{\Top}$, then $\Sop\Top\Sop\Top$ is also not minimal.
\end{proof}

\thmSnontransitive*
\begin{proof}
By Theorem~\ref{thm:finite-representatives}, we only need to show the claim for $\Sop$, $\Top\Sop$, and $\Sop\Top\Sop$.

For $\Sop$, any formula producing a non-transitive preference relation with no conflicting preferences suffices to prove the claim.

For $\Top\Sop$, consider a formula $F$ consisting of
$P_1 = \apr \wpref \may$,
$P_2 = \juntwentyfour \wpref \lnot\aprten \land \lnot\jun$, and
$P_3 = \lnot\apr \land \lnot\jun \wpref \juntwentyfour$.
Then, $F^{\Top}$ consists of $P_2$, $P_3$ and the following 4 preference statements:
\!\!\!\!\[
\begin{array}{rrcllrrcll}
P_4= &\apr& \!\!\wpref\!\!& \juntwentyfour& \mbox{ ($P_1+P_3$}),\\
P_6= &\lnot\apr\land\lnot\jun& \!\!\wpref\!\!& \lnot\aprten\land\lnot\jun& \mbox{ ($P_3+P_2$}),\\
P_5=&\juntwentyfour& \!\!\wpref\!\!& \juntwentyfour& \mbox{ ($P_2+P_3$}),\\
P_7=&\apr& \!\!\wpref\!\!& \lnot\aprten\land\lnot\jun& \mbox{ ($P_1+P_6$}),
\end{array}
\]
while $P_1$ is removed, as it is subsumed by $P_7$.
In $F^{\Top\Sop}$, $P_2$ (less specific than $P_3$) is replaced by $P_8=\juntwentyfour\wpref\apr\land\lnot\aprten$ and $P_4$ (less specific than $P_8$) is replaced by $P_9=\aprten\wpref \juntwentyfour$.
Now, in $F^{\Top\Sop}$, the preference, say, $\mayseven \pref_{\Top\Sop} \juntwentyfour$ holds strictly, since $\mayseven \wpref_{\Top\Sop} \juntwentyfour$ holds in $P_3$ while $\juntwentyfour \wpref_{\Top\Sop} \mayseven$ does not hold.
Similarly, $\juntwentyfour \pref_{\Top\Sop} \aprfifteen$ holds, since $\juntwentyfour \wpref_{\Top\Sop} \aprfifteen$ holds in $P_8$ while $\aprfifteen \wpref_{\Top\Sop} \juntwentyfour$ does not hold. However, $\mayseven \pref_{\Top\Sop} \aprfifteen$ does not hold strictly, since both $\mayseven \wpref_{\Top\Sop} \aprfifteen$ (via $P_6$) and $\aprfifteen \pref_{\Top\Sop} \mayseven$ (via $P_7$) hold. Thus, $\Top\Sop$ is not transitive.

For $\Sop\Top\Sop$, consider a formula $F$ consisting of
$P_1 = \aprten \wpref \jun10$,
$P_2 = \lnot\juntwentyfour \land \lnot\aprten \wpref \jun$, and
$P_3 = \jun \wpref \apr$,
Then $F^{\Sop\Top\Sop}$ consists of:
$P_4=\lnot \apr \land \lnot \juntwentyfour \wpref \jun$,
$P_5= \aprten \wpref \jun$,
$P_6= \lnot \aprten \land \lnot \juntwentyfour \wpref \apr$,
$P_7= \jun \wpref \jun$,
$P_8= \aprten \wpref \apr$,
$P_9= \jun \wpref \apr \wpref \lnot \aprten$.
With this, 
both
$\aprten \pref_{\Sop\Top\Sop} \juntwentyfour$
abd $\juntwentyfour \pref_{\Sop\Top\Sop} \aprfifteen$
hold
but 
$\aprten \pref_{\Sop\Top\Sop} \aprfifteen$
does not hold, so $\Sop\Top\Sop$ is not transitive.
\end{proof}

\thmincomparable*
\begin{proof}
Example~\ref{ex:TS<>STS} showed incomparability of $\Sop\Top\Sop$ and $\Top\Sop$, i.e.,
$\Sop\Top\Sop\not\sqsubseteq\Top\Sop$,
$\Top\Sop\not\sqsubseteq\Sop\Top\Sop$.
By exploiting the containments of Figure~\ref{fig:contenimentiSequenze}, we immediately obtain the following results:
$\Sop\Top\Sop\Top\not\sqsubseteq\Top\Sop$,
$\Sop\Top\not\sqsubseteq\Top\Sop$,
$\Top\Sop\Top\not\sqsubseteq\Sop\Top\Sop$.
The same formula used in Example~\ref{ex:TS<>STS} also works to show
$\Top\Sop\not\sqsubseteq\Sop\Top$,
hence
$\Top\Sop\Top\not\sqsubseteq\Sop\Top$.

Consider now a formula $F$ consisting of
$P_1 = \summer \wpref \may$,
$P_2 = \spring \wpref \may$, and
$P_3 = \auno \wpref \spring$.
Then $F^{\Sop\Top\Sop}$ consists of $P_1$, $P_2$, $P_3$,
$P_4 = \spring \wpref \lnot \may \land \lnot \auno$, and
$P_5 = \summer \wpref \lnot \may \land \lnot \auno$.
Instead, $F^{\Top\Sop\Top}$ consists of 
$P_6 = \spring \wpref \lnot \may \land \summer$,
$P_7 = \summer \wpref \lnot \may$,
$P_8 = \spring \wpref \spring$, and
$P_9 = \summer \wpref \spring$.
Therefore,
$\aprten \wpref_{\Sop\Top\Sop} \summer \land \lnot \auno$, while
$\aprten \not\wpref_{\Top\Sop\Top} \summer \land \lnot \auno$, and hence
$\Sop\Top\Sop\not\sqsubseteq\Top\Sop\Top$.
From this, we can immediately conclude that
$\Sop\Top\Sop\Top\not\sqsubseteq\Top\Sop\Top$,
$\Sop\Top\not\sqsubseteq\Top\Sop\Top$,
$\Sop\Top\Sop\Top\not\sqsubseteq\Top\Sop$,
$\Sop\Top\not\sqsubseteq\Top\Sop$,
which completely proves the claim.
\end{proof}

\thmMCT*
\begin{proof}
	By Theorem~\ref{thm:finite-representatives} and Lemma~\ref{thm:S-non-transitive}, the only complete and transitive sequences are $\Sop\Top$, $\Top\Sop\Top$, and $\Sop\Top\Sop\Top$. Among these, $\Sop\Top$ is not minimal, since $\Sop\Top\Sop\Top\sqsubseteq\Sop\Top$, while $\Top\Sop\Top$ and $\Sop\Top\Sop\Top$ are incomparable by Theorem~\ref{thm:incomparable}, hence the claim.
\end{proof}

\thmdiffBest*
\begin{proof}
The example in the proof of Theorem \ref{thm:incomparable} can be used to define a t-relation $r$ with $n$ t-tuples for which $\Best_{\Top\Sop\Top}(r) = r$ whereas $\Best_{\Sop\Top\Sop\Top}(r)$ consists of a single t-tuple with value \val{apr10}.

For the opposite case, the scenario in Example~\ref{ex:TS<>STS} can be used to define a t-relation $r$ with $n$ t-tuples for which $\Best_{\Sop\Top\Sop\Top}(r) = r$ whereas $\Best_{\Top\Sop\Top}(r)$ consists of a single t-tuple with value \val{jul21}.
\end{proof}

\end{document}